\newcommand{\Se}{\mathcal{S}}
\newtheorem{thm}{Theorem}[section]
\newtheorem {asp}{Assumption}[section]
\newtheorem{lm}{Lemma}[section]
\newtheorem{rmk}{Remark}[section]
\newtheorem{prop}{Proposition}[section]
\theoremstyle{definition}
\theoremstyle{remark}
\newtheorem{rem}{Remark}
\newtheorem{example}{Example}[section]
\numberwithin{equation}{section}
\DeclareMathOperator{\Conv}{Conv}
\newcommand{\eps}{\varepsilon}
\newcommand{\M}{\mathcal{M}}
\newcommand{\E}{\mathbb{E}}
\newcommand{\BX}{\mathbf{X}}
\newcommand{\bx}{\mathbf{x}}
\newcommand{\BY}{\mathbf{Y}}
\newcommand{\by}{\mathbf{y}}
\newcommand{\bu}{\mathbf{u}}
\newcommand{\bp}{\mathbf{p}}
\newcommand{\N}{\mathbb{N}}
\newcommand{\PP}{\mathbb{P}}
\newcommand{\R}{\mathbb{R}}
\newcommand{\Z}{\mathbb{Z}}
\newcommand{\U}{\mathcal{U}}
\numberwithin{equation}{section}
\newcommand{\bed}{\begin{displaymath}}
\newcommand{\eed}{\end{displaymath}}
\newcommand{\bea}{\bed\begin{array}{rl}}
\newcommand{\eea}{\end{array}\eed}
\newcommand{\barray}{\begin{array}{ll}}
\newcommand{\earray}{\end{array}}
\newcommand{\1}{\boldsymbol{1}}
\def\bar{\overline}
\def\hat{\widehat}
\def\a.s{\text{\;a.s.\;}}
\begin{document}

\title[Coexistence, extinction, and optimal harvesting]{Coexistence, extinction, and optimal harvesting in discrete-time stochastic population models}

\author[A. Hening]{Alexandru Hening }
\address{Department of Mathematics\\
Tufts University\\
Bromfield-Pearson Hall\\
503 Boston Avenue\\
Medford, MA 02155\\
United States
}
\email{alexandru.hening@tufts.edu}
\thanks{The author is being supported by the NSF through the grant DMS-1853463}

\keywords {Harvesting; Ricker model; random environmental fluctuations; ecosystems; conservation; optimal harvesting strategies; threshold harvesting}
\subjclass[2010]{92D25, 39A10, 39A50, 39A60}
\maketitle

\begin{abstract}
We analyze the long term behavior of interacting populations which can be controlled through harvesting. The dynamics is assumed to be discrete in time and stochastic due to the effect of environmental fluctuations. We present powerful extinction and coexistence criteria when there are one or two interacting species. We then use these tools in order to see when harvesting leads to extinction or persistence of species, as well as what the optimal harvesting strategies, which maximize the expected long term yield, look like. For single species systems, we show under certain conditions that the optimal harvesting strategy is of bang-bang type: there is a threshold under which there is no harvesting, while everything above this threshold gets harvested. We are also able to show that stochastic environmental fluctuations will, in most cases, force the expected harvesting yield to be lower than the deterministic maximal sustainable yield.

The second part of the paper is concerned with the analysis of ecosystems that have two interacting species which can be harvested. In particular, we carefully study predator-prey and competitive Ricker models. We are able to analytically identify the regions in parameter space where the species coexist, one species persists and the other one goes extinct, as well as when there is bistability. We look at how one can find the optimal proportional harvesting strategy. If the system is of predator-prey type the optimal proportional harvesting strategy is, depending on the interaction parameters and the price of predators relative to prey, either to harvest the predator to extinction and maximize the asymptotic yield of the prey or to not harvest the prey and to maximize the asymptotic harvesting yield of the predators. If the system is competitive, in certain instances it is optimal to drive one species extinct and to harvest the other one.  In other cases it is best to let the two species coexist and harvest both species while maintaining coexistence.

In the setting of the competitive Ricker model we show that if one competitor is dominant and pushes the other species to extinction, the harvesting of the dominant species can lead to coexistence.
\end{abstract}

\maketitle
\tableofcontents

\section{Introduction}
A fundamental problem in population biology has been to find conditions for when interacting species coexist or go extinct. Since the dynamics of interacting populations is invariably influenced by the random fluctuations of the environment, realistic mathematical models need to take into account the joint effects of biotic interactions and environmental stochasticity. A successful way of analysing the persistence and extinction of interacting species has been to look at Markov processes, in either discrete or continuous time, and describe their asymptotic properties. There has been a recent resurgence in stochastic population dynamics, and significant progress has been made for stochastic differential equations (\cite{SBA11, HN18}), piecewise deterministic Markov processes (\cite{BL16, HS19, HN19}), stochastic difference equations (\cite{BS19}), and general Markov processes (\cite{B18}). The first focus of this paper is to present new results for persistence and extinction in the setting of stochastic difference equations when there are one or two interacting species. These results significantly generalize the work by \cite{CE89, E89} which only treated competition models and had no extinction results as well as the more recent work by \cite{BS19} which only looks at compact state spaces. We are able to give explicit conditions for extinction and persistence in the setting of competitive or predator-prey Ricker equations with random coefficients, adding to the previously known results \cite{E89, VH97, FH02, SBA11}. Our results involve computing the invasion rates (\cite{T78, C82, E84, CE89}) of each species into the random equilibrium of the other species. We show that if both invasion rates are strictly positive then there is coexistence. If instead, one invasion rate is positive and one is negative, the species with the positive invasion rate persists while the one with the negative invasion rate goes extinct. If there is coexistence we prove that, under natural conditions, the populations converge to a unique invariant probability measure. If there is extinction we show that, with probability one, one or both species go extinct exponentially fast. The general theory for the setting with $n>2$ interacting species will appear in future work by the author and his collaborators (\cite{CHN19}).

Once criteria for persistence and extinction are established, our focus shifts towards a key problem from conservation ecology: what is the optimal strategy for harvesting species? This is a delicate issue as overharvesting can sometimes lead to extinction while underharvesting can mean the loss of precious economic resources. In continuous time models, recent studies have been able to find the optimal harvesting strategy, which maximizes either the discounted total yield or the asymptotic yield under very general assumptions if the ecosystem has only one species (\cite{Alvarez98, HNUW18, AH19}). For multiple species the theory is less developed. Nevertheless, partial results exist (\cite{LO97, ALO16, Ky15, Ky16, Ky18, HT19}).

Quite often harvesting models are intrinsically discrete in time. For example, if one looks at the management of fisheries, most models (\cite{GH89, HW92, C10, HL19}) assume that the population in a given year can be described by a single continuous variable and that without harvesting the population levels in successive years are related by
\[
x_{n+1}=F(x_n)
\]
where $F$ is the so-called recruitment function or the reproduction function. Most discrete time harvesting results ignore random environmental fluctuations and their effects on the availability of food, competition rates, growth and death rates, strength of predation and other key factors. Ignoring environmental stochasticity can create significant problems, in some cases making the models unrealistic (\cite{May78}) and hard to fit to data (\cite{L73}).
A series of key studies where environmental fluctuations are included was done by \cite{R78, R79}. Reed looked at the setting where there is one species whose dynamics in the absence of harvesting is given by
\[
X_{n+1}=Z_n F(X_n)
\]
where $(Z_n)_{n\in\Z_+}$ is a sequence of i.i.d. random variables. We extend Reed's analysis in two ways. First, we study the more general stochastic difference equation
\[
X_{n+1}=F(X_n, \xi_{n+1})
\]
where $(\xi_n)_{n\in\Z_+}$ is an i.i.d sequence. Second, we are able to analyze systems of two interacting species. To our knowledge these are the first results in discrete time that study the harvesting of multiple species.
\subsection*{Single species ecosystem.}
We are able to give exact conditions under which harvesting leads to persistence or to extinction. In particular, we show that if there is only one species present, then the criteria for persistence only involve the harvesting rate of the population at $0$. We are able to find the maximal harvesting rate which does not lead to extinction. If the species $Y_t$ undergoing harvesting persists we prove it converges in law to a random variable $Y_\infty$ and, if the fraction of the population that gets harvested is given by the strategy $h(y)$, we show that the long run average and the expected long term harvest both converge to $\E h(Y_\infty)$. In many applications one is interested in seeing how the environmental fluctuations change the long term yield. We show that in most settings the environmental fluctuations are detrimental and lower the harvesting yield. Only in special cases, can we have that the maximum deterministic sustainable yield is equal to the steady state harvest yield of the stochastic system.

An interesting corollary of our results is that threshold harvesting strategies (also called constant-escapement strategies), where one does not harvest anything below a threshold and harvests everything above that threshold, do not influence the persistence of species as long as the threshold at which one starts harvesting is strictly positive. We showcase two examples where environmental fluctuations are not detrimental for threshold harvesting: 1) The threshold $w$ at which we harvest is self-sustaining, i.e., if at the start of the year we are at level $w$ the fluctuations of the environment can not push the population's size under $w$. In this setting the expected value of the long term yield in the stochastic model equals the yield from the equivalent deterministic model. The downside is that the variance of the yield is higher due to the environmental fluctuations. 2) The threshold $w$ is not self sustaining and the maximum yield of the dynamics happens at a self-sustaining threshold $\bar x<w$. In this case the expected yield of the constant escapement strategy is strictly greater than the yield of the same strategy in the deterministic system.

When looking at constant effort harvesting strategies, where a constant proportion of the return is captured every year, we show that even though the deterministic model might say that we harvest at a sustainable rate, the environmental fluctuations might lead to extinction.

We are able to say more in the setting of the Ricker model. We give conditions under which we can get the same maximal yield in the deterministic and stochastic settings. This includes giving information about the threshold for which the yield is maximized. We also find the optimal harvesting strategy if we restrict ourselves to proportional harvest strategies.
\subsection*{Two interacting species.}
We analyze a system of two interacting species that can be exploited through harvesting. We show that threshold harvesting strategies do influence the persistence criteria, unlike in the single species setting. In order to be able to compute things explicitly we focus on Ricker, also called discrete time Lotka-Volterra, models and assume that the harvesting strategy is of proportional type, where we harvest a fraction $q\in [0,1]$ of the first species and a fraction $r\in [0,1]$ of the second species.

The first studied model is a predator-prey system where species $1$ is the prey and species $2$ the predator. We give analytical expressions for when one has the persistence of both species, the persistence of the prey and the extinction of the predator as well as the extinction of both species. These expressions tell us exactly for which rates $q,r$ we get one of the three scenarios above. Which strategy, among all proportional harvesting strategies, maximizes the expected long term harvesting yield? We find by using both analytical results and numerical simulations that it is never optimal to harvest both the predator and the prey. Either we drive the predator extinct and we harvest the prey or we do not harvest the prey at all and we harvest the predator.

The second model we look at consists of an ecosystem where the two species compete with each other for resources. We  show that, depending on the inter and intracompetition coefficients of the system, one can have two different regimes each having three regions which depend on the harvesting rates $q, r$:
\begin{itemize}
  \item [a)] I) Persistence of species $1$ and extinction of species $2$; II) Extinction of species $1$ and persistence of species $2$; III) Coexistence
  \item [b)] I) Persistence of species $1$ and extinction of species $2$; II) Extinction of species $1$ and persistence of species $2$; III) Bistability
\end{itemize}
We show that harvesting can facilitate coexistence in certain cases. When species $1$ is dominant and drives species $2$ extinct in the absence of harvesting, it is possible to harvest species $1$ and ensure the persistence of both species.

Finally, we look at the optimal harvesting strategies for the competitive system. Combining analytical proofs and numerical simulations we see that, in contrast to the predator-prey setting, it can be optimal, depending on the inter and intra competition rates, to harvest one or both of the species.

\section{Stochastic population dynamics}
We start by describing the stochastic population models we will be working with. To include the effects of random environmental fluctuations, ecologists and mathematicians often use stochastic difference equations of the form
\begin{equation}\label{e:SDE}
X^i_{t+1} = X_t^i f_i(\BX_t, \xi_{t+1}).
\end{equation}
Here the vector $\BX_t:=(X_t^1,\dots,X_t^n)\in \Se\subset \R_+^n$ records the abundances of the $n$ populations at time $t\in\Z_+$ and $\xi_{t+1}$ is a random variable that describes the environmental conditions between time $t$ and $t+1$. The subset $\Se$ will denote the state space of the dynamics. It will either be a compact subset of $\R_+^n$ or all of $\R_+^n$.
The coexistence set is the subset $\Se_+=\{\bx\in \Se~|~x_i>0, i=1,\dots n\}$ of the state space where no species is extinct.
The real function $f_i(\BX_t,\xi_{t+1})$ captures the fitness of the $i$-th population at time $t$ and depends both on the population sizes and the environmental state. Models of this type can capture complex short-term life histories and include predation, cannibalism, competition, and seasonal variations.

We have to differentiate between the setting where the dynamics is bounded, and the process enters and remains in a compact set, and the case when the dynamics is unbounded.  We will make the following assumptions throughout the paper:
\begin{itemize}
\item[\textbf{(A1)}] $\xi_1,\dots,\xi_n,\dots$ is a sequence of i.i.d. random variables taking values in a Polish space $E$.
\item[\textbf{(A2)}] For each $i$ the fitness function $f_i(\bx,\xi)$ is continuous in $\bx$ on $\Se$, measurable in $(\bx,\xi)$ and strictly positive.
\end{itemize}

Assumptions (A1) and (A2) ensure that the process $\BX_t$ is a Feller process that lives on $\Se_+$, i.e. $\BX_t\in\Se_+, t\in \Z_+$ whenever $\BX_0\in\Se_+$. One has to make extra assumptions (see (A3) or (A4) in Appendix \ref{s:a1}) in order to ensure the process does not blow up or fluctuate too abruptly between $0$ and $\infty$. We note that most ecological models will satisfy these assumptions. For more details see the work by \cite{BS19, CHN19}.
\begin{rmk}\label{r:1}
Suppose the dynamics is given by the more general model of the type
\begin{equation}\label{e:SDE2}
X^i_{t+1} = F_i(\BX_t, \xi_{t+1}).
\end{equation}
Note that \eqref{e:SDE2} reduces to \eqref{e:SDE} if $F_i$ is $C^1$ and $F_i(\bx)=0$ whenever $x_i=0$.This means that $F_i$ is a nice, sufficiently smooth, vector field whick takes the value $0$ if species $i$ is extinct -- this is a natural assumption as there is no reason the population should be able to come back from extinction. Under these assumptions we can see that \eqref{e:SDE} is satisfied by setting
\begin{equation*}
   f_i(\bx,\xi) = \begin{cases}
              \frac{F_i(\bx,\xi)}{x_i} & \text{if } x_i>0,\\
               \frac{\partial F_i(\bx,\xi)}{\partial x_i} & \text{if } x_i = 0.
          \end{cases}
\end{equation*}
\end{rmk}

We will sometimes compare the stochastic model \eqref{e:SDE} with its averaged deterministic counterpart
\begin{equation}\label{e:det}
\bx_{t+1}^i=x_t^i \bar f_i(\bx_t)
\end{equation}
where $\bar f_i(\bx) := \E f_i(\bx,\xi_1)$. Note that
\[
\E[X^i_{t+1}~|~\BX_t=\bx] = x^i\E f_i(\bx,\xi_1) = x^i  \bar f_i(\bx),
\]
so that \eqref{e:det} is the average of \eqref{e:SDE} in this sense.

For example, if $f(x,\xi)=\xi u(x)$ and $\xi_1$ is a random variable with expectation $\E \xi_1=1$ then
\[
\E[X^i_{t+1}~|~\BX_t=\bx] = x^i u(\bx).
\]
This is the setting used by \cite{R78}.
\subsection{Stochastic persistence.}
We define the extinction set, where at least one species is extinct, by
\[
\Se_0:=\Se\setminus \Se_+=\{\bx\in\Se~:~\min_i x_i=0\}.
\]
The transition operator $P: \mathcal{B}\to \mathcal{B}$ of the process $\BX$ is an operator which acts on Borel functions $\mathcal{B}:=\{h:\Se\to\R~|~h~\text{Borel}\}$ as
\[
Ph(\bx)=\E_\bx[h(\BX(1))]:=\E[h(\BX(1))~|~\BX(0)=\bx],  ~\bx\in\Se.
\]
The operator $P$ acts by duality on Borel probability measures $\mu$ by $\mu\to \mu P$ where $\mu P$ is the probability measure given by
\[
\int_{\Se}h(\bx) (\mu P)(d\bx):=\int_{\Se}P h(\bx) \mu(d\bx)
\]
for all $h\in C(\Se)$.
A Borel probability measure $\mu$ on $\Se$ is called an \textit{invariant probability measure} if
\[
\mu P = \mu
\]
where $P$ is the transition operator of the Markov process $\BX_t$. An invariant probability measure or stationary distribution is a way of describing a `random equilibrium'. If the process starts with $\BX_0$ having an initial distribution given by the invariant probability measure $\mu$ then the distribution of $\BX_t$ is $\mu$ for all $t\in\Z_+$. In a sense this is the random analogue of a fixed point of a deterministic dynamical system. It turns out that a key concept is the \textit{realized per-capita growth rate} of species $i$ when introduced in the community described by an invariant probability measure $\mu$
\begin{equation}\label{e:r}
r_i(\mu) =\int_{\R_+^n}\E[\ln f_i(\bx,\xi_1)]\,\mu(d\bx) = \int r_i(\bx)\mu(d\bx)
\end{equation}
where
\[
r_i(\bx) = \E [\ln f_i(\bx,\xi_1)]
\]
is the mean per-capita growth rate of species $i$ at population state $\bx$. This quantity tells us whether species $i$ tends to increase or decrease when introduced at an infinitesimally small density into the the subcommunity described by $\mu$. If the $i$th species is among the ones supported by the subcommunity given by $\mu$, i.e., $i$ lies in the support of $\mu$, then this species is in a sense `at equilibrium' and one can prove that
\begin{equation}\label{e:equ}
r_i(\mu)=0.
\end{equation}
The only directions $i$ in which $r_i(\mu)$ can be non-zero are those which are not supported by $\mu$.

One can show that the invariant probability measures living on the extinction set $\Se_0$ fully describe the long term behavior of the system. In a sense, if any such invariant probability measure is a \textit{repeller} which pushes the process away from the boundary in at least one direction, then the system persists. Let $\Conv(\M)$ denote the set of all invariant probability measures supported on $\Se_0$. In order to have the convergence of the process to a unique stationary distribution one needs some irreducibility conditions which keep the process from being too degenerate \citep{CHN19, MT, B18}. The following theorem characterizes the coexistence of the ecosystem.
\begin{thm}\label{t:pers1_disc}
Suppose that for all $\mu\in\Conv(\M)$ we have
\begin{equation}\label{e:pers}
\max_{i} r_i(\mu)>0.
\end{equation}
Then the system is almost surely stochastically persistent and stochastically persistent in probability. Under additional irreducibility conditions, there exists a unique invariant probability measure $\pi$ on $\Se_+$ and as $t\to\infty$ the distribution of $\BX_t$ converges in total variation to $\pi$ whenever $\BX(0)=\bx\in\Se_+$. Furthermore, if $w:\Se_+\to\R$ is bounded
then $$\lim_{t\to\infty}\E w(\BX_t) = \int_{\Se_+} w(\bx)\,\pi(d\bx).$$
\end{thm}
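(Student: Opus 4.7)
\medskip

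\noindent\textbf{Proof plan.} My plan is to handle the three assertions of Theorem~\ref{t:pers1_disc} in three stages, following the general template for stochastic persistence developed by \cite{SBA11,BS19,CHN19}. Stage one is the \emph{passage from a pointwise to a uniform} version of the invasion condition; stage two builds a Lyapunov-type drift out of it to obtain persistence; stage three promotes the drift, together with the irreducibility hypotheses, into existence, uniqueness, and convergence of the invariant probability measure on $\Se_+$.

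\emph{Stage 1 (uniform weights).} The set $\Conv(\M)$ of invariant probability measures supported on $\Se_0$ is convex and, using the Feller property together with the boundedness assumption (A3)/(A4), weak-$*$ compact. Each map $\mu\mapsto r_i(\mu)=\int r_i(\bx)\,\mu(d\bx)$ is affine, and continuous enough on $\Conv(\M)$ because $r_i(\bx)=\E[\ln f_i(\bx,\xi_1)]$ is continuous on the boundary faces by (A2). The pointwise hypothesis $\max_i r_i(\mu)>0$ for all $\mu\in\Conv(\M)$ is then upgraded, by a min-max/separation argument in the spirit of Garay--Hofbauer (and as used in \cite{SBA11,BS19}), to the existence of positive weights $p_1,\ldots,p_n>0$ such that
\[
\varrho := \inf_{\mu\in\Conv(\M)} \sum_{i=1}^n p_i\, r_i(\mu) \;>\; 0.
\]

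\emph{Stage 2 (drift near the boundary).} Set $V(\bx)=-\sum_i p_i\ln x_i$ on $\Se_+$. The heart of the argument is to prove the \emph{average Lyapunov estimate}: there exist $T\in\N$, $\eta>0$ and a neighborhood $U$ of $\Se_0$ in $\Se$ such that, for all $\bx\in U\cap \Se_+$,
\[
\E_\bx\!\left[V(\BX_T)\right] - V(\bx) \;\le\; -\tfrac{\varrho}{2}\,T.
\]
I would prove this by contradiction: if it failed there would be a sequence $\bx_k\in\Se_+$ with $\mathrm{dist}(\bx_k,\Se_0)\to 0$ along which $\tfrac{1}{T}\,\E_{\bx_k}[V(\BX_T)-V(\bx_k)]=\tfrac{1}{T}\sum_{t=0}^{T-1}\sum_i p_i\,\E_{\bx_k} r_i(\BX_t)$ stays above $-\varrho/2$ for every $T$. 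Passing $T\to\infty$ along a suitable subsequence, the empirical occupation measures $\tfrac{1}{T}\sum_{t=0}^{T-1}\PP_{\bx_k}(\BX_t\in\cdot)$ are tight by (A3)/(A4); any weak-$*$ limit is $P$-invariant by the Feller property and is supported on $\Se_0$ because $\bx_k\to\Se_0$ and the boundary is absorbing. This limit $\mu\in\Conv(\M)$ would satisfy $\sum_i p_i r_i(\mu)\le-\varrho/2$, contradicting Stage~1. Standard consequences of such a drift (see \cite{BS19}, and the arguments carried out in detail in \cite{CHN19}) then yield both almost sure stochastic persistence and stochastic persistence in probability.

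\emph{Stage 3 (unique invariant measure and total variation convergence).} The drift of Stage~2, combined with the boundedness assumption on the dynamics, makes the collection of occupation measures $\{\tfrac1T\sum_{t=0}^{T-1}\PP_{\bx}(\BX_t\in\cdot)\}_{T}$ tight on $\Se_+$ for any $\bx\in\Se_+$. Krylov--Bogolyubov averaging then produces an invariant probability measure on $\Se_+$. Under the additional irreducibility hypotheses alluded to after \eqref{e:equ}, the chain is $\psi$-irreducible and aperiodic on $\Se_+$, and the sublevel sets $\{V\le c\}$ are petite. The Foster--Lyapunov criterion of Meyn--Tweedie (as applied in \cite{MT,CHN19}) then gives a unique invariant probability measure $\pi$ on $\Se_+$ together with convergence in total variation of $\PP_\bx(\BX_t\in\cdot)$ to $\pi$ for every $\bx\in\Se_+$. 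Bounded-convergence against the bounded test function $w$ yields the final limit $\lim_t \E w(\BX_t)=\int w\,d\pi$.

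\emph{Main obstacle.} The delicate step is Stage~2: one must rigorously link the bulk dynamics, when started close to $\Se_0$, to the invariant measures on $\Se_0$. The issue is that $V$ is singular on $\Se_0$, so the Feller/tightness arguments must be set up to avoid $V$ blowing up along the approximating sequence; this is handled by passing from $V$ to finite-$T$ averages of $\ln f_i$, where continuity on $\Se$ (rather than on $\Se_+$) is available through (A2), and only then taking the weak-$*$ limit of occupation measures. Once this continuity-compactness argument is in place, the rest is a standard application of Markov chain ergodic theory.
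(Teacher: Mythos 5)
Your overall architecture matches the paper's: the min--max upgrade of the pointwise invasion condition to positive weights $p_i$ with $\inf_{\mu\in\Conv(\M)}\sum_i p_i r_i(\mu)>0$ (the paper cites \cite{SBA11} for exactly this), a Lyapunov-type drift near the extinction set built from $\prod_i x_i^{-p_i}$, and then Meyn--Tweedie with petite sets and irreducibility for uniqueness and total variation convergence. Where you diverge is in how the boundary drift is obtained: you argue softly, by contradiction through weak-$*$ limits of occupation measures, while the paper runs a quantitative scheme (following \cite{HN18,B18}): moment bounds from (A3), a drift inequality for $V(\bx)\prod_i x_i^{p_i}$ with $|\bp|_1$ small, the finite-horizon estimate $\sum_{t=0}^{T}\E_\bx(\ln V(\BX_{t+1})-\ln V(\BX_t)-\sum_i p_i\ln f_i(\BX_t,\xi_{t+1}))\le -r^*(T+1)$, and finally $\E_\bx U^\theta(\BX_T)\le U^\theta(\bx)e^{-\theta r^*T/2}+K_\theta$ for $U(\bx)=V(\bx)\prod_i x_i^{-p_i}$. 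The quantitative route buys geometric rates and, crucially, a single function whose sublevel sets are small both near $\Se_0$ and near infinity.

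Two concrete gaps in your version. First, your Stage 2 Lyapunov function $V(\bx)=-\sum_i p_i\ln x_i$ has sublevel sets $\{\prod_i x_i^{p_i}\ge e^{-c}\}$, which are not compact (hence not obviously petite) when $\Se=\R_+^n$; a drift for this $V$ alone controls approach to $\Se_0$ but not excursions to infinity, so the Foster--Lyapunov step of your Stage 3 does not close in the unbounded case. You must couple the log-weights with the norm-like Lyapunov function of (A3), exactly as the paper does via $U=V\prod_i x_i^{-p_i}$. Second, in your contradiction argument you assert that the weak-$*$ limit of the occupation measures is supported on $\Se_0$ "because $\bx_k\to\Se_0$ and the boundary is absorbing"; this does not follow -- a chain started near the boundary can spend most of its time in the interior, and the limit measure may charge $\Se_+$. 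The standard repair is to decompose the limit $\mu=\lambda\mu_0+(1-\lambda)\mu_+$ with $\mu_0\in\Conv(\M)$ and $\mu_+(\Se_+)=1$, invoke the identity \eqref{e:equ} (i.e. $r_i(\mu_+)=0$ for all $i$, which the paper derives from the martingale strong law, display \eqref{e5-B3}) to kill the interior contribution, and then contradict Stage 1 using only $\mu_0$. Without that identity your contradiction does not go through as written.
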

A sketch of the proof of this result appears in Appendix \ref{s:a1}.
\subsection{Classification of two species dynamics}\label{s:2d}
Sometimes one is not only interested in persistence and coexistence, but also in conditions which lead to extinction. Extinction results are more delicate and require a technical analysis. Some extinction results appeared in work by \cite{HN18, BS19}. The sharpest and most complete results for stochastic difference equations and species feedbacks will appear in \cite{CHN19}. We restrict our discussion to a system with two species.
In this setting \eqref{e:SDE} becomes
\begin{equation}\label{e:2d_sys}
\begin{aligned}
X^1_{t+1}&=X^1_t f_1(X^1_t, X^2_t,\xi_{t+1}),\\
X^2_{t+1}&=X^2_t f_2(X^1_t, X^2_t,\xi_{t+1}).
\end{aligned}
\end{equation}
The exact assumptions and technical results are found in Appendix \ref{s:a2d}. We can classify the dynamics as follows. We first look at the Dirac delta measure $\delta_0$ at the origin $(0,0)$
\[
r_i(\delta_0) = \E[\ln f_i(0,\xi_1)], i=1,2.
\]
If $r_i(\delta_0)>0$ then species $i$ survives on its own and converges to a unique invariant probability measure $\mu_i$ supported on $\Se_+^i := \{\bx\in\Se~|~x_i\neq 0, x_j=0, i\neq j\}$. The realized per-capita growth rates can be computed as
  \[
  r_i(\mu_j)=\int_{(0,\infty)}\E[\ln f_i(x,\xi_1)]\mu_j(dx).
  \]
\begin{enumerate}[label=(\roman*)]
  \item Suppose $r_1(\delta_0)>0, r_2(\delta_0)>0$.
  \begin{itemize}
    \item If $r_1(\mu_2)>0$ and $r_2(\mu_1)>0$ we have coexistence and convergence of the distribution of $\BX_t$ to the unique invariant probability measure $\pi$ on $\Se_+$.
    \item If $r_1(\mu_2)>0$ and $r_2(\mu_1)<0$ we have the persistence of $X^1$ and extinction of $X^2$.
    \item If $r_1(\mu_2)<0$ and $r_2(\mu_1)>0$ we have the persistence of $X^2$ and extinction of $X^1$.
    \item If $r_1(\mu_2)<0$ and $r_2(\mu_1)<0$ we have that for any $\BX_0=\bx\in\Se_+$
    \[
    p_{\bx,1}+ p_{\bx,2}=1,
    \]
    where $p_{\bx,j}>0$ is the probability that species $j$ persists and species $i\neq j$ goes extinct.
  \end{itemize}
  \item Suppose $r_1(\delta_0)>0, r_2(\delta_0)<0$. Then species $1$ survives on its own and converges to its unique invariant probability measure $\mu_1$ on $\Se^1_+$.
  \begin{itemize}
    \item If $r_2(\mu_1)>0$ we have the persistence of both species and convergence of the distribution of $\BX_t$ to the unique invariant probability measure $\pi$ on $\Se_+$.
    \item If $r_2(\mu_1)<0$ we have the persistence of $X^1$ and the extinction of $X^2$.
  \end{itemize}
\item Suppose $r_1(\delta_0)<0, r_2(\delta_0)<0$. Then both species go extinct with probability one.
\end{enumerate}
We note that our results are significantly more general than those from \cite{E89}. In \cite{E89} the author only gives conditions for coexistence, and does not treat the possibility of the extinction of one or both species.

\begin{example}\label{ex:1}
The simplest case is when the noise is multiplicative, that is
\begin{equation}\label{e:2d_syss}
\begin{aligned}
X^1_{t+1}&=X^1_t Z_{t+1}^1f_1(X^1_t, X^2_t)\\
X^2_{t+1}&=X^2_t Z_{t+1}^2 f_2(X^1_t, X^2_t),
\end{aligned}
\end{equation}
where $Z^1_{1}, Z^1_{2},\dots$ is an i.i.d sequence of random variables and $Z^2_{1}, Z^2_{2}, \dots$ is an independent sequence of i.i.d random variables.
In this case for $i=1,2$ we have
\begin{equation}\label{e:2d_pers}
\begin{aligned}
r_i(\delta_0) &= \E[\ln (Z_{t+1}^i f_i(0))]\\
&= \E \ln Z_1 + \ln f_i(0).
\end{aligned}
\end{equation}
The growth rates at $0$ in the stochastic model differ from the growth rates at $0$ of the deterministic model only by the term $\E \ln Z_1$.
\end{example}

\subsection{Harvesting}\label{s:harvesting}
We next describe how the harvesting effects are taken into account. We assume that the harvesting takes place during a short harvest season. The size of the population at the beginning of the harvest season in year $t$ will be denoted by $\BY_t$ and will be called \textit{return in year $t$}. If we assume the harvest season is short so that growth and natural mortality can be neglected during the harvesting and that the harvesting strategy is \textit{stationary}, i.e., the size of the harvest in any year depends only on the size of the population return $\BY$ in that year we can write
\begin{equation}\label{e:harv1}
X^i_t = Y_t^i - h_i(\BY_t) = u_i(\BY_t)
\end{equation}
where $X^i_t$ is the escapement of the $i$th population from the harvest and $h_i(\BY_t)$ is the amount of species $i$ that is harvested at time $t$. The function $u_i$ is called the \textit{escapement function} and measures how much is left after harvesting. Note that, since we cannot harvest a negative amount or more than the total population size we will always have
\[
0\leq h_i(\by)\leq y_i.
\]
 Set $\bu(\by):=(u_1(\by),\dots,u_n(\by))$. Once the harvesting is done, the population evolves according to \eqref{e:SDE} so that the size of the return in year $t+1$ is related to the escapement in year $t$ via
\begin{equation}\label{e:harv2}
Y^i_{t+1} = X_t^i f_i(\BX_t, \xi_{t+1}).
\end{equation}
Combining \eqref{e:harv1} and \eqref{e:harv2} we get
\begin{equation}\label{e:harv}
Y^i_{t+1} = u_i(\BY_t)  f_i(\bu(\BY_t), \xi_{t+1}).
\end{equation}
In order to be able to analyze the process $\BY_t$ we have to make sure that it can be written in the \textit{Kolmogorov} form \eqref{e:SDE}. In order to get this we assume that
\begin{asp}\label{a:1}
For all $i=1,\dots,n$ the following properties hold
\begin{itemize}
\item[(a)] The function $u_i$ is strictly positive on $\Se_+$, with
\[
u_i(\by) \leq y_i.
\]
\item [(b)] The function $u_i$ is continuous on $\Se$ and continuously differentiable at $y_i=0$.
\end{itemize}
\end{asp}
\begin{rem}\label{r:2}
Note that Assumption \ref{a:1} implies that $u_i(\by)=0$ if $y_i=0$ and
\[
\frac{\partial u_i}{\partial y_i}(\by)=\lim_{y_i\to 0} \frac{u_i(\by)}{y_i}\leq 1
\]
if $\by\in\Se$ with $y_i=0$.
\end{rem}

\subsection{Persistence with harvesting.} Since overharvesting can lead to extinction, we want to find sufficient conditions which ensure the process $\BY_t$ converges to a unique invariant probability measure on $\Se$.  Note that we need to put \eqref{e:harv} into the form \eqref{e:SDE}. For this, using Remark \ref{r:1}, let
\begin{equation}\label{e:g}
   g_i(\bx,\xi) := \begin{cases}
              \frac{u_i(\bx)}{x_i} f_i(\bu(\bx), \xi)& \text{if } x_i>0,\\
               \left(\frac{\partial u_i}{\partial x_i}(\bx)\right)f_i(\bu(\bx), \xi) & \text{if } x_i = 0.
          \end{cases}
\end{equation}
We can write \eqref{e:harv} as
\begin{equation}\label{e:harv3}
Y^i_{t+1} = Y^i_t  g_i(\BY_t, \xi_{t+1}).
\end{equation}
In order to use Theorem \ref{t:pers1_disc} we have to make sure that conditions A1)-A4) are satisfied, that the process $\BY_t$ is $\phi$-irreducible and that \eqref{e:pers} holds.
If $\mu$ is an invariant probability measure of $\BY_t$ living on the extinction set $\Se_+$ the realized per capita growth rates will be given by
\begin{equation}\label{e:r2}
r_i(\mu) =\int_{ \Se}\E[\ln g_i(\bx,\xi_1)]\,\mu(d\bx).
\end{equation}
Specifically, if we look at the Dirac mass at $0$ we get
\begin{equation}\label{e:nd_pers3}
\begin{aligned}
r_i(\delta_0)&=\int_{\Se_0}\E[\ln g_i(\bx,\xi_1)]\,\delta_0(d\bx)\\
&= \E[\ln g_i(0,\xi_1)] \\
&= \E\left[\ln \left(\frac{\partial u_i}{\partial x_i}(0)f_i(0, \xi)\right)\right].
\end{aligned}
\end{equation}
\begin{example}
If the noise is multiplicative and we are in the setting of Example \ref{ex:1}, i.e.,
\begin{equation}\label{e:nd_sys}
\begin{aligned}
Y^i_{t+1}&=Y^i_t Z_{t+1}^ig_i(\BY_t)
\end{aligned}
\end{equation}
then
\begin{equation}\label{e:nd_pers2}
\begin{aligned}
r_i(\delta_0) &= \E[\ln (Z_{t+1}^i g_i(0))]\\
&= \E \ln Z_1 +\ln \left(\frac{\partial u_i}{\partial x_i}(0)f_i(0)\right)\\
&= \E \ln Z_1 +\ln \left(\frac{\partial u_i}{\partial x_i}(0)\right) + \ln f_i(0).
\end{aligned}
\end{equation}
\textit{\textbf{Biological interpretation:} The noise can either help or inhibit the survival of the $i$th species, according to whether $\E \ln Z_1<0$ or $\E \ln Z_1>0$. Since $\frac{\partial u_i}{\partial x_i}(0)\leq 1$ we always have $\ln \left(\frac{\partial u_i}{\partial x_i}(0)\right)\leq 0$ so that, as expected, harvesting is always detrimental to the survival of each individual species. The minimal escapement rate under which one can still have persistence is such that
\[
\frac{\partial u^{min}_i}{\partial x_i}(0) > e^{- (\E \ln Z_1 + \ln f_i(0))} = f_i(0)  e^{- \E \ln Z_1}.
\]
Since $h_i(\by)=1-u_i(\by)$ this implies that the maximal harvesting rate satisfies
\[
\frac{\partial h^{max}_i}{\partial x_i}(0) < 1- e^{- (\E \ln Z_1 + \ln f_i(0))}=1- f_i(0)  e^{- \E \ln Z_1}.
\]
}
\end{example}

\section{Single species harvesting}\label{s:1s}
This section explores the setting when there is only one species in the ecosystem. The results can be seen as an extension of the results from \cite{R78}. Our results show that environmental fluctuations are usually detrimental to the optimal harvesting yield. Actually, only under very special conditions, is it possible for the stochastic dynamics to have the same maximal expected long term yield as the related deterministic dynamics. Even in that case, the nonzero variance of the stochastic long term yield, makes it more risky than its deterministic analogue.

We can show that in some special cases of constant-escapement strategies, it is possible for the stochastic expected long term yield to be higher than the deterministic yield.

One can see from \eqref{e:harv3} that the dynamics of the return will be given by
\begin{equation}\label{e:harv4}
Y_{t+1} = Y_t  g(Y_t, \xi_{t+1})
\end{equation}
for
\begin{equation*}
   g(x) := \begin{cases}
              \frac{u(x)}{x} f(u(x), \xi)& \text{if } x>0,\\
               \left(\frac{\partial u}{\partial x}(0)\right)f(u(0), \xi) & \text{if } x = 0.
          \end{cases}
\end{equation*}
We will work under the assumption that without harvesting we have
\begin{equation}\label{e:r_single_pers}
\E\left[\ln \left(f(0, \xi_1)\right)\right]>0
\end{equation}
so that the species persists.
Suppose the assumptions of one of the Theorems \ref{t:pers_single1}, \ref{t:pers_single2}, \ref{t:pers_single3} or \ref{t:pers1_disc} hold. Then, in order to have persistence we need
\begin{equation}\label{e:r_single2}
r_1(\delta_0) =\int_{\partial \R_+}\E[\ln g(x,\xi_1)]\,\delta_0(dx) = \E\left[\ln \left(\left(\frac{\partial u}{\partial x}(0)\right)f(0, \xi_1)\right)\right]>0
\end{equation}
where $\delta_0$ is the point mass at $0$ and we made use of \eqref{e:r2} and Assumption \ref{a:1}.
We can express this result as
\begin{equation}\label{e:r_single}
\frac{\partial u}{\partial x}(0)> e^{-  \E\ln f(0, \xi_1)}.
\end{equation}
Let us next compute the expected long term harvest yield. If the assumptions of Theorem \ref{t:pers1_disc} are satisfied we will have
\[
Y_{t}\to Y_\infty
\]
in distribution as $t\to\infty$. Here $Y_\infty$ is a random variable whose distribution equal to the the invariant probability measure $\pi_u$. In many models, and for well behaved functions $h$ one can show by Theorem \ref{t:pers1_disc} that $\E h(Y_\infty)$ exists and is finite. As a result, we have that with probability one
\[
\lim_{T\to \infty} \frac{\sum_{t=0}^Th(Y_t)}{T} =  \E h(Y_\infty).
\]
This tells us that the \textit{long-run average harvest yield} converges to a steady yield $\E h(Y_\infty)$. Furthermore, we can also see that \textit{expected yield} also converges to the same quantity
\[
\lim_{T\to \infty} \E h(Y_T) = \E h(Y_\infty) = \int_{(0,\infty)}h(x)\,\pi_u(dx).
\]
From now on we will call $\E h(Y_\infty)$ the \textit{expected steady-state yield}.
In general it is not possible to find $\E h(Y_\infty)$. However, in certain instances we can exploit the fact that, at stationarity, the realized per-capita growth rates in the directions supported by the measure $\pi_u$ are all zero \citep{CHN19}. In other words
\begin{equation}\label{e:r0}
0=r_1(\pi) = \int_{(0,\infty)}\E\left[\ln\left( \frac{u(x)}{x} f(u(x), \xi_1)\right)\right]\,\pi_u(dx).
\end{equation}
\subsection{Stochastic versus deterministic harvesting.}\label{s:stocdet}
Let us compare the stochastic dynamics \eqref{e:harv4} with its deterministic average
\begin{equation}\label{e:det1}
x_{t+1}=x_t \bar f(x_t)
\end{equation}
where $\bar f(x) := \E f(x,\xi_1)$ and $\bar F(x) := x \bar f(x)$. If $h$ is any stationary harvesting strategy the deterministic equilibrium return $y$ satisfies
\[
y=\bar F(u(y))
\]
and the equilibrium yield is
\[
h(y)=y-u(y) = \bar F(u(y))-u(y)=\bar G (u(y))
\]
where $\bar G(x):=\bar F(x)-x$. The \textit{deterministic maximum sustainable yield} (DMSY) is obtained by keeping the escapement $u(y)$ at the level $x_1$ at which $\bar G$ attains its maximum, i.e., at the point $x_1$ where
\[
0=\bar G'(x_1)=\bar F'(x_1)-1=\bar f(x_1) + x_1 \bar f'(x_1)-1.
\]
The DMSY $M_{\det}$ will be
\[
M_{\det} = \bar G(x_1)=x_1\bar f(x_1)-x_1.
\]

\begin{thm}\label{t:maximal_yield}
The expected value of the steady state harvest yield $\E h(Y_\infty)$ of any stationary harvesting policy $h$ of the model \eqref{e:harv4} is always dominated by the maximum deterministic sustainable yield of the equivalent deterministic model \eqref{e:det1},
\[
\E h(Y_\infty) \leq M_{\det}.
\]
 The only way to achieve an equality in the above is when the following conditions are satisfied:
\begin{itemize}
  \item [1)] The unharvested dynamics $X_{t+1}=X_tf(X_t,\xi_{t+1})$ is able to go to a level greater or equal to $x_1$.
  \item [2)] The harvesting policy is bang-bang with threshold $x_1$, that is
  \begin{equation*}
   h^*(y) := \begin{cases}
             y-x_1 & \text{if } y>x_1,\\
               0 & \text{if } y\leq x_1.
          \end{cases}
\end{equation*}
  \item [3)] The level $x_1$ is self-sustaining i.e. the stochastic effects do not make the population ever go below $x_1$ once it reaches this level.
\end{itemize}
\end{thm}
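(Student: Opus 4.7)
The plan is to reduce the problem to a pointwise inequality by exploiting the stationarity of $Y_\infty$. Taking expectations in the recursion $Y_{t+1}=u(Y_t)\,f(u(Y_t),\xi_{t+1})$ and using that $\xi_{t+1}$ is independent of $Y_t$, one gets
\[
\E Y_{t+1}=\E\bigl[u(Y_t)\,\bar f(u(Y_t))\bigr].
\]
If the invariant probability measure $\pi_u$ exists and $Y_\infty\sim\pi_u$, the two sides are equal, so with $X_\infty:=u(Y_\infty)$ we obtain the crucial identity
\[
\E h(Y_\infty)=\E Y_\infty-\E u(Y_\infty)=\E\bar F(X_\infty)-\E X_\infty=\E \bar G(X_\infty).
\]
This turns the long-run yield into an expectation of the deterministic sustainable yield function evaluated at the (random) stationary escapement.

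The upper bound is then immediate: by the very definition of $x_1$ as the maximizer of $\bar G$, we have $\bar G(x)\leq\bar G(x_1)=M_{\det}$ for every $x\geq 0$, and therefore
\[
\E h(Y_\infty)=\E\bar G(X_\infty)\leq M_{\det}.
\]
Taking a policy that keeps $X_\infty\equiv x_1$ attains the bound, and the remaining task is to characterise exactly when this can happen.

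For the equality clause, $\E\bar G(X_\infty)=M_{\det}$ forces $\bar G(X_\infty)=M_{\det}$ almost surely. Under the natural assumption that $x_1$ is the unique maximizer of $\bar G$ (the standard case for concave recruitment functions), this means $u(Y_\infty)=x_1$ a.s. From this single equation the three conditions in the theorem follow: (2) on $\{Y_\infty<x_1\}$ we would have $u(Y_\infty)\leq Y_\infty<x_1$, so $Y_\infty\geq x_1$ a.s.\ and the policy must leave exactly $x_1$, which is the bang-bang rule $h^\ast$; (3) since at the next step $Y_1=x_1 f(x_1,\xi_1)$ must again lie in $[x_1,\infty)$ for the bang-bang rule to leave $x_1$, one needs $f(x_1,\xi_1)\geq 1$ almost surely, i.e.\ $x_1$ is self-sustaining; (1) for $\pi_u$ to charge $[x_1,\infty)$ at all, the unharvested dynamics must be able to reach level $x_1$ from the initial condition, which is exactly the reachability condition.

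The main obstacle I expect is the rigorous deduction of conditions (1) and (3) from the a.s.\ identity $u(Y_\infty)=x_1$. For (3) one must exclude the possibility that $Y_1$ drops below $x_1$ on a null set that still breaks the bang-bang invariance; this is handled by noting that $Y_1\geq u(Y_\infty)f(u(Y_\infty),\xi_1)=x_1 f(x_1,\xi_1)$ a.s.\ and that $Y_1\sim\pi_u$ is supported in $[x_1,\infty)$, forcing $f(x_1,\xi_1)\geq 1$ a.s. For (1) one invokes a standard support/invariance argument for Feller Markov chains: the support of $\pi_u$ must be contained in the set of states accessible from the initial distribution under the dynamics, so if $x_1$ lies in $\supp\pi_u$ the unharvested process must be able to reach a neighbourhood of $x_1$. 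Integrability of $\bar G(X_\infty)$ (implicitly needed for the identity) can be taken from the hypotheses guaranteeing that $\E h(Y_\infty)$ is finite, as in the discussion preceding the theorem.
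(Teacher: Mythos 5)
Your proposal is correct and follows essentially the same route as the paper: both reduce the yield to the identity $\E h(Y_\infty)=\E\bar G(u(Y_\infty))$ via stationarity and independence of $\xi_1$ from $Y_\infty$, bound it by $\bar G(x_1)=M_{\det}$, and characterize equality by forcing the law of $u(Y_\infty)$ to be $\delta_{x_1}$. Your treatment of the equality case is in fact slightly more careful than the paper's (you make explicit the uniqueness of the maximizer of $\bar G$ and derive the necessity of the three conditions rather than merely asserting sufficiency), but the underlying argument is the same.
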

\begin{proof}
Let $G(x,\xi):= x f(x,\xi)-x$ and note that $\E G(x,\xi_1)= \bar G(x)$. For the stochastic model if we use the harvesting policy $h$, the long run average yield is
\begin{equation}\label{e:hY}
\begin{split}
  \E h(Y_\infty) &=  \E Y_\infty - \E[u(Y_\infty)] \\
   &= \E u(Y_\infty) f(u(Y_\infty),\xi_1) - \E[u(Y_\infty)] \\
   &= \E G(u(Y_\infty)),\xi_1)\\
   &= \int \int G(u(y), \xi) \PP(Y_\infty\in dy, \xi_1\in d\xi)\\
   &= \int \int G(u(y), \xi) \PP(Y_\infty\in dy)\PP(\xi_1\in d\xi)\\
   &= \int \left(\int G(u(y), \xi)\PP(\xi_1\in d\xi)\right) \PP(Y_\infty\in dy)\\
   &= \E \bar G (u(Y_\infty))
\end{split}
\end{equation}
where we used the fact that $u(Y_\infty)$ and $u(Y_\infty) f(u(Y_\infty),\xi_1)$ have the same distribution and $Y_\infty$ is independent of $\xi_1$. Since $\bar G$ attains its maximum at $x_1$ we have
\begin{equation}\label{e:hYin}
 \E h(Y_\infty) = \E \bar G (u(Y_\infty))\leq \bar G(x_1)= M_{\det}.
\end{equation}
In order to have equality in \eqref{e:hYin} we need the law of $u(Y_\infty)$ to be the point mass $\delta_{x_1}$ at $x_1$. This means that with probability 1
\[
Y_\infty - h(Y_\infty) = x_1.
\]
One can achieve this if:
\begin{itemize}
\item [1)] the population can get to a level that is equal or greater to $x_1$,
\item [2)] one uses the \textit{bang-bang}, also called \textit{constant escapement} or \textit{threshold}, harvest policy at the level $x_1$
\begin{equation*}
   h^*(y) := \begin{cases}
             y-x_1 & \text{if } y>x_1,\\
               0 & \text{if } y\leq x_1,
          \end{cases}
\end{equation*}
and
\item [3)] once the population reaches the level $x_1$, it never decreases to a lower level, that is, if $X_t=x_1$ then
\[
Y_{t+1} = X_1f(X_1,\xi_1)=x_1f(x_1,\xi_1) \geq x_1
\]
with probability $1$.
\end{itemize}
 The last property is equivalent to having $\PP(f(x_1,\xi_1)\geq 1)=1$ -- if this is true, we say that the level $x_1$ is \textit{self-sustaining}.
\end{proof}
\textit{\textbf{Biological interpretation:} In general, it is not possible to have the same maximal yield in the stochastic setting as in the deterministic setting. Due to environmental fluctuations the expected long term yield of any harvesting strategy $h$ will be dominated by the deterministic maximum sustainable yield. The only case when the maximal yields in the stochastic and deterministic setting are equal, is when one uses a constant escapement strategy with threshold $x_1$ (which maximizes the deterministic MSY), the stochastic dynamics can reach levels greater or equal to $x_1$ and $x_1$ and then never goes below $x_1$ due to environmental fluctuations. These very specific conditions will not usually hold. As such, for most situations we cannot expect to get the same optimal harvesting yields in the deterministic and stochastic settings.
}

We note that threshold (or bang-bang) harvesting strategies do not influence the persistence criterion in the one dimensional case. The unharvested system
\begin{equation}\label{e:SDE_1d}
X_{t+1} = X_t f(X_t, \xi_{t+1})
\end{equation}
has
\[
r^X_1(\delta_0) = \E \ln f(0,\xi_1) >0.
\]
If one adds harvesting, then
\[
Y_{t+1} = Y_t  g(Y_t, \xi_{t+1})
\]
where
\begin{equation*}
   g(x) := \begin{cases}
              \frac{u_w(x)}{x} f(u_w(x), \xi)& \text{if } x>0,\\
               \left(\frac{\partial u_w}{\partial x}(0)\right)f(u_w(0), \xi) & \text{if } x = 0.
          \end{cases}
\end{equation*}
The bang-bang strategy
\begin{equation*}
   h_w(y) := \begin{cases}
             y-w & \text{if } y>w,\\
               0 & \text{if } y\leq w,
          \end{cases}
\end{equation*}
with $w>0$ also has
\[
r^Y_1(\delta_0) = \E \ln g(0,\xi_1)= \E \ln \left(\frac{\partial u_w}{\partial x}(0)\right)f(u(0), \xi) = \E \ln f(0,\xi_1)=r^X_1(\delta_0).
\]
\textit{\textbf{Biological interpretation:} This implies that a constant escapement strategy with a threshold $w>0$ does not change the per-capita growth rate and thus does not interfere with persistence. This is one reason why bang-bang harvesting strategies are robust and make sense when there is only one species present. This is not the case anymore when there are multiple species present.}

\subsection{Constant effort harvesting}
Quite often in fisheries a constant effort harvesting method is used. These strategies are such that the same fixed proportion of the return is captured every year. In other words, for some fixed $\theta \in (0,1)$ we have
\[
h_\theta(x)= \theta x
\]
and
\[
u_\theta(x)=(1-\theta)x.
\]
The persistence criteria \eqref{e:r_single} becomes
\[
\theta < \theta_{\max} := 1 - e^{-\E\ln f(0,\xi_1)}
\]
where $\theta_{\max}$ is the \textit{maximum sustainable rate of exploitation}.
Let us compare this with the deterministic system
\[
x_{t+1}=x_t \bar f(x_t) =\bar F(x_t)
\]
where $\bar f(x) = \E f(x,\xi_1)$. In this setting the maximum sustainable rate of exploitation $\theta_{\det}$ is given by
\[
\theta_{\det} = 1 - \frac{1}{\bar f(0)}=1- e^{-\ln \E f(0,\xi_1) }.
\]
Since the logarithm is a concave function, Jensen's inequality implies that
\[
 \E \ln f(0,\xi_1) \leq \ln \E f(0,\xi_1),
\]
with equality if and only if $f(0,\xi_1)$ is constant with probability one. As a result
\[
\theta_{\max}<\theta_{\det},
\]
which was shown by \cite{R78} in a simpler model.

\textit{\textbf{Biological interpretation:} This inequality is important because it says that if one uses a deterministic model, then the rate of exploitation might seem to be sustainable. Nevertheless, using this strategy will lead to extinction due to the presence of random environmental fluctuations.}

It is well known that in the setting of \eqref{e:det1} the deterministic maximum sustainable yield (DMSY) is achieved when the rate of exploitation is
\[
\theta_{DMSY} = 1 - \frac{1}{\bar f(x_1)}
\]
for $x_1$ satisfying $$\bar f'(x_1)=1.$$ Under environmental conditions which are large enough we can have
\[
e^{-\ln \E f(0,\xi_1) } > \frac{1}{\bar f(x_1)}.
\]
\textit{\textbf{Biological interpretation:} If the environmental fluctuations are significant, one has $\theta_{DMSY}>\theta_{\max}$. This shows that if large environmental fluctuations are possible and we harvest the population according to the deterministic MSY rate of exploitation we will drive it to extinction.}
\begin{thm}\label{t:CE}
If the deterministic averaged system \eqref{e:det1} has a strictly concave $\bar F$, and the dynamics \eqref{e:SDE_1d} is not purely deterministic, then the asymptotic expected yield of any constant effort harvest strategy is strictly lower than the deterministic yield of that harvesting policy.
\end{thm}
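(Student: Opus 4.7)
The plan is to reduce the comparison of yields to the strict inequality $\E Y_\infty < y^*$ between the expected stochastic return and the deterministic equilibrium return. Under a constant effort policy with rate $\theta$, both yields are proportional to the equilibrium return: the stochastic steady-state yield is $\E h_\theta(Y_\infty)=\theta\E Y_\infty$, and the deterministic yield is $\theta y^*$, where $y^*>0$ is the unique positive solution of $y=\bar F((1-\theta)y)$ (uniqueness follows from strict concavity of $\bar F$ and $\bar F(0)=0$; existence from the persistence hypothesis together with the inequality $\theta_{\max}\le\theta_{\det}$ established earlier). So it suffices to show $\E Y_\infty < y^*$.

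The key identity comes from stationarity. Since $Y_{t+1}=(1-\theta)Y_t\,f((1-\theta)Y_t,\xi_{t+1})$ and $\xi_{t+1}$ is independent of $Y_t$, taking expectations under the stationary distribution $\pi_u$ and conditioning on $Y_\infty$ yields
\[
\E Y_\infty = \E[(1-\theta)Y_\infty\,\bar f((1-\theta)Y_\infty)] = \E\,\bar F((1-\theta)Y_\infty).
\]
Strict concavity of $\bar F$ and Jensen's inequality then give
\[
\E Y_\infty = \E\,\bar F((1-\theta)Y_\infty) < \bar F((1-\theta)\E Y_\infty),
\]
provided $Y_\infty$ is not almost surely constant. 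Equivalently, $\phi(\E Y_\infty)>0$, where $\phi(y):=\bar F((1-\theta)y)-y$.

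To conclude, I will use the shape of $\phi$. Since $\phi''(y)=(1-\theta)^2\bar F''((1-\theta)y)<0$, the function $\phi$ is strictly concave; combined with $\phi(0)=0=\phi(y^*)$, this forces $\phi>0$ on $(0,y^*)$ and $\phi<0$ on $(y^*,\infty)$. Because $\E Y_\infty>0$ by persistence of $Y_t$ and $\phi(\E Y_\infty)>0$, we must have $\E Y_\infty<y^*$, giving $\theta\E Y_\infty<\theta y^*$ as desired.

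The principal obstacle is the strict Jensen step, which requires $\pi_u$ not to be a point mass. I would argue this by contradiction: if $Y_\infty\equiv c$ almost surely for some $c>0$, then the recursion forces $f((1-\theta)c,\xi_1)$ to be almost surely equal to the constant $1/(1-\theta)$, contradicting the hypothesis that \eqref{e:SDE_1d} is not purely deterministic. A secondary technical point is verifying $\E Y_\infty<\infty$ so that Jensen's inequality applies; this should follow from the integrability/tightness hypotheses behind the persistence results of Section \ref{s:1s}.
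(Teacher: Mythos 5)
Your proof is correct and takes essentially the same route as the paper's: the stationarity identity $\E Y_\infty=\E\,\bar F((1-\theta)Y_\infty)$, strict Jensen's inequality from the strict concavity of $\bar F$, and comparison with the deterministic equilibrium return satisfying $\hat y_\theta=\bar F((1-\theta)\hat y_\theta)$. Your auxiliary function $\phi(y)=\bar F((1-\theta)y)-y$ and the degeneracy check for the equality case in Jensen simply make explicit two steps that the paper's proof leaves implicit.
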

\begin{proof}
For a constant-effort policy $h_\theta(y)=\theta y$ the asymptotic expected yield is given by
\[
\E h_\theta(Y_\infty)=\theta \E Y_\infty.
\]
Set $u_\theta(y)=(1-\theta)y$. Using that $ u_\theta(Y_\infty) f(u_\theta(Y_\infty),\xi_1)$ and $u_\theta(Y_\infty)$ have the same distribution an argument similar to the one from \eqref{e:hY} shows that
\[
\E Y_\infty = \E Y_\infty f(u_\theta(Y_\infty),\xi_1) = \frac{1}{1-\theta}\E \bar F(u_\theta(Y_\infty)).
\]
If the function
\[
\bar F (y) = y \E f(y,\xi_1)
\]
is strictly concave then by Jensen's inequality
\begin{equation}\label{e:ineq}
 \frac{1}{1-\theta}\E \bar F(u_\theta(Y_\infty)) \leq  \frac{1}{1-\theta} \bar F(\E u_\theta(Y_\infty)).
\end{equation}
One can have equality in \eqref{e:ineq} only if $Y_\infty$ is with probability one a constant random variable. This implies that
\[
\E Y_\infty \leq \bar F ((1-\theta) \E Y_\infty).
\]
As we know, in the deterministic model, using the same policy with harvest rate $\theta$, the equilibrium return $\hat y_\theta$ satisfies
\[
\hat y_\theta = \bar F((1-\theta)\hat y_\theta).
\]
This together with the strict concavity of $\bar F$ implies that
\[
\E Y_\infty \leq \hat y_\theta.
\]
Equality can only hold if $Y_\infty$ is with probability one a constant, which means the dynamics is deterministic.
\end{proof}

\subsection{Bang-bang threshold harvesting.} Bang-bang or constant-escapement harvesting strategies are important and are used in many theoretical models as well as in actual harvesting situations, like fisheries. These policies have been shown to be optimal in many instances both for the continuous (\cite{LO97,Alvarez98, HNUW18, AH19}) and discrete time (\cite{R78, R79}) settings. Constant escapement strategies turned out to be optimal for maximizing discounted yield, asymptotic yield, as well as discounted economic revenue under many different conditions. In discrete time the work by \cite{R79} implies that a bang-bang policy maximizes the expected discounted net revenue in a discrete time stochastic model. It has not been shown in discrete time, to our knowledge, in generality that the expected steady state yield is maximized under a bang-bang strategy. However, both heuristic arguments and analytical results in specific cases hint that these strategies are probably the ones that will in general be optimal. In addition, these are the strategies that are most widely used in fisheries, where the escapement is controlled. We will explore how well these bang-bang strategies do in the stochastic harvesting setting \eqref{e:harv4} in comparison to the deterministic setting \eqref{e:det1}.

Suppose we harvest according to the bang-bang strategy
\begin{equation}\label{e:bang}
   h_w(y) := \begin{cases}
             y-w & \text{if } y>w,\\
               0 & \text{if } y\leq w,
          \end{cases}
\end{equation}
with $w>0$.
Let $r_s$ be the maximum self-sustaining level
\[
r_s = \max \{x~|~\PP(f(x,\xi_1)\geq 1)=1\}.
\]
We have to differentiate between two cases:
\begin{itemize}
  \item[1)] The level $w$ is self-sustaining, i.e.
  \[
  \PP(f(w,\xi_1)\geq 1)=1.
  \]
  \item[2)] The level $w$ is not self-sustaining.
\end{itemize}
\begin{prop}
If the threshold level $w$ is self-sustaining then the expected value of the long term yield $\E h_w(Y_\infty)$ is equal to the deterministic yield of the same strategy $\bar G (w)$. The variance of the yield $h_w(Y_\infty)$ is given by
\[
\sigma^2(h_w(Y_\infty)):=w^2 \E[f^2(w,\xi_1)- \bar f(w)^2].
\]
\end{prop}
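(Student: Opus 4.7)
The plan is to exploit the structure of the bang-bang escapement $u_w(y)=\min(y,w)$: on the half-line $\{y\ge w\}$ the escapement is the constant $w$, so the dynamics reduces to an i.i.d.\ sequence, and the self-sustaining assumption guarantees that this half-line is absorbing. Once this is established, the stationary distribution is explicit and both the mean and the variance drop out directly.

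Concretely, I would first observe that under $h_w$ we have $u_w(y)=w$ whenever $y\ge w$, so \eqref{e:harv} gives
\begin{equation*}
Y_{t+1}=w\,f(w,\xi_{t+1})\qquad\text{whenever }Y_t\ge w.
\end{equation*}
The key consequence is that $Y_{t+1}$ is independent of $Y_t$ on this event. Next I would use the self-sustaining hypothesis $\PP(f(w,\xi_1)\ge 1)=1$ to deduce that $w\,f(w,\xi_{t+1})\ge w$ almost surely, so the half-line $[w,\infty)$ is absorbing: if $Y_t\ge w$ then $Y_s\ge w$ for all $s\ge t$ a.s. For initial conditions below $w$ the harvesting strategy prescribes $h_w(y)=0$, so $Y_t$ follows the unharvested dynamics on $(0,w)$; by the unharvested persistence condition \eqref{e:r_single_pers} together with Theorem~\ref{t:pers1_disc}, $Y_t$ reaches $[w,\infty)$ in finite time almost surely.

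These two facts identify the unique invariant probability measure $\pi_{u_w}$ of $Y_t$ as the pushforward of the law of $\xi_1$ under $\xi\mapsto w\,f(w,\xi)$. Equivalently, $Y_\infty \stackrel{d}{=} w\,f(w,\xi_1)$. From here the computation is routine:
\begin{equation*}
\E h_w(Y_\infty)=\E[Y_\infty-w]=w\,\bar f(w)-w=\bar F(w)-w=\bar G(w),
\end{equation*}
and
\begin{equation*}
\sigma^2(h_w(Y_\infty))=\mathrm{Var}(w\,f(w,\xi_1)-w)=w^2\bigl(\E f^2(w,\xi_1)-\bar f(w)^2\bigr).
\end{equation*}

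The only subtle step is the absorbing/convergence claim: one needs to know that the chain indeed settles into the stationary regime described by the i.i.d.\ sequence $w\,f(w,\xi_t)$ rather than getting stuck in the sub-threshold region. I expect this to be the main obstacle, but it is handled by combining the persistence assumption \eqref{e:r_single_pers} (which forces the unharvested dynamics below $w$ to hit $[w,\infty)$) with the almost-sure absorbing property that follows from self-sustainingness. Everything else reduces to substituting into the definitions of mean and variance.
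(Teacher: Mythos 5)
Your proof is correct and follows essentially the same route as the paper's: both rest on identifying $Y_\infty \stackrel{d}{=} w\,f(w,\xi_1)$ and then computing the mean and variance directly. The paper's own proof simply writes $\E(Y_\infty-w)=\E\, w f(w,\xi_1)-w$ without further comment, so your explicit justification --- that $[w,\infty)$ is absorbing under the self-sustaining hypothesis, is reached almost surely by persistence, and hence carries the invariant measure as the pushforward of the law of $\xi_1$ under $\xi\mapsto w f(w,\xi)$ --- merely fills in detail the paper leaves implicit.
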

\begin{proof}
Suppose $w$ is self-sustaining. Then
\[
\E h_w(Y_\infty)=\E(Y_\infty-w) = \E w f(w,\xi_1)-w=\bar F(w)-w=\bar G(w).
\]
For the variance of the yield we get
\begin{equation*}
\begin{split}
  \sigma^2(h_w(Y_\infty))&= \E[h_w(Y_\infty)^2] - (\E h(Y_\infty))^2\\
  &= \E [(wf(w,\xi_1)-w)^2] - (\bar F(w)-w)^2\\
  &=\E w^2 f^2(w,\xi_1) -2 w^2 \E f(w,\xi_1) + w^2 -\bar F^2(w) + 2\bar F(w) w - w^2\\
  &=w^2 \E[f^2(w,\xi_1)-(\E f(w,\xi_1))^2].
\end{split}
\end{equation*}
This completes the proof.
\end{proof}
\textit{\textbf{Biological interpretation:} Self-sustaining thresholds $w$, where the fluctuations never push the population's size under this threshold, make it possible to have the same yield in the stochastic and deterministic settings. However, the environmental fluctuations make the variance of the yield increase.
}
\begin{prop}
Suppose the following properties hold:
\begin{itemize}
  \item $w$ is not self-sustaining
  \item all the levels $x\in [0,r_s]$ are self-sustaining
  \item $\bar G(x)=x\bar F(x) - x$ is unimodal with its maximum at $\bar x$
  \item $\bar x$ is self-sustaining.
\end{itemize}
The expected steady state yield of the harvesting strategy $h_w$ is strictly greater than the deterministic nominal yield of the same strategy
\[
\E h_w(Y_\infty)>\bar G(w).
\]
\end{prop}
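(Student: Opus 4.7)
The plan is to follow the same route as Theorem~\ref{t:maximal_yield}, starting from the stationarity identity
\[
\E h_w(Y_\infty) = \E \bar G(u_w(Y_\infty)),
\]
where $u_w(y)=\min(y,w)$ is the escapement produced by the bang-bang strategy $h_w$. This identity is derived exactly as in \eqref{e:hY}: rewrite $\E Y_\infty$ using the distributional fixed point $Y_\infty \stackrel{d}{=} u_w(Y_\infty) f(u_w(Y_\infty),\xi_1)$ and the independence of $Y_\infty$ and $\xi_1$. The proposition therefore reduces to showing that $\E \bar G(u_w(Y_\infty)) > \bar G(w)$.

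Two ingredients should suffice. First, I would argue that $\PP(u_w(Y_\infty) < w) > 0$. Non-self-sustainability of $w$ gives $\PP(f(w,\xi_1) < 1) > 0$; conditioning on the (positive-probability) event $\{u_w(Y_t)=w\}$ produces a return $w f(w,\xi_{t+1})$ that is strictly less than $w$ with positive probability, and on that event no harvesting occurs and the new escapement equals the return. Stationarity then transfers this bound to $Y_\infty$.

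Second, and this is the crux of the argument, I would show that $u_w(Y_\infty) \ge \bar x$ almost surely. Viewing $E_t := u_w(Y_t)$ as the Markov chain with one-step transition $E_{t+1} = \min(E_t f(E_t,\xi_{t+1}),\, w)$, and using that every level in $[0,r_s]$ is self-sustaining with $\bar x \le r_s$, one has $E_t f(E_t,\xi_{t+1}) \ge E_t$ almost surely whenever $E_t \le r_s$. The chain is therefore monotone non-decreasing in this range, which suggests that at stationarity no mass can accumulate in $[0,\bar x)$; combined with the uniqueness statement of Theorem~\ref{t:pers1_disc}, this localises the stationary distribution in $[\bar x,w]$. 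Once the support containment is established, unimodality of $\bar G$ with maximum at $\bar x$ makes $\bar G$ strictly decreasing on $[\bar x,w]$, so that $\bar G(u_w(Y_\infty)) \ge \bar G(w)$ almost surely, with strict inequality on the positive-probability event $\{u_w(Y_\infty) < w\}$, and taking expectations closes the proof.

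The principal obstacle is the support step. Self-sustainability on $[0,r_s]$ controls the chain only while it sits in that interval; from states $y \in (r_s,w]$ the one-step transition $y f(y,\xi)$ is a priori unbounded below and could in principle deposit the chain in $[0,\bar x)$. Ruling this out rigorously is not a one-step estimate but rather requires a genuine coupling or comparison argument---the natural candidate is to couple $E_t$ with the escapement chain associated with the threshold $\bar x$, which is absorbed at $\bar x$ by self-sustainability of that level, and then exploit the unimodal shape of $\bar G$ together with monotonicity of the coupling to close the estimate.
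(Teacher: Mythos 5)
Your argument is, in outline, exactly the paper's: the stationarity identity $\E h_w(Y_\infty)=\E\bar G(u_w(Y_\infty))$ (derived as in \eqref{e:hY}), a localization of the law of $u_w(Y_\infty)$ to the region where $\bar G$ exceeds $\bar G(w)$, and the strict inequality coming from the positive-probability event $\{u_w(Y_\infty)<w\}$, which your non-self-sustainability argument for $w$ handles correctly (and slightly more carefully than the paper, which writes a pointwise strict inequality on the closed interval up to $w$ where equality actually holds at the endpoint).

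The one place where you and the paper part ways is the step you call the principal obstacle. The paper does not supply the coupling you ask for: it simply asserts ``$Y_\infty$ will be supported by a subset of $[r_s,\infty)$,'' so that $u_w(Y_\infty)\in[r_s,w]$ with $r_s\geq\bar x$, and then invokes monotonicity of $\bar G$ past its mode. The implicit justification is that every level in $[0,r_s]$ is self-sustaining, so the return chain cannot decrease while it sits below $r_s$; what the paper never addresses is precisely your concern that a single step from an escapement $u\in(r_s,w]$ could deposit the chain below $r_s$ (or below $\bar x$). That concern is legitimate in the generality in which the proposition is stated. It disappears under a monotonicity hypothesis of the kind already present in Theorem \ref{t:pers_single1}: if $x\mapsto xf(x,\xi)$ is non-decreasing, then for any escapement $u\geq r_s$ one has $uf(u,\xi_1)\geq r_s f(r_s,\xi_1)\geq r_s$ almost surely, so $[r_s,\infty)$ is forward-invariant for the return chain and the support claim is a one-line observation rather than a coupling argument. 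With that hypothesis added your proof closes exactly as the paper's does; without it, neither your sketch nor the paper's proof is complete, and the coupling with the threshold-$\bar x$ escapement chain that you propose is a reasonable way to try to repair it, though you have not carried it out.
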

\begin{proof}
If $h_w$ is given by \eqref{e:bang} for some $w>0$ then
\begin{equation}\label{e:bang2}
   u_w(y) := \begin{cases}
             w & \text{if } y>w,\\
               y & \text{if } y\leq w.
          \end{cases}
\end{equation}
Note that $Y_\infty$ will be supported by a subset of $[r_s,\infty)$. By assumption $w>r_s>\bar x$, so that
\[
\bar G(y) > \bar G(w), y\in [r_s,w].
\]
This implies that with probability one
\[
\bar G(Y_\infty) \mathbf{1}\{Y_\infty\in [r_s,w]\}>\bar G(w)\mathbf{1}\{Y_\infty\in [r_s,w]\}.
\]
Using that the function $\bar G$ is nonincreasing on $[r_s,w]$ together with \eqref{e:bang2} and the last inequality we see that
\begin{equation*}
\begin{split}
\E h_w(Y_\infty) &= \E \bar G(u_w(Y_\infty))\\
&=\E [\bar G(Y_\infty) \mathbf{1}\{Y_\infty\in [r_s,w]\} ] + \E[\bar G(w) \mathbf{1}\{Y_\infty> w\}]\\
&>  \E [\bar G(w) \mathbf{1}\{Y_\infty\in [r_s,w]\} ] + \E[\bar G(w) \mathbf{1}\{Y_\infty> w\}]\\
&= \bar G(w).
\end{split}
\end{equation*}
with $w>0$.
\end{proof}
\textit{\textbf{Biological interpretation:} Suppose one picks a harvesting threshold $w$ which is not self-sustaining while the maximum yield of the deterministic dynamics happens at a threshold $\bar x<w$ which is self-sustaining. Then the expected yield of the constant escapement strategy with threshold $w$ for the stochastic dynamics is strictly greater than the expected yield of the same strategy in the deterministic system. The environmental fluctuations will push the population size into the region $(\bar x, w)$ and in this region, the function $\bar G$, which measures the size of the deterministic harvest, is strictly decreasing. This makes it more favorable to go below $w$, something which is not possible in the deterministic dynamics.
}

\section{The Ricker model: single species}
In this section we will provide an in depth analysis of the \textit{Ricker model}.
Its dynamics is given by the functional response
\[
f(x,\xi) = e^{\rho-\alpha x}.
\]
Here the randomness comes from $\xi:=(\rho,\alpha)$. The quantity $\rho_t$ is the fluctuating growth rate and $\alpha_t$ is the competition rate. We assume that $\rho_1, \rho_2,\dots$ are i.i.d random variables on $\R$, and $\alpha_1,\dots$ are independent i.i.d random variables supported on $\R$.
In this setting, one can see that without harvesting
\[
r^X(\delta_0) = \E \rho_1
\]
while with harvesting strategy $h(y)$ (or escapement strategy $u(y)$)
\[
r^Y(\delta_0)= \ln\left(\frac{\partial u}{\partial x}(0) h\right) + \E \rho_1.
\]
The maximal harvesting rate at $0$ which does not lead to extinction is
\[
\frac{\partial h}{\partial x}(0)<1- e^{-  \E\rho_1}.
\]
\subsection{Maximum sustainable yield.} We want to see when we can apply the results of Theorem \ref{t:maximal_yield}. Suppose that $\rho_1$ is such that $\E e^{\rho_1} = K_1>0$ and assume for simplicity that $\alpha_1>0$ is a constant. Then
\[
\bar f (x) = \E e^{\rho_1-\alpha_1 x} =  K_1 e^{-\alpha_1 x},
\]
$\bar F(x) = x \bar f (x)= xK_1 e^{-\alpha_1 x}$ and $\bar G(x) = x K_1 e^{-\alpha_1 x} - x$. By the analysis from Section \ref{s:stocdet} the deterministic maximum yield is achieved at the point $x_1$ where
\[
\bar G'(x_1)=\bar f(x_1) +x_1 \bar f'(x_1) -1 = K_1e^{-\alpha_1 x_1}-\alpha_1 x_1  K_1e^{-\alpha_1 x_1} - 1=0.
\]
Define the function
\[
q(x) = K_1 e^{-\alpha_1 x} - \alpha_1 K_1x e^{-\alpha_1 x} - 1, x\in \R_+.
\]
\begin{lm}
If $q(0)<0$ then the equation $q(x)=0$ has no solutions on $(0,\infty)$. If instead $q(0)>0$ then the equation $q(x)=0$ has exactly one solution $x_1>0$.
\end{lm}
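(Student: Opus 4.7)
The plan is to study $q$ on $\R_+$ by a standard monotonicity/IVT argument. First I compute
$q(0)=K_1-1$ and the derivative
\[
q'(x)=-\alpha_1 K_1e^{-\alpha_1 x}(1-\alpha_1 x)+K_1e^{-\alpha_1 x}(-\alpha_1)=\alpha_1 K_1 e^{-\alpha_1 x}\bigl(\alpha_1 x-2\bigr).
\]
Since $\alpha_1,K_1>0$, the only critical point of $q$ on $\R_+$ is $x^\star:=2/\alpha_1$, and $q$ is strictly decreasing on $[0,x^\star]$ and strictly increasing on $[x^\star,\infty)$.

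Next I pin down the values of $q$ at the endpoints of its monotone intervals. At the minimum,
\[
q(x^\star)=K_1 e^{-2}(1-2)-1=-K_1 e^{-2}-1<0,
\]
and as $x\to\infty$, the factor $K_1 e^{-\alpha_1 x}(1-\alpha_1 x)\to 0$ (exponential decay beats the linear term), so $q(x)\to -1<0$. Combined with the monotonicity, this shows $q<0$ on the whole interval $[x^\star,\infty)$; hence any zero of $q$ must lie in $[0,x^\star)$.

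Now I split on the sign of $q(0)=K_1-1$. If $q(0)<0$, then since $q$ is strictly decreasing on $[0,x^\star]$ we get $q(x)<q(0)<0$ for all $x\in(0,x^\star]$, and by the previous paragraph $q<0$ on $[x^\star,\infty)$ as well, so the equation $q(x)=0$ has no solution in $(0,\infty)$. If instead $q(0)>0$, then $q(0)>0>q(x^\star)$ and $q$ is strictly decreasing and continuous on $[0,x^\star]$, so by the intermediate value theorem there is exactly one $x_1\in(0,x^\star)$ with $q(x_1)=0$; strict monotonicity rules out any further zeros in $[0,x^\star]$, and by the previous paragraph there are none in $[x^\star,\infty)$. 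This yields exactly one positive root.

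The argument is essentially a one-variable calculus exercise; the only small subtlety is checking that $q$ stays strictly negative on the increasing branch $[x^\star,\infty)$, which I expect to be the main point where one must be careful, since one needs both $q(x^\star)<0$ and $\lim_{x\to\infty}q(x)=-1<0$ together with monotonicity to conclude (rather than $q(x^\star)<0$ alone).
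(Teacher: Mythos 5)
Your proof is correct and follows essentially the same route as the paper: compute $q'(x)=\alpha_1 K_1 e^{-\alpha_1 x}(\alpha_1 x-2)$, observe that $q$ decreases to its minimum at $2/\alpha_1$ and then increases toward the limit $-1$, and conclude via monotonicity and the intermediate value theorem. Your treatment of the increasing branch $[2/\alpha_1,\infty)$ is in fact slightly more careful than the paper's, which only remarks that once $q$ goes below zero it never returns above zero.
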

\begin{proof}
Note that
\[
q'(x)= \alpha_1K_1 e^{-\alpha_1 x}(-2+\alpha_1x)
\]
and
\[
q''(x)=-\alpha_1^2 K_1  e^{-\alpha_1 x} (-2+\alpha_1 x) + \alpha_1^2 K_1 e^{-\alpha_1 x}.
\]
This shows that starting from $x=0$, the function $q$ decreases to its minimum at $x=\frac{2}{\alpha_1}$ and then increases from there on forever. However, once $q$ goes below zero it will never go above zero again. This happens because of the above properties and the fact that
\[
\lim_{x\to\infty} q(x) =\lim_{x\to\infty} (K_1 e^{-\alpha_1 x} - \alpha_1 K_1x e^{-\alpha_1 x} - 1)=-1.
\]
This implies that if $q(0)<0$ there are no solutions to $q(x)=0$. If we assume $q(0)>0$ we get in combination with $\lim_{x\to \infty }<0$ by the Intermediate Value Theorem that there exists a solution to $q(x)=0$. It is also clear by the properties of $q(x)$ that there exists exactly one solution to $q(x)=0$ and the solution has to lie in the interval $\left(0, \frac{2}{\alpha_1}\right)$.
\end{proof}
In order to be able to achieve this yield in the stochastic setting, according to Theorem \ref{t:maximal_yield} we need to ensure that $x_1$ is self-sustainable. This boils down to
\[
 \PP\left(f(x_1,\xi)\geq 1\right) =\PP\left(e^{\rho_1-\alpha_1 x_1}\geq 1\right)=1,
\]
or
\[
\PP\left(\rho_1\geq \alpha_1 x_1 \right)=1.
\]
Since $x_1\in \left(0, \frac{2}{\alpha_1}\right)$ we see that if $\rho_1\geq 2$ with probability one, then the self-sustaining harvesting policy given by
\begin{equation*}
   h^*(y) := \begin{cases}
             y-x_1 & \text{if } y>x_1,\\
               0 & \text{if } y\leq x_1,
          \end{cases}
\end{equation*}
where $x_1$ is the unique solution to $q(x)=0$, maximizes the expected long term yield and makes it equal to the deterministic maximal sustainable yield. The value of the optimal expected long term yield will be
\[
\E h^*(Y_\infty) = \bar G(x_1) = x_1(K_1 e^{-\alpha_1 x_1}-1).
\]
\subsection{Maximal constant effort policy.} Suppose we use a constant effort policy $h(x)=\theta x$ for some $\theta\in(0,1)$ and that both $\rho_1$ and $\alpha_1$ are random.
The condition for persistence (see Theorem \ref{t:pers1_disc} and Theorem \ref{t:pers_single3}) is given by
\[
\E\rho_1 + \ln(1-\theta)> 0.
\]
This forces that $\theta\in \left(0,1-e^{-\E\rho_1}\right)$. Assume this condition holds so that $Y_t$ converges to a stationary distribution $\pi_\theta$. Then \eqref{e:r0} becomes
\[
0 = \ln(1-\theta) +\E\rho_1 - (1-\theta) \E\alpha_1 \int_{(0,\infty)}x\,\pi_\theta(dx).
\]
We can use this to show that the long run expected yield is given by
\[
H(\theta):=\E h(Y_\infty) = \int_{(0,\infty)}h(x)\,\pi_\theta(dx) = \frac{\theta (\E\rho_1 +\ln(1-\theta))}{(1-\theta) \E\alpha_1}.
\]

 \begin{figure}[h!tb]
 	\begin{center}
 		\includegraphics[height=2in]{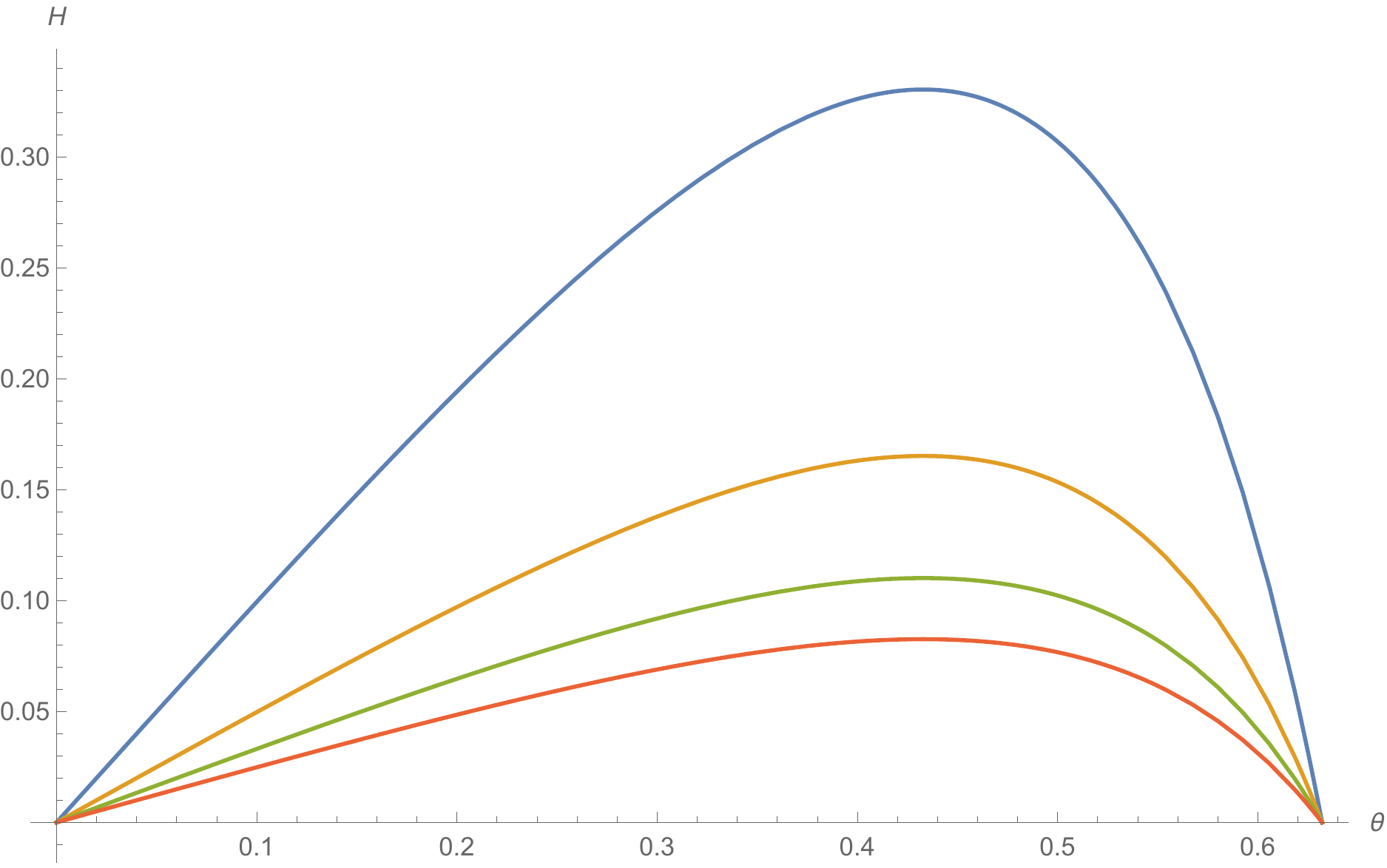}
 		\caption{Graph of the long run average yield $H(\cdot)$ as a function of the harvesting rate $\theta$ when $\E \rho_1=1$ and $\E\alpha_1=1, 2, 3, 4$.} \label{fig1}\end{center}
 \end{figure}

 The Intermediate Value Theorem shows there is a solution $\theta^*\in\left(0,1-e^{-\E\rho_1}\right)$ to
\begin{equation}\label{e:H'}
0=H'(\theta^*)=\frac{\E\rho_1-\theta^* + \ln(1-\theta^*)}{ \E\alpha_1(1-\theta^*)^2}.
\end{equation}
Since the function $p(x)=\ln(1-x)-x$ is strictly decreasing on $(0,1)$ we also get that the solution $x^*$ is unique.
Taking another derivative, evaluating at $x^*$ and using \eqref{e:H'} we get
\[
H''(\theta^*)= \frac{2\E\rho_1-\theta^*+ 2\ln(1-\theta^*)-2}{\E\alpha_1(1-\theta^*)^3} = \frac{\theta^*-2}{\E\alpha_1(1-\theta^*)^3}<0.
\]
This implies that $\theta^*$ is a global maximum of $H(\theta)$ on $\left[0,1-e^{-\E\rho_1}\right]$. The maximal expected constant effort harvesting yield will be
\[
H(\theta^*) =  \frac{(\theta^*)^2}{(1-\theta^*) \E\alpha_1}.
\]

\section{Harvesting of two interacting species}
In this section we analyze the situation when there are two interacting species that can be harvested.
The system is modelled in the absence of harvesting by
\begin{equation*}
\begin{aligned}
X^1_{t+1}&=X^1_{t} f_1(X^1_t, X^2_t,\xi_{t+1}),\\
X^2_{t+1}&=X^1_{t} f_2(X^1_t, X^2_t,\xi_{t+1}).
\end{aligned}
\end{equation*}
As the theory from Appendix \ref{s:a2d} shows, one needs to first look at the quantities
\[
r_i(\delta_0) = \E[\ln f_i(0,\xi_1)], i=1,2.
\]
If $r_i(\delta_0)>0$ then species $i$ survives on its own and converges to a unique invariant probability measure $\mu_i$ supported on $(0,\infty)$.
Suppose $r_1(\delta_0)>0, r_2(\delta_0)>0$. The realized per-capita growth rates can be computed via
  \[
  r_i(\mu_j)=\int_{(0,\infty)}\E[\ln f_i(x,\xi_1)]\mu_j(dx).
  \]
If $r_1(\mu_2)>0$ and $r_2(\mu_1)>0$ by Theorem \ref{t:pers1_disc} we have the convergence to a unique stationary distribution $\pi$ supported on $\Se_+$.
\subsection{Two species with harvesting.} Assume next that we harvest according to the strategies $h_1(x_1,x_2)$ and $h_2(x_1, x_2)$. Using \eqref{e:g} and \eqref{e:harv3} the dynamics becomes
\begin{equation}\label{e:harv2d}
Y^i_{t+1} = Y^i_t  g_i(\BY_t, \xi_{t+1}).
\end{equation}
where for $i=1,2$
\begin{equation}\label{e:g2}
   g_i(y_1,y_2,\xi) := \begin{cases}
              \frac{u_i(y_1,y_2)}{y_i} f_i(u_1(y_1,y_2), u_2(y_1,y_2), \xi)& \text{if } y_i>0,\\
               \left(\frac{\partial u_i}{\partial x_i}(y_1,y_2)\right)f_i(u_1(y_1,y_2), u_2(y_1,y_2), \xi) & \text{if } y_i = 0.
          \end{cases}
\end{equation}
Species $Y^i$ persists on its own with harvesting if
\begin{equation}\label{e:2d_pers2}
\begin{aligned}
r^Y_i(\delta_0) &= \E[\ln g_i(0,0,\xi_1)]\\
&= \E\ln\left[\left(\frac{\partial u_i}{\partial x_i}(0,0)\right)f_i(u_1(0,0), u_2(0,0), \xi_1)\right]\\
&= \ln \left(\frac{\partial u_i}{\partial x_i}(0,0)\right) + \E\ln f_i(0,0, \xi_1)\\
&>0.
\end{aligned}
\end{equation}
or equivalently
\begin{equation}\label{e:r_two}
\frac{\partial u_i}{\partial x_i}(0,0)> e^{-  \E\ln f_i(0,0, \xi_1)}.
\end{equation}
At this point there are three possibilities one may want to look at:
\begin{itemize}
  \item [1)] $\E\ln f_i(0,0, \xi_1)>0, i=1,2$ and $\frac{\partial u_i}{\partial x_i}(0,0)> e^{-  \E\ln f_i(0,0, \xi_1)}, i=1,2$ so that both harvested species persist on their own and have unique invariant probability measures $ \mu_1$ and $ \mu_2$ on the two positive axes. This describes the harvesting of a competitive system.
  \item [2)]$\E\ln f_i(0,0, \xi_1)>0, i=1,2$, $\frac{\partial u_1}{\partial x_1}(0,0)> e^{-  \E\ln f_1(0,0, \xi_1)}$ and $\frac{\partial u_2}{\partial x_2}(0,0)< e^{-  \E\ln f_2(0,0,\xi_1)}$. In this case, there are two species which compete with each other, both species persist on their own when there is no harvesting, and species $1$ persists with harvesting on its own, while species $2$ goes extinct if it is on its own and gets harvested.
  \item [3)] $\E\ln f_1(0,0, \xi_1)>0$, $\E\ln f_2(0,0, \xi_1)<0$, and $\frac{\partial u_1}{\partial x_1}(0,0)> e^{-  \E\ln f_1(0,0, \xi_1)}$. In this setting, species $1$ is a prey that persists on its own both with harvesting and without harvesting while species $2$ is a predator that can not persist on its own.
\end{itemize}
\begin{example}

Assume we work with constant threshold harvesting strategies, so that we harvest species $1$ according to
\begin{equation*}
   h(y_1,y_2) := \begin{cases}
             y_1-w & \text{if } y_1>w,\\
               0 & \text{if } y_1\leq w
          \end{cases}
\end{equation*}
where $w>0$. We will suppose species $2$ does not get harvested.
As we have seen in Section \ref{s:stocdet}, constant escapement harvesting strategies do not influence the persistence of a single species (as long as the threshold is strictly positive). However, we can show that they do change the persistence criteria if there are two interacting species.

Suppose species $1$ persists one its own: $\E\ln f_1(0,0, \xi_1)>0$. Without harvesting it converges to a stationary distribution $\tilde \mu_1$ while with harvesting it converges to a different stationary distribution $ \mu^{h}_1$.
Without harvesting we have
\[
r_2(\mu_1) = \int_{(0,\infty)}\E[\ln f_2(x,0,\xi_1)]\tilde \mu_1(dx)
\]
while with harvesting
\begin{equation}\label{e:2d_1}
\begin{aligned}
r_2(\mu_1^{h}) &= \int_{(0,\infty)}\E[\ln g_2(x,0,\xi_1)]\mu_1^{h}(dx) \\
&=  \int_{(0,\infty)}\E[\ln f_2(u_1(x_1,0),0,\xi_1)]\mu_1^{h}(dx_1)\\
&=\int_{(0,w)}\E[\ln f_2(x_1,0,\xi_1)]\mu_1^{h}(dx_1) + \int_{(w,\infty)}\E[\ln f_2(w,0,\xi_1)]\mu_1^{h}(dx_1).
\end{aligned}
\end{equation}
We see that in general, since $\mu_1^{h}\neq \tilde \mu_1$, we will have $r_2(\mu_1)\neq r_2(\mu_1^{h})$.  Therefore the persistence criteria are influenced by the threshold policies.
\end{example}

\subsection{Two dimensional  Lotka--Volterra predator-prey model.}
Suppose that we have model with a predator and a prey, that get harvested proportionally at rates $q,r\in (0,1)$. This means that $u_1(y_1,y_2)=(1-q)y_1$ and $u_2(y_1,y_2)=(1-r)y_2$. Using this in \eqref{e:harv2d} we get
\begin{eqnarray*}
Y^1_{t+1} &=& Y^1_t \exp \left(\rho_{t+1}+\ln(1-q)-\alpha_{t+1}(1-q)Y^1_t-a_{t+1}(1-r)Y^2_t\right)\\
Y^2_{t+1}&=& Y^2_t \exp \left(-d_{t+1}+\ln(1-r)-c_{t+1}(1-r)Y^2_t+b_{t+1}(1-q)Y^1_t\right)
\end{eqnarray*}
where the random coefficients have the following interpretations: $d_{t+1}>0$ is the predator's death rate, $a_{t+1}>0$ is the predator's attack rate on the prey, $b_{t+1}>0$ is the predator's conversion rate of prey, $c_{t+1}>0$ is the predator's intraspecific competition rate. Let $d_r:=\E[d_1]-\ln(1-r)>0, \rho_q:=\E[\rho_1]+\ln(1-q)$ and assume $(\rho_t)_{t\in \Z_+}, (\alpha_t)_{t\in \Z_+}, (a_t)_{t\in \Z_+}, (d_t)_{t\in \Z_+}, (c_t)_{t\in \Z_+}$ and $(b_t)_{t\in \Z_+}$ form independent sequences of i.i.d random variables. We assume for simplicity that the different random variables have compact support and are absolutely continuous with respect to Lebesgue measure. Then one can show by \cite{H87, BS19} that there is $K>0$ such that the process $\BY_t$ eventually enters and then stays forever in the compact set $\mathcal{K}=[0,K]^2$. We assume that the boundaries $\{0\}\times (0,K) $ and $(0,K)\times \{0\}$ are accessible for all $q,r\in(0,1)$. This can be easily checked for example if $\rho_1$ has $(0,L)$ for some $L>0$ in its support.

 As long as $\rho_q>0$, by our previous results, there exists a unique stationary distribution $\mu^q_1$ on $(0,\infty)\times \{0\}$ and
$$
 \int x_1\,\mu^q_1(dx_1) =\frac{\rho_q}{(1-q)\E \alpha_1} = \frac{\E[\rho_1]+\ln(1-q)}{(1-q)\E \alpha_1}.
$$
This can be used to get
$$
r_2(\mu^q_1) = -d_r +\E[b_1] \frac{\rho_q}{\E \alpha_1}= -(\E[d_1]-\ln(1-r)) +\E[b_1] \frac{\E \rho_1 + \ln(1-q)}{\E \alpha_1}.
$$
If $r_2(\mu_1^q)<0$ the predator $Y^2$ will go extinct with probability one. This means that, for a given harvesting rate $q\in \left(0, 1 - e^{-\E\rho_1}\right)$ of the prey, the maximal harvesting rate of the predator that does not lead to its extinction is
\[
r_{\max} = 1- \exp\left(\E d_1 - \E[b_1] \frac{\E \rho_1 + \ln(1-q)}{\E \alpha_1} \right).
\]

As long as $r_2(\mu_1^q)>0$, or equivalently
\[
\frac{d_r}{\rho_q}<  \frac{\E b_1}{\E \alpha_1},
\]
we get the existence of a unique invariant probability measure $\mu^{q,r}_{12}$ supported on a subset of $(0,\infty)^2$. Putting all the conditions together we get the following classification of the harvested dynamics:
\begin{itemize}
  \item If
   \begin{equation*}
\begin{split}
0&<q< 1 - e^{-\E\rho_1}\\
0&<r<r_{\max}=1- \exp\left(\E d_1 - \E[b_1] \frac{\E \rho_1 + \ln(1-q)}{\E \alpha_1} \right)
\end{split}
\end{equation*}
then the two species coexist and there is a unique invariant probability measure when
  \item If
   \begin{equation*}
\begin{split}
0&<q< 1 - e^{-\E\rho_1}\\
1&>r\geq r_{\max}=1- \exp\left(\E d_1 - \E[b_1] \frac{\E \rho_1 + \ln(1-q)}{\E \alpha_1} \right)
\end{split}
\end{equation*}
then the prey persists and the predator goes extinct with probability $1$.
  \item If $$1>q\geq  1 - e^{-\E\rho_1}$$
then both the prey and the predator go extinct with probability $1$.
\end{itemize}
We depict one example of the three possible regions in Figure \ref{fig2}.
\begin{figure}[h!tb]
 	\begin{center}
 		\includegraphics[height=2.8in]{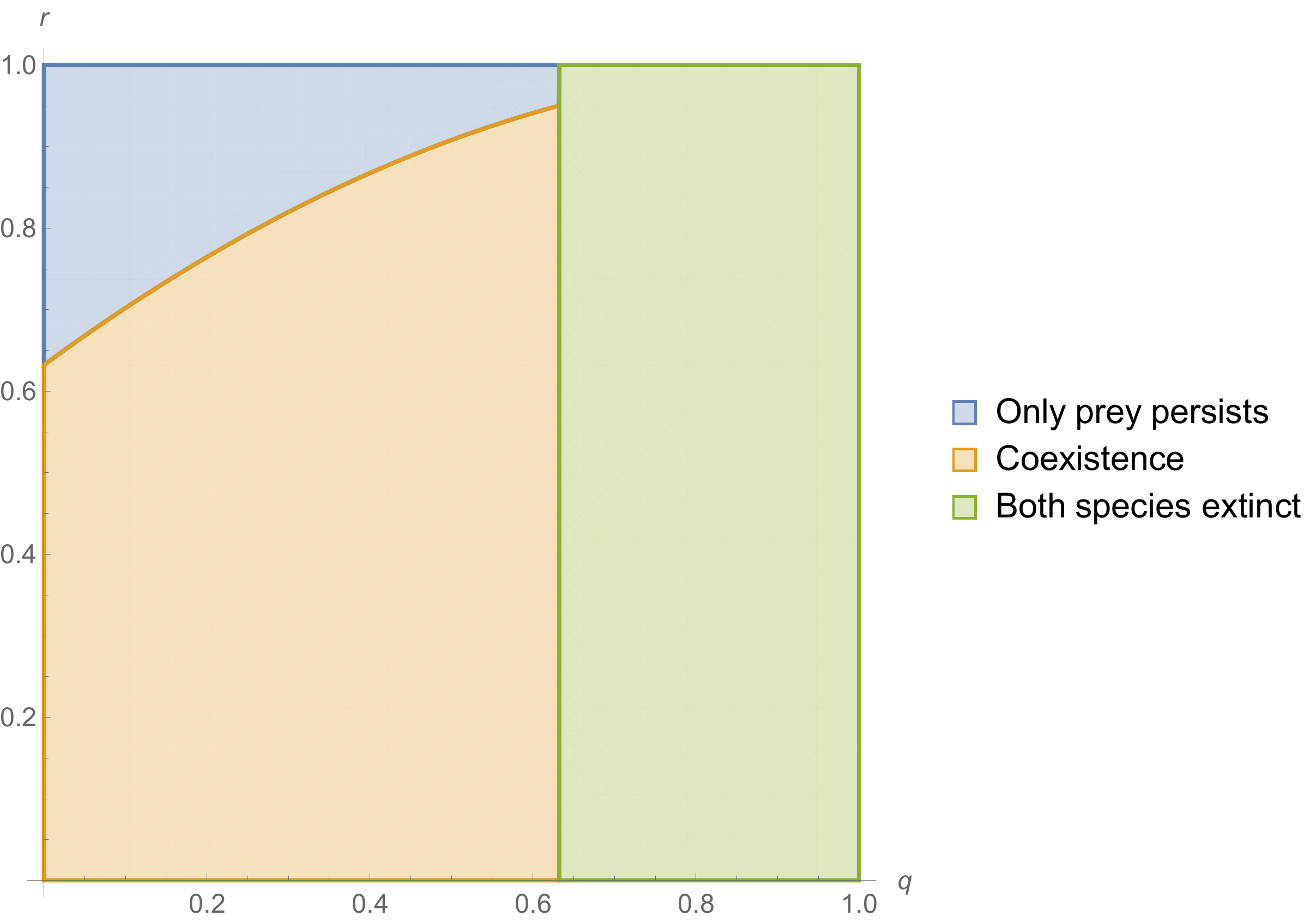}
 		\caption{The regions of the harvesting rates $q,r$ for which both species persist, for which just the prey persists and for which both species go extinct. The parameters are $\E a_1=\E d_1= \E \alpha_1 = \E \rho_1 =1, \E b_1=2, \E c_1=1.5$.} \label{fig2}\end{center}
 \end{figure}
Suppose next that the two species persist. If we set $\bar Y^1 := \int_{(0,\infty)^2} x_1\mu^{q,r}_{12}(du), \bar Y^2 := \int_{(0,\infty)^2} x_2\mu^{q,r}_{12}(du)$, since by \eqref{e:equ} the per-capita growth rates at stationarity are zero, we get
\begin{eqnarray*}
r_1(\mu^{q,r}_{12})&=& 0 = \rho_q-\E[\alpha_{1}](1-q)\bar Y^1-\E[a_{1}](1-r)\bar Y^2\\
r_2(\mu^{q,r}_{12}) &=&0=-d_r-\E[c_{1}](1-r)\bar Y^2+(1-q)\E[b_{1}]\bar Y^1.
\end{eqnarray*}
Solving this linear system we get the unique solution
\begin{equation}\label{e:Y}
\begin{split}
\bar Y^1&= \frac{\E[c_1]\rho_q+\E[a_1]d_r}{1-q}\frac{1}{\E[\alpha_1]\E[c_1]+\E[a_1]\E[b_1]}\\
\bar Y^2&=\frac{\E[b_1]\rho_q-\E[\alpha_1]d_r}{1-r}\frac{1}{\E[\alpha_1]\E[c_1]+\E[a_1]\E[b_1]}
\end{split}
\end{equation}

On the other hand, if $r_2(\mu_1^q)<0$ we get by the results from Appendix \ref{s:a2d} that $\lim_{t\to\infty}\frac{\ln Y_t^2}{t}=r_2(\mu_1^q)<0$ so that the predator goes extinct exponentially fast.
\begin{prop}\label{p:1}
If $r_2(\mu_1^q)<0$ and $r_1(\delta_0)=\rho_q>0$ the prey will persist and $\E Y_t^1\to \frac{\rho_q}{(1-q)\E[\alpha_1]}$ as $t\to\infty$.
\end{prop}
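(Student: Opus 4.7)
The plan is to identify the scenario as case (ii) of the two-species classification in Section \ref{s:2d}, deduce convergence in distribution of $Y_t^1$ to the prey-only invariant measure $\mu_1^q$, and then read off the limit of $\E Y_t^1$ from the zero invasion-rate identity at stationarity.

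First I would verify that we fall into case (ii), second sub-case. For the predator-prey Ricker model,
\[
r_2(\delta_0) = \E[\ln f_2(0,0,\xi_1)] = -\E[d_1] + \ln(1-r) = -d_r < 0,
\]
regardless of the sign of $r_2(\mu_1^q)$. Combined with $r_1(\delta_0) = \rho_q > 0$ by hypothesis and $r_2(\mu_1^q) < 0$, the classification of Section \ref{s:2d} (case (ii), second sub-case), applied to the harvested process \eqref{e:harv2d} which has the form \eqref{e:SDE} by construction \eqref{e:g2}, guarantees that for any $\BY_0 \in \Se_+$ the predator component satisfies $Y_t^2 \to 0$ almost surely, while the distribution of $Y_t^1$ converges in total variation to the unique invariant probability measure $\mu_1^q$ supported on the prey axis.

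Next I would upgrade convergence in distribution to convergence of first moments. By the a priori bound cited in the paragraph preceding Proposition \ref{p:1} (see \cite{H87, BS19}), the process $\BY_t$ eventually enters and then stays in the compact set $\mathcal{K} = [0,K]^2$; since $\mu_1^q$ is then supported in $\mathcal{K}$ and the coordinate function $(y_1,y_2) \mapsto y_1$ is bounded on $\mathcal{K}$, the final conclusion of Theorem \ref{t:pers1_disc} applied with $w(y_1,y_2) = y_1$ yields
\[
\E Y_t^1 \longrightarrow \int y_1 \, \mu_1^q(dy_1) =: \bar Y^1 \qquad \text{as } t \to \infty.
\]

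Finally I would compute $\bar Y^1$ from the stationarity (zero invasion-rate) identity. Since $\mu_1^q$ is invariant for the prey-only harvested dynamics $Y^1_{t+1} = Y^1_t \exp\bigl(\rho_{t+1} + \ln(1-q) - \alpha_{t+1}(1-q)Y^1_t\bigr)$, the identity \eqref{e:equ} in the surviving direction gives
\[
0 = r_1(\mu_1^q) = \E[\rho_1] + \ln(1-q) - (1-q)\E[\alpha_1]\,\bar Y^1 = \rho_q - (1-q)\E[\alpha_1]\,\bar Y^1,
\]
which rearranges to $\bar Y^1 = \rho_q/\bigl((1-q)\E[\alpha_1]\bigr)$, as claimed. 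The main subtlety is the passage from weak/total-variation convergence to convergence in mean; in general this requires a uniform integrability argument, but here it reduces to bounded convergence because of the eventual confinement of $\BY_t$ to $\mathcal{K}$.
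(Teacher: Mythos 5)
Your overall strategy (reduce to the extinction/persistence classification, then compute the mean of $\mu_1^q$ from the zero growth-rate identity) is different from the paper's, which never invokes distributional convergence of $Y^1_t$: instead it exploits $Y^2_t\to 0$ to sandwich $Y^1_t$ between two auxiliary one-dimensional Ricker processes, one with an extra $-a_{t+1}(1-r)\varepsilon$ in the exponent and one with the predator term deleted, computes the limiting means of both, squeezes $\liminf_t \E Y^1_t$ and $\limsup_t \E Y^1_t$ between them, and lets $\varepsilon\downarrow 0$. Your first and third steps are fine (in particular $r_2(\delta_0)=-d_r<0$ does place you in case (ii), and the identity $0=r_1(\mu_1^q)=\rho_q-(1-q)\E[\alpha_1]\bar Y^1$ is exactly how the paper computes $\int x_1\,\mu_1^q(dx_1)$).

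The gap is in your second step. The classification you cite does not deliver total-variation (or even weak) convergence of the law of $Y^1_t$ to $\mu_1^q$: in the extinction cases the precise conclusion (Appendix \ref{s:a2d}) is that almost surely $\U(\omega)=\{\mu_1\}$, i.e. the \emph{occupation measures} $\Pi_t$ converge weakly to $\mu_1^q$, together with $\tfrac{1}{t}\ln Y^2_t\to r_2(\mu_1^q)<0$. Occupation-measure convergence plus boundedness only gives Ces\`aro convergence $\tfrac{1}{T}\sum_{t\le T}\E Y^1_t\to\bar Y^1$, not $\E Y^1_t\to\bar Y^1$. Nor can you repair this by invoking the last clause of Theorem \ref{t:pers1_disc}: its hypothesis \eqref{e:pers} fails for the two-species harvested process precisely because $r_1(\mu_1^q)=0$ and $r_2(\mu_1^q)<0$, so $\max_i r_i(\mu_1^q)=0$, and its conclusion concerns the invariant measure on $\Se_+$, which does not exist here. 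Applying the theorem instead to the one-dimensional prey-only subsystem is also not legitimate as stated, because $\BY_0\in\Se_+$ has $Y^2_0>0$ and the prey's transition law is perturbed by the term $-a_{t+1}(1-r)Y^2_t$ at every finite time; controlling that vanishing perturbation is the actual content of the proposition, and it is exactly what the paper's $\varepsilon$-sandwich argument supplies. To close the gap you would need either a proof that the law of $Y^1_t$ itself converges to $\mu_1^q$ with enough uniformity to pass to means, or a comparison argument of the type the paper uses.
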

\begin{proof}Pick $\varepsilon>0$ small. Since $Y^2_t<\varepsilon$ for $t>T$ large enough, we have with high probability
\begin{equation}\label{e:conv}
\begin{split}
   Y^1_t e^{ \rho_{t+1}+\ln(1-q)-\alpha_{t+1}(1-q)Y^1_t-a_{t+1}(1-r)\varepsilon} &<Y^1_{t+1} \\
   &= Y^1_t e^{ \rho_{t+1}+\ln(1-q)-\alpha_{t+1}(1-q)Y^1_t-a_{t+1}(1-r)Y^2_t}  \\
     &<Y^1_t e^{\rho_{t+1}+\ln(1-q)-\alpha_{t+1}(1-q)Y^1_t}.
     \end{split}
\end{equation}
Define the processes $(Y_t^\varepsilon)_{t\in\Z_+}$ and $(\tilde Y_t)_{t\in\Z_+}$ via
\[
Y_{t+1}^\varepsilon= Y^\varepsilon_t e^{ \rho_{t+1}+\ln(1-q)-\alpha_{t+1}(1-q)Y^\varepsilon_t-a_{t+1}(1-r)\varepsilon}
\]
and
\[
\tilde Y^1_{t+1}=\tilde Y^1_t e^{\rho_{t+1}+\ln(1-q)-\alpha_{t+1}(1-q)\tilde Y^1_t}.
\]
By the results from Appendix \ref{s:a1} it is easy to see that
\[
\lim_{t\to\infty} \E \tilde Y^1_{t+1}= \frac{\E \rho_1 + \ln(1-q)}{\E \alpha_1}
\]
and
\[
\lim_{t\to\infty} \E Y^\varepsilon_{t+1}= \frac{\E \rho_1 + \ln(1-q)-\varepsilon}{\E \alpha_1}.
\]
By \eqref{e:conv} we get
\[
\frac{\E \rho_1 + \ln(1-q)-\varepsilon}{\E \alpha_1}=\lim_{t\to\infty} \E \tilde Y^\varepsilon_{t+1}\leq \limsup_{t\to\infty} \E Y_t^1 \leq \lim_{t\to\infty} \E \tilde Y^1_{t+1}= \frac{\E \rho_1 + \ln(1-q)}{\E \alpha_1}
\]
and
\[
\frac{\E \rho_1 + \ln(1-q)-\varepsilon}{\E \alpha_1}=\lim_{t\to\infty} \E \tilde Y^\varepsilon_{t+1}\leq \liminf_{t\to\infty} \E Y_t^1 \leq \lim_{t\to\infty} \E \tilde Y^1_{t+1}= \frac{\E \rho_1 + \ln(1-q)}{\E \alpha_1}.
\]
Letting $\varepsilon\downarrow 0$ yields
\[
\lim_{t\to\infty}\E Y_t^1= \frac{\rho_q}{(1-q)\E[\alpha_1]}
\]
which finishes the proof.
\end{proof}

Suppose we want to find the optimal harvesting strategy. Since the profit from harvesting prey or predators might be different we let $\beta> 0$ represent the relative value of the predator compared to the prey.
The problem then becomes maximizing the function
$$
H(q,r):=\lim_{T\to \infty} \E[h_1( Y^1_T)+\beta h_2(Y^2_T)] = \lim_{T\to \infty} \E[q Y^1_T+\beta r Y^2_T]=\lim_{T\to \infty} \frac{\sum_{n=1}^T q Y^1_n+r\beta Y^2_n}{T} = q \bar Y^1 + r\beta \bar Y^2,
$$
for $q,r\in[0,1]^2$.  Using the expressions for $\bar Y^1$, $\bar Y^2$ from \eqref{e:Y} together with Proposition \ref{p:1}, with the understanding that we set $\bar Y^1=\frac{\rho_q}{(1-q)\E[\alpha_1]}, \bar Y^2=0$ if the prey persists and the predator goes extinct, and the domain regions identified above we get
$$
H(q,r) = \left\{
        \begin{array}{ll}
            \left(\frac{\E[c_1]q\rho_q+\E[a_1]qd_r}{1-q} + \frac{\beta\E[b_1]r\rho_q-\beta\E[\alpha_1]rd_r}{1-r}\right)\frac{1}{\E[\alpha_1]\E[c_1]+\E[a_1]\E[b_1]} & \quad  r<r_{\max}, 0<q< 1 - e^{-\E\rho_1} \\
           \frac{q\rho_q}{(1-q)\E[\alpha_1]} & \quad r\geq r_{\max}, 0<q< 1 - e^{-\E\rho_1}\\
           0&\quad q\geq 1 - e^{-\E\rho_1}.
        \end{array}
    \right.
$$

\begin{figure}[h!tb]
 	\begin{center}
 		\includegraphics[height=2.8in]{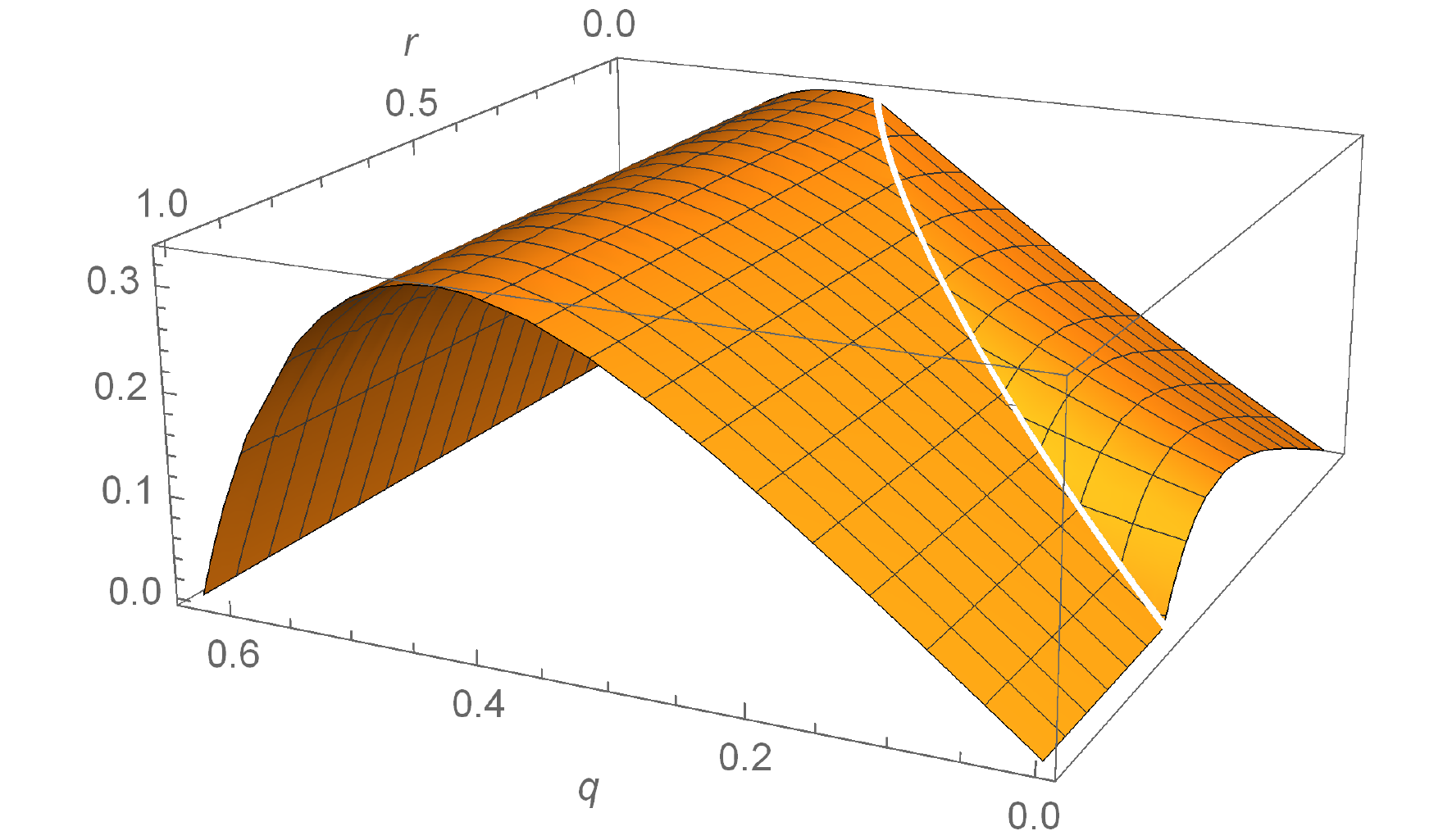}
 		\caption{The graph of $H(q,r)$. The parameters are $\beta = \E a_1=\E d_1= \E \alpha_1 = \E \rho_1 =1, \E b_1=2, \E c_1=1.5$.} \label{fig3}\end{center}
 \end{figure}
\begin{figure}[h!tb]
 	\begin{center}
 		\includegraphics[height=2.8in]{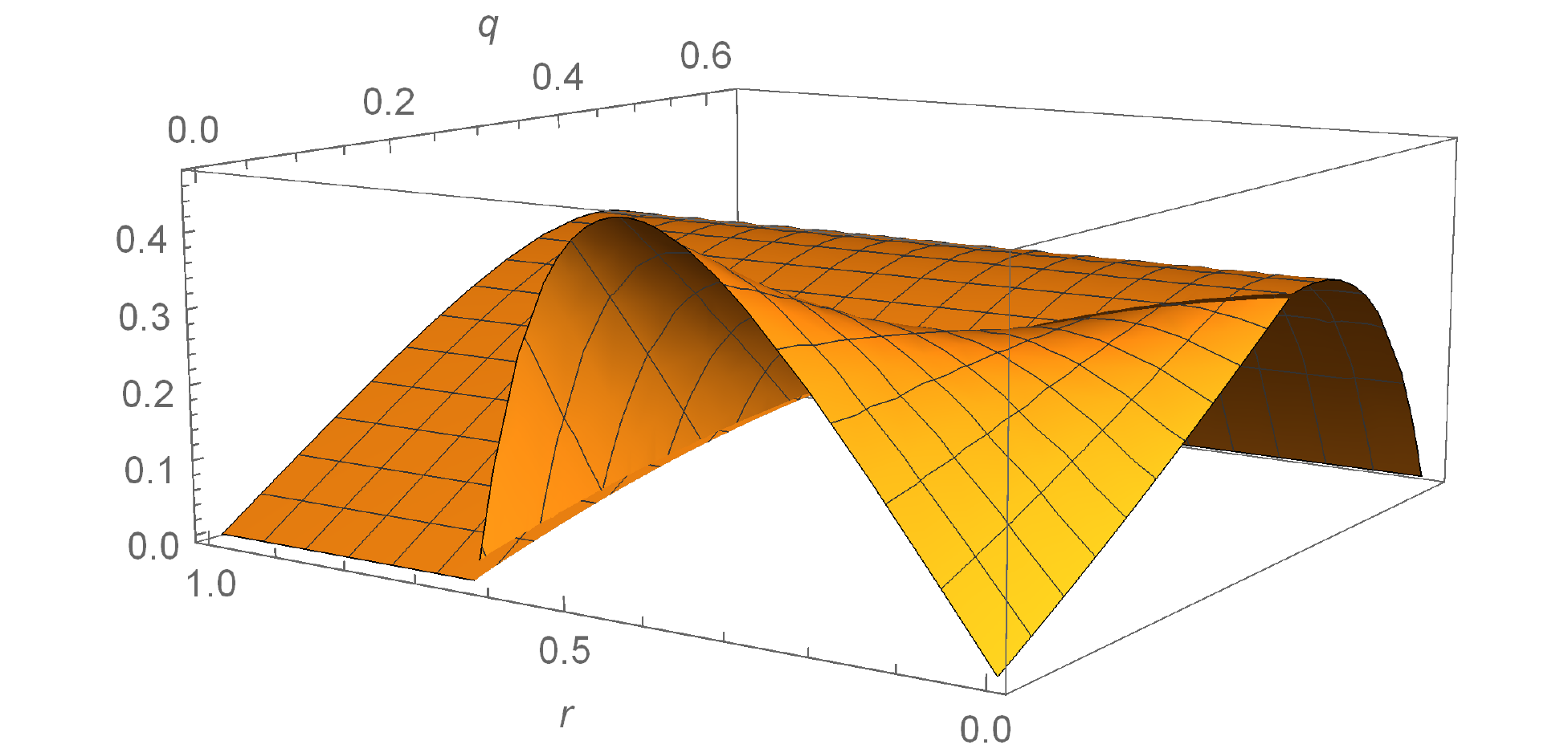}
 		\caption{The graph of $H(q,r)$. The parameters are $\beta = 5, \E a_1=\E d_1= \E \alpha_1 = \E \rho_1 =1, \E b_1=2, \E c_1=1.5$.} \label{fig4}\end{center}
 \end{figure}

 \textit{\textbf{Biological interpretation:} It is never optimal to harvest both the predator and the prey. If the relative price $\beta$ of the predator compared to the prey is low, it is always optimal to harvest the predator to extinction (see Figure \ref{fig3}). This then lets the prey population increase, and one gains by harvesting the prey. If instead the relative price $\beta$ is high, it is optimal to never harvest the prey (see Figure \ref{fig4}). This leads to an increase in the predator population, which then increases the harvesting yield of the predators. }

\subsection{Two dimensional Lotka-Volterra competition mode.}
We look at a two-species discrete Lotka-Volterra competition model when the two species get harvested proportionally at rates $q,r\in (0,1)$. The harvested dynamics is given by
\begin{eqnarray*}
Y^1_{t+1} &=& Y^1_t \exp \left(\rho^1_{t+1}+\ln(1-q)-\alpha_{t+1}(1-q)Y^1_t-a_{t+1}(1-r)Y^2_t\right)\\
Y^2_{t+1}&=& Y^2_t \exp \left(\rho^2_{t+1}+\ln(1-r)-c_{t+1}(1-r)Y^2_t-b_{t+1}(1-q)Y^1_t\right)
\end{eqnarray*}
We set $\bar\rho^1_q:=\E[\rho^1_1]+\ln(1-q)$ and $\bar\rho^2_q:=\E[\rho^2_1]+\ln(1-r)>0$. We assume the different random coefficients are independent and form sequences of i.i.d random variables. Furthermore, we make the same assumptions that were made in the predator-prey system. These ensure that the state space is compact and that the boundaries are accessible.

 As long as $\bar\rho^1_q, \bar\rho^2_r>0$, by our previous results, there exists a unique stationary distribution $\mu^q_1$ (respectively $\mu^r_2$) on $(0,\infty)\times \{0\}$ (respectively $\{0\}\times (0,\infty) $) and
$$
 \int x_1\,\mu^q_1(dx_1) =\frac{\bar\rho^1_q}{(1-q)\E \alpha_1} = \frac{\E[\rho^1_1]+\ln(1-q)}{(1-q)\E \alpha_1},
$$
$$
 \int x_2\,\mu^r_2(dx_2) =\frac{\bar\rho^2_r}{(1-r)\E c_1} = \frac{\E[\rho^2_1]+\ln(1-r)}{(1-r)\E c_1}.
$$
One can then compute the per-capita growth rates
$$
r_1(\mu^r_2) = \bar \rho_q^1 - \E a_1 (1-r) \int x_2\,\mu^r_2(dx_2) = (\E[\rho^1_1]+\ln(1-q)) -\E a_1 \frac{\E \rho^2_1 + \ln(1-r)}{\E c_1},
$$
and
$$
r_2(\mu^q_1) = \bar \rho_r^2 - \E b_1 (1-q) \int x_1\,\mu^q_1(dx_1) = (\E[\rho^2_1]+\ln(1-r)) -\E b_1 \frac{\E \rho^1_1 + \ln(1-q)}{\E \alpha_1}.
$$

We get the following classification of the dynamics:
\begin{itemize}
  \item If
   \begin{equation*}
\begin{split}
0&<q< 1 - e^{-\E\rho^1_1}\\
0&<r< 1 - e^{-\E\rho^2_1}\\
 \frac{\E a_1}{\E c_1} &<  \frac{\E \rho^1_1 +\ln(1-q)}{\E \rho^2_1 + \ln(1-r)} < \frac{\E \alpha_1}{\E b_1}
\end{split}
\end{equation*}
then the two species coexist and the process converges to its unique invariant probability measure.
  \item If
  \begin{equation*}
\begin{split}
0&<q< 1 - e^{-\E\rho^1_1}\\
0&<r< 1 - e^{-\E\rho^2_1}\\
 \frac{\E a_1}{\E c_1} &<  \frac{\E \rho^1_1 +\ln(1-q)}{\E \rho^2_1 + \ln(1-r)}\\
  \frac{\E b_1}{\E \alpha_1} &>  \frac{\E \rho^2_1 +\ln(1-r)}{\E \rho^1_1 + \ln(1-q)}
\end{split}
\end{equation*}
then species 1 persists and species 2 goes extinct with probability $1$.
 \item If
   \begin{equation*}
\begin{split}
0&<q< 1 - e^{-\E\rho^1_1}\\
0&<r< 1 - e^{-\E\rho^2_1}\\
 \frac{\E a_1}{\E c_1} &>  \frac{\E \rho^1_1 +\ln(1-q)}{\E \rho^2_1 + \ln(1-r)}\\
  \frac{\E b_1}{\E \alpha_1} &<  \frac{\E \rho^2_1 +\ln(1-r)}{\E \rho^1_1 + \ln(1-q)}
\end{split}
\end{equation*}
then species 2 persists and species 1 goes extinct with probability $1$.
 \item If $(Y_0^1, Y_0^2)=(x,y)\in (0,\infty)^2$ then let $p_{x,y} = \PP(Y_t^1\to \mu_1^q, Y_t^2\to 0~|~(Y_0^1, Y_0^2)=(x,y))$. If
\begin{equation*}
\begin{split}
0&<q< 1 - e^{-\E\rho^1_1}\\
0&<r< 1 - e^{-\E\rho^2_1}\\
 \frac{\E a_1}{\E c_1} &>  \frac{\E \rho^1_1 +\ln(1-q)}{\E \rho^2_1 + \ln(1-r)}\\
  \frac{\E b_1}{\E \alpha_1} &>  \frac{\E \rho^2_1 +\ln(1-r)}{\E \rho^1_1 + \ln(1-q)}
\end{split}
\end{equation*}
then we have bistability, that is, $p_{x,y}\in (0,1)$ and $1-p_{x,y}= \PP(Y_t^2\to \mu_2^r, Y_t^1\to 0~|~(Y_0^1, Y_0^2)=(x,y))$.
\end{itemize}
Note that for a given set of coefficients one cannot have all the 4 regions if we vary $q$ and $r$. There are two possibilities, each having three regions. One is to have coexistence, the persistence of species 1 and extinction of species 2, or the extinction of species 1 and the persistence of species 2 (see Figure \ref{fig5}). The other possibility is to have bistability, the persistence of species 1 and extinction of species 2, or the extinction of species 1 and the persistence of species 2 (see Figure \ref{fig6}).

\textit{\textbf{Biological interpretation:} Harvesting can facilitate the coexistence of the two species. For example, suppose that
\[
 \frac{\E a_1}{\E c_1} < \frac{\E \alpha_1}{\E b_1}<  \frac{\E \rho^1_1 }{\E \rho^2_1 }.
\]
This implies that if there is no harvesting species $1$ persists and species $2$ goes extinct. It is clear that there exists $q\in (0,1)$ such that
\[
 \frac{\E a_1}{\E c_1}<  \frac{\E \rho^1_1 +\ln(1-q)}{\E \rho^2_1  } < \frac{\E \alpha_1}{\E b_1}
\]
which leads to coexistence. If one species has a competitive advantage so that without harvesting it drives the other competitor extinct, one can harvest this dominant species and get coexistence.
}
\begin{figure}[h!tb]
 	\begin{center}
 		\includegraphics[height=2.8in]{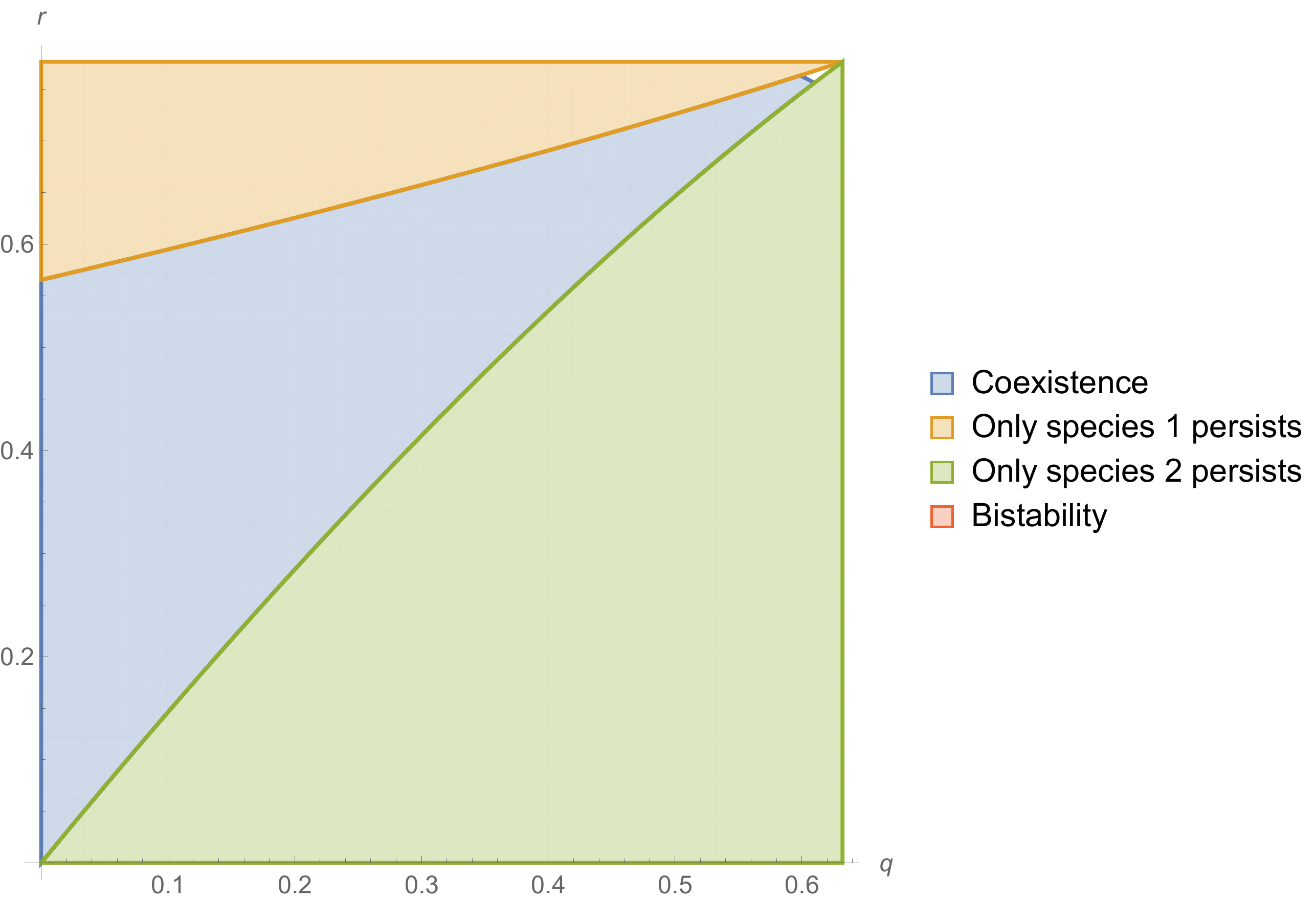}
 		\caption{Figure showing the regions of the harvesting rates $q,r$ for which both species persist, for which just the species 1 persists, and the region for which just species 2 persists. The parameters are $\E a_1=\E b_1=1, \E c_1=\E \alpha_1=1.5, \E \rho^1_1 =1, \E \rho^2_1 =1.5$.} \label{fig5}\end{center}
 \end{figure}

\begin{figure}[h!tb]
 	\begin{center}
 		\includegraphics[height=2.8in]{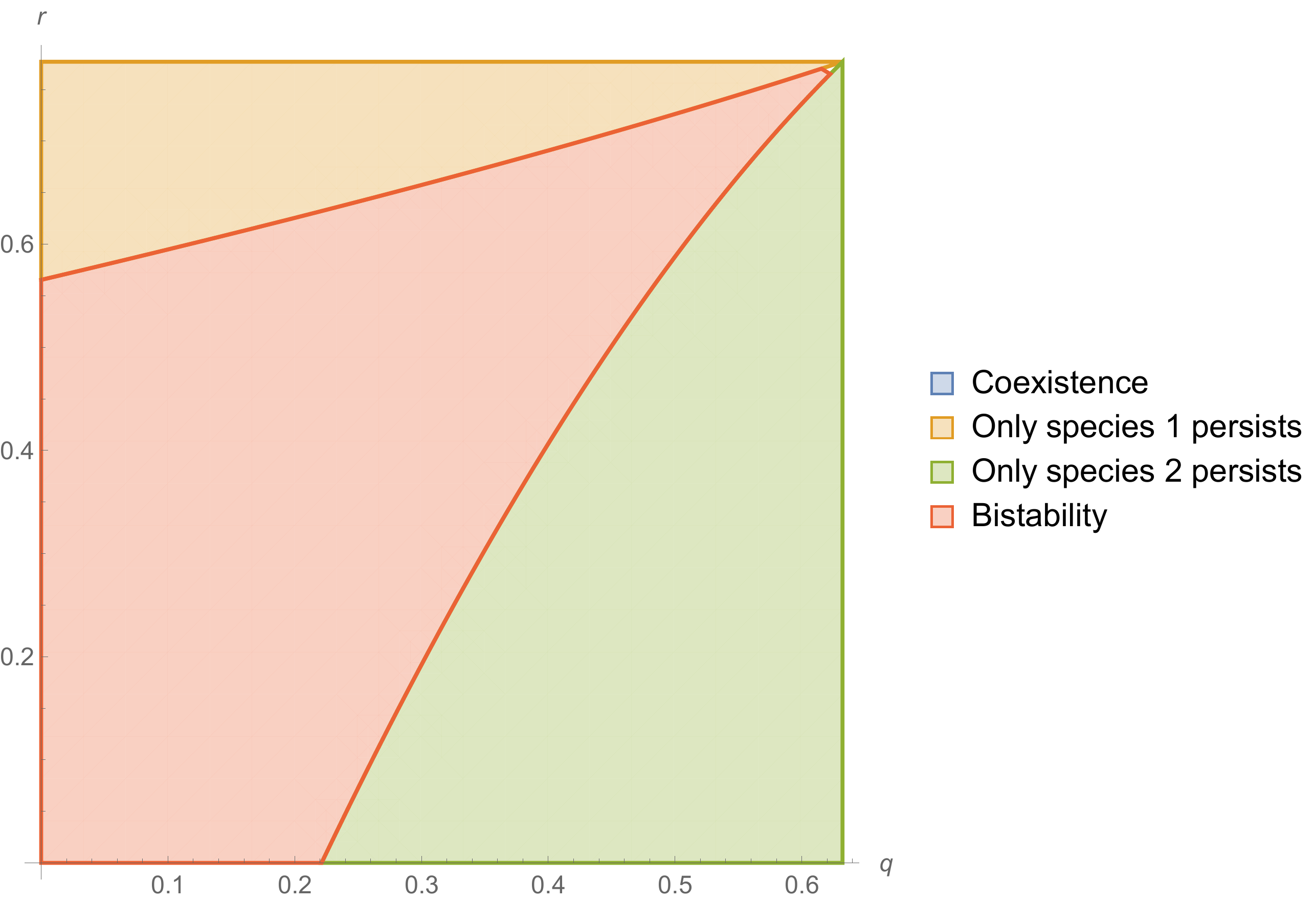}
 		\caption{Figure showing the regions of the harvesting rates $q,r$ for which there is bistability, for which just the species 1 persists, and the region for which just species 2 persists. The parameters are $\E a_1=1.5, \E b_1=2, \E c_1=1, \E \alpha_1=1, \E \rho^1_1 =1, \E \rho^2_1 =1.5$.} \label{fig6}\end{center}
 \end{figure}
If there is coexistence and the system converges to an invariant probability measure $\mu_{12}^{q, r}$ on $(0,\infty)^2$ we see by \eqref{e:equ} that the per-capita growth rates at stationarity are zero. This shows that
\begin{eqnarray*}
r_1(\mu_{12})=0&=& \bar \rho_q^1-\E[\alpha_{1}](1-q)\bar Y^1-\E[a_{1}](1-r)\bar Y^2\\
r_2(\mu_{12})=0&=&  \bar \rho_r^2-\E[c_{1}](1-r)\bar Y^2-(1-q)\E[b_{1}]\bar Y^1,
\end{eqnarray*}
where as before $\bar Y^1 = \int_{(0,\infty)^2} x_1\mu^{q,r}_{12}(du), \bar Y^2= \int_{(0,\infty)^2} x_2\mu^{q,r}_{12}(du)$ are the expected values of the two species at stationarity.
Solving this linear system yields the unique solution
\begin{eqnarray*}
\bar Y^1&=& \frac{\E[c_1]\bar \rho^1_q-\E[a_1]\bar\rho^2_r}{1-q}\frac{1}{\E[\alpha_1]\E[c_1]-\E[a_1]\E[b_1]}\\
\bar Y^2&=&\frac{\E[\alpha_1]\bar \rho^2_r-\E[b_1]\bar \rho^1_q}{1-r}\frac{1}{\E[\alpha_1]\E[c_1]-\E[a_1]\E[b_1]}.
\end{eqnarray*}
One can prove the following analogue of Proposition \ref{p:1}.
\begin{prop}\label{p:2}
If $r_2(\mu^q_1)<0$ and $r_1(\mu^r_2)>0$ then species 1 will persist and $\E Y_t^1\to \frac{\bar\rho^1_q}{(1-q)\E[\alpha_1]}$ as $t\to\infty$.
If $r_1(\mu^r_2) <0$ and $r_2(\mu^q_1)>0$ then species 2 will persist and $\E Y_t^2\to \frac{\bar \rho^2_r}{(1-r)\E[c_1]}$ as $t\to\infty$.
\end{prop}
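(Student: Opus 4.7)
The plan is to mirror the proof of Proposition \ref{p:1} essentially step by step. By the symmetry $(1,q,\alpha,\rho^1,a)\leftrightarrow (2,r,c,\rho^2,b)$ between the two species, it suffices to establish the first statement: if $r_2(\mu_1^q)<0$ and $r_1(\mu_2^r)>0$, then $\lim_{t\to\infty}\E Y_t^1 = \bar\rho^1_q/((1-q)\E\alpha_1)$. First I would invoke the classification of two-species dynamics in Section \ref{s:2d} (whose proof is in Appendix \ref{s:a2d}): under the stated hypotheses, together with $\bar\rho^1_q,\bar\rho^2_r>0$, the competitor dies out exponentially fast, i.e., with probability one $t^{-1}\ln Y_t^2 \to r_2(\mu_1^q)<0$. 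In particular, given any $\varepsilon>0$, on a high-probability event there is a random time $T_\varepsilon$ past which $Y_t^2<\varepsilon$.

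Next comes the sandwich argument. On the event $\{Y_t^2<\varepsilon\}$, positivity of $a_{t+1}(1-r)$ yields
\[
Y^1_t\, e^{\rho^1_{t+1}+\ln(1-q)-\alpha_{t+1}(1-q)Y^1_t - a_{t+1}(1-r)\varepsilon}
< Y^1_{t+1}
< Y^1_t\, e^{\rho^1_{t+1}+\ln(1-q)-\alpha_{t+1}(1-q)Y^1_t},
\]
and I would introduce one-dimensional Ricker processes $(Y^\varepsilon_t)$ and $(\tilde Y^1_t)$ realizing these bounds with equality. Both are single-species harvested Ricker models, so the analysis of Section \ref{s:1s} (together with Theorem \ref{t:pers1_disc}) shows they converge to unique stationary laws whose expectations satisfy
\[
\lim_{t\to\infty} \E \tilde Y^1_t = \frac{\bar\rho^1_q}{(1-q)\E\alpha_1},
\qquad
\lim_{t\to\infty} \E Y^\varepsilon_t = \frac{\bar\rho^1_q - \E a_1(1-r)\varepsilon}{(1-q)\E\alpha_1}.
\]
Combining the two pathwise inequalities with these limits and sending $\varepsilon\downarrow 0$ will deliver the conclusion.

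The only delicate point, and the main technical obstacle, is promoting the pathwise sandwich (which holds only on the high-probability event $\{Y_s^2<\varepsilon \text{ for all } s\geq T_\varepsilon\}$) to an inequality between unconditional expectations. I would handle this exactly as in Proposition \ref{p:1}, exploiting that the state space lies in the compact set $\mathcal{K}=[0,K]^2$: the uniform bound $Y_t^1\leq K$ together with the exponential extinction of $Y^2$ controls the error contributed by the complementary event and gives
\[
\frac{\bar\rho^1_q-\E a_1(1-r)\varepsilon}{(1-q)\E\alpha_1}
\leq \liminf_{t\to\infty}\E Y_t^1
\leq \limsup_{t\to\infty}\E Y_t^1
\leq \frac{\bar\rho^1_q}{(1-q)\E\alpha_1}.
\]
Letting $\varepsilon\downarrow 0$ concludes the first statement, and the second follows immediately by interchanging the roles of the two species as indicated above.
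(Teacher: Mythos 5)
Your proposal is correct and follows essentially the same route the paper intends: the paper gives no separate argument for Proposition \ref{p:2}, stating only that it is the analogue of Proposition \ref{p:1}, and your adaptation of that sandwich argument (exponential extinction of the losing competitor, $\varepsilon$-perturbed one-dimensional Ricker comparisons, then $\varepsilon\downarrow 0$) is exactly the intended proof. If anything, you are more careful than the paper in handling the passage from the high-probability pathwise bounds to unconditional expectations via the compact state space, and your stationary means carry the correct $(1-q)$ factor that the displayed limits in the proof of Proposition \ref{p:1} omit.
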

Suppose the coefficients are such that the coexistence of the two species is possible. We are interested in maximizing the function
$$
H(q,r)=\lim_{T\to \infty} \E[q Y^1_T+\beta rY^2_T] =\lim_{T\to \infty} \frac{\sum_{n=1}^T q Y^1_n+r\beta Y^2_n}{T} = q \bar Y^1 + r\beta \bar Y^2,
$$
where $\beta> 0$ represents the relative value of species 2 compared to species 1. Using the expressions for $\bar Y^1$, $\bar Y^2$, together with Proposition \ref{p:2} and the domain regions identified above we get
$$
H(q,r) = \left\{
        \begin{array}{ll}
            \left(q\frac{\E[c_1]\bar \rho^1_q-\E[a_1]\bar\rho^2_r}{1-q} + \beta r\frac{\E[\alpha_1]\bar \rho^2_r-\E[b_1]\bar \rho^1_q}{1-r}\right)\frac{1}{\E[\alpha_1]\E[c_1]-\E[a_1]\E[b_1]} & \quad  r_2(\mu^q_1)>0, r_1(\mu^r_2)>0 \\
          q \frac{\bar\rho^1_q}{(1-q)\E[\alpha_1]} & \quad r_2(\mu^q_1)<0, r_1(\mu^r_2)>0\\
           \beta r\frac{\bar\rho^2_r}{(1-r)\E[c_1]} & \quad r_2(\mu^q_1)>0, r_1(\mu^r_2)<0\\
           0&\quad q\geq 1 - e^{-\E\rho^1_1}, r\geq 1 - e^{-\E\rho^2_1}.
        \end{array}
    \right.
$$

\begin{figure}[h!tb]
 	\begin{center}
 		\includegraphics[height=2.8in]{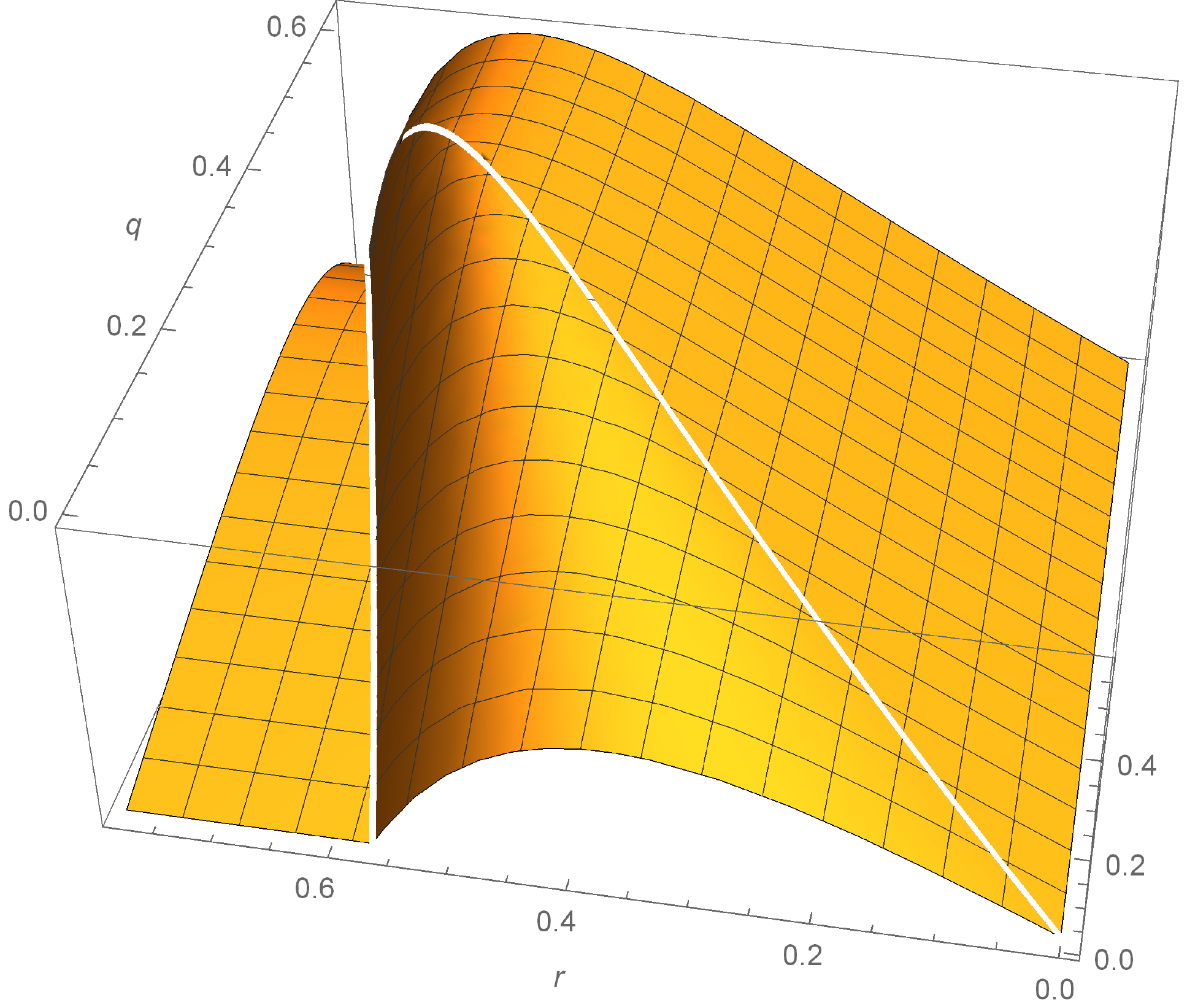}
 		\caption{The graph of $H(q,r)$. The parameters are $\beta=1, \E a_1=\E b_1=1, \E c_1=\E \alpha_1=1.5, \E \rho^1_1 =1, \E \rho^2_1 =1.5$} \label{fig7}\end{center}
 \end{figure}

\begin{figure}[h!tb]
 	\begin{center}
 		\includegraphics[height=2.8in]{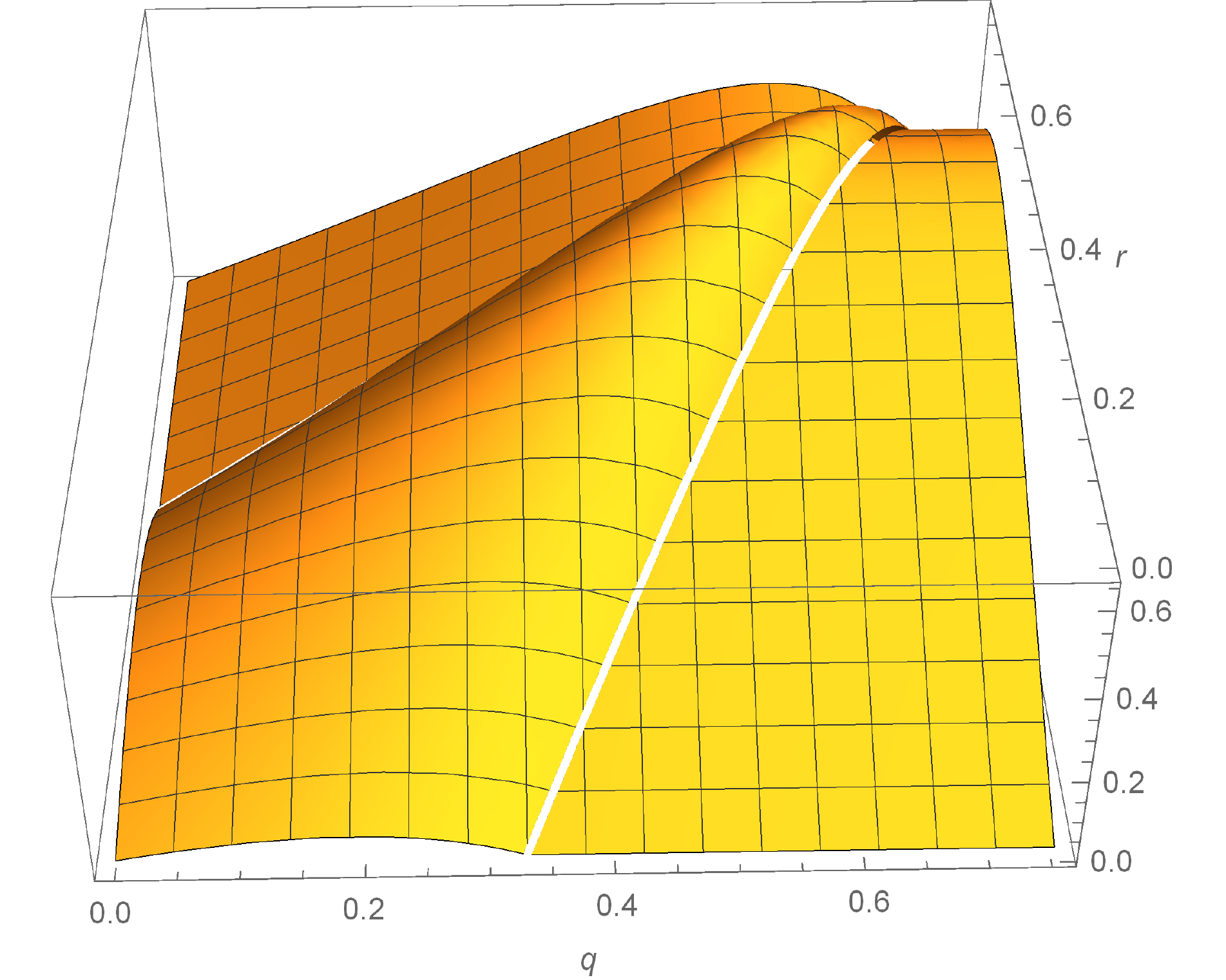}
 		\caption{The graph of $H(q,r)$. The parameters are $\beta=1, \E a_1=\E b_1=1, \E c_1=\E \alpha_1=1.5, \E \rho^1_1 =1.4, \E \rho^2_1 =1.5$} \label{fig8}\end{center}
 \end{figure}

 \begin{figure}[h!tb]
 	\begin{center}
 		\includegraphics[height=2.8in]{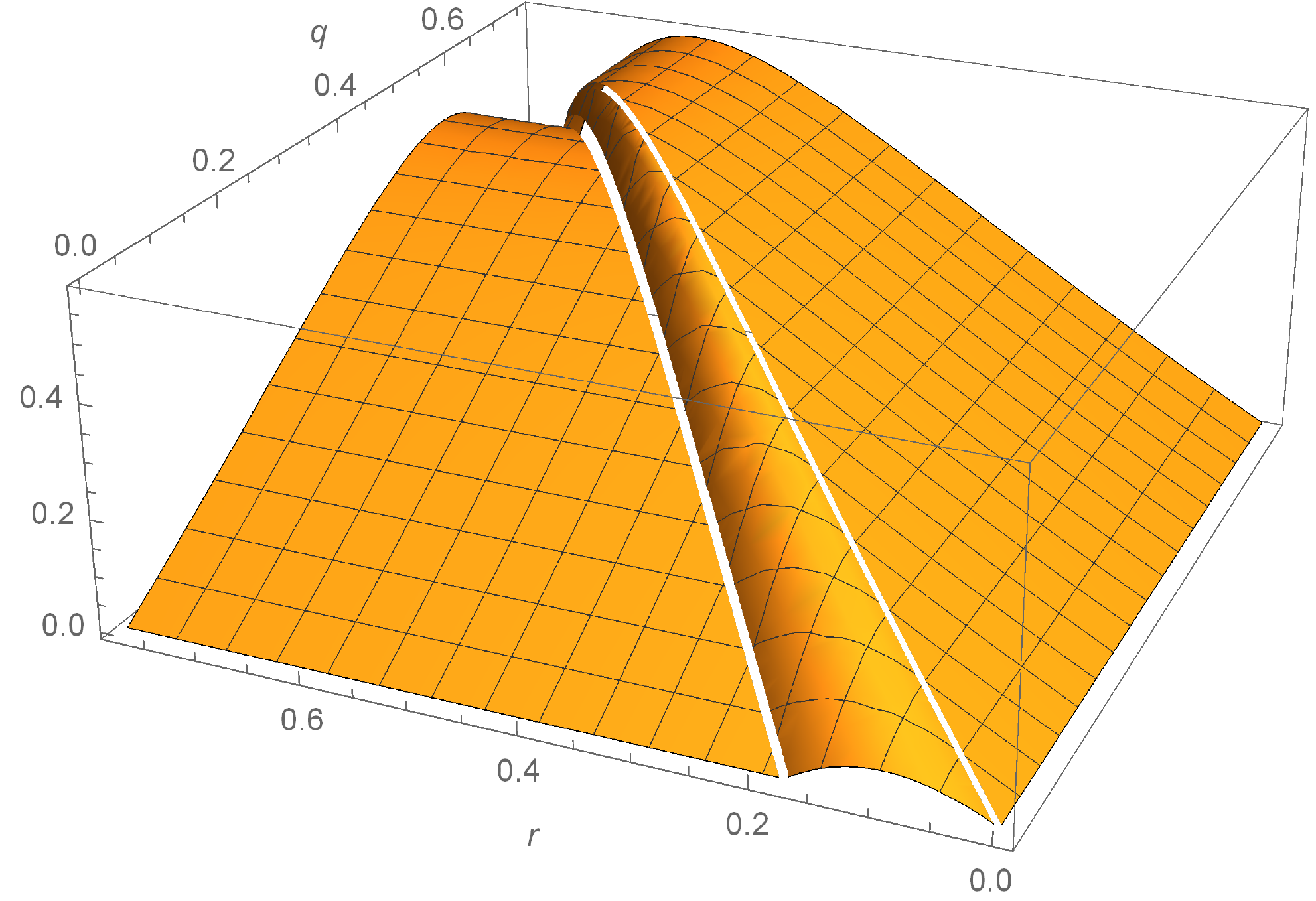}
 		\caption{The graph of $H(q,r)$. The parameters are $\beta=1, \E a_1=\E b_1=1.4, \E c_1=\E \alpha_1=1.5, \E \rho^1_1 =1.4, \E \rho^2_1 =1.5$} \label{fig9}\end{center}
 \end{figure}
\textit{\textbf{Biological interpretation:} Depending on the interaction coefficients, growth rates, and the relative value of the species there are three possible scenarios for the optimal harvesting strategy. In one case we harvest species $1$ to extinction and maximize the yield from harvesting species $2$. In other instances it is best to harvest species $2$ to extinction and maximize the harvest from species $1$. The third instance is the one of coexistence: the optimal harvesting strategy is to keep both species alive. In Figure \ref{fig7} we can see, that since the growth rate of species $2$ is greater than that of species $1$, while the other coefficients are identical, it is optimal to harvest species $1$ to extinction and to get a higher harvesting yield from species $2$. In the example from Figure \ref{fig8}, when the species are similar to each other, it is optimal to keep both species alive. However, once we increase the competition, it becomes optimal to drive one species extinct through harvesting (see Figure \ref{fig9}). These examples show that there is a delicate balance one has to take into account when looking for the optimal harvesting strategies. The intra- and interspecific competition rates, growth rates, and the prices of the species turn out to play key roles.
}

\textbf{Acknowledgements:} The author thanks Dang Nguyen and Sergiu Ungureanu for helpful discussions related to the paper.
\bibliographystyle{agsm}
\bibliography{harvest}

\appendix
\section{Criteria for persistence and extinction}\label{s:a1}
\subsection{Single species system}
Suppose we have one species whose dynamics is given by
\begin{equation}\label{e:single11}
X_{t+1} = X_t  f(X_t, \xi_{t+1})
\end{equation}
We present a few known results which give the existence of a unique invariant probability measure. These results appear in work by \cite{E84, E89,VH97, FH02, S12}.
\begin{thm}\label{t:pers_single1}
Assume that $F(x,\xi)=xf(x,\xi)$ is continuously differentiable and strictly increasing in $x$, and $f(x,\xi)$ is strictly decreasing in $x$. If $\E[\ln f(0,\xi_1)]>0$ and $\lim_{x\to\infty} \E[\ln f(x,\xi_1)]<0$, then there exists a positive invariant probability measure $\mu$ and the distribution of $X_t$ converges weakly to $\mu$ whenever $X_0=x>0$.
\end{thm}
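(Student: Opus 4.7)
The plan is to work on the log scale to convert the multiplicative dynamics into an additive random walk with state-dependent drift. Set $L_t := \ln X_t$, so that $L_{t+1} = L_t + \phi(L_t, \xi_{t+1})$ with $\phi(\ell, \xi) := \ln f(e^\ell, \xi)$. Since $f(\cdot, \xi)$ is strictly decreasing, so is $\phi(\cdot, \xi)$, and by dominated convergence the mean drift $\bar\phi(\ell) := \E\,\phi(\ell, \xi_1)$ is continuous and strictly decreasing in $\ell$, with $\lim_{\ell \to -\infty} \bar\phi(\ell) = \E[\ln f(0, \xi_1)] > 0$ and $\lim_{\ell \to \infty} \bar\phi(\ell) < 0$ by hypothesis. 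Consequently $\bar\phi$ has a unique zero $\ell^\ast$, and there exist $a < \ell^\ast < b$ and $\delta > 0$ with $\bar\phi(\ell) \geq \delta$ on $(-\infty, a]$ and $\bar\phi(\ell) \leq -\delta$ on $[b, \infty)$.

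The next step is to establish tightness of the family $\{\mathrm{Law}(L_t)\}_{t\geq 0}$. Using the Foster--Lyapunov function $V(\ell) = (\ell - \ell^\ast)^2$ (or $|\ell - \ell^\ast|$ under slightly weaker integrability), the drift bounds above together with integrability of $\phi(\cdot,\xi_1)$ on compacts give $\E[V(L_{t+1}) - V(L_t) \mid L_t = \ell] \leq -\eps$ for $|\ell - \ell^\ast|$ sufficiently large, and bounded otherwise. Standard Meyn--Tweedie arguments then yield tightness of $\{L_t\}$ on $\R$, equivalently of $\{X_t\}$ in $(0, \infty)$ (bounded away from $0$ and $\infty$ in distribution). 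Because $F$ is continuous in $x$, $X_t$ is a Feller chain, and Krylov--Bogolyubov applied to the time averages $\frac{1}{T}\sum_{t=0}^{T-1} P^t(x, \cdot)$ produces at least one invariant probability measure $\mu$ supported on $(0, \infty)$.

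For uniqueness and weak convergence I would exploit the monotonicity of $F(\cdot,\xi)$. Under the synchronous coupling (same noise sequence $(\xi_t)$), $x \leq y$ implies $X_t^{(x)} \leq X_t^{(y)}$ a.s., so the difference $\Delta_t := L_t^{(y)} - L_t^{(x)} \geq 0$ satisfies
\[
\Delta_{t+1} = \Delta_t + \bigl[\phi(L_t^{(y)}, \xi_{t+1}) - \phi(L_t^{(x)}, \xi_{t+1})\bigr] \leq \Delta_t
\]
by strict monotonicity of $\phi(\cdot, \xi)$. Hence $\Delta_t$ is a.s. non-increasing with limit $\Delta_\infty \geq 0$. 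To conclude $\Delta_\infty = 0$ I would combine the tightness of the coupled pair with a quantitative contraction estimate: for any compact $K \subset \R$ and $\eps > 0$,
\[
\sup_{\substack{\ell,\ell' \in K \\ \ell - \ell' \geq \eps}} \E\bigl[\phi(\ell', \xi_1) - \phi(\ell, \xi_1)\bigr] \leq -\eta(K,\eps)
\]
for some $\eta(K,\eps) > 0$, which follows from continuous differentiability of $F$ together with the strict decrease of $f(\cdot, \xi)$. Because the Foster--Lyapunov inequality makes the coupled chain positive recurrent and visits a compact neighborhood of $(\ell^\ast, \ell^\ast)$ with positive frequency, on the event $\{\Delta_\infty \geq \eps\}$ the accumulated expected decrements of $\Delta_t$ would diverge, contradicting $\Delta_\infty \geq 0$. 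Thus $\Delta_\infty = 0$ a.s., and this contractivity immediately yields uniqueness of $\mu$ (couple two chains started from any two invariant measures) and weak convergence of $X_t^{(x)}$ to $\mu$ for every $x > 0$ (couple with a chain started from $\mu$ and apply $\Delta_t \to 0$ plus tightness).

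The main obstacle will be the last step: converting the a.s. monotone convergence of $\Delta_t$ into $\Delta_\infty = 0$. The strict monotonicity of $f(\cdot,\xi)$ is only pointwise in $\xi$, and upgrading it to a uniform contraction $\eta(K,\eps) > 0$ on compacts requires care, as does the Borel--Cantelli-type argument that summed decrements blow up under positive recurrence. Everything else---the log transform, the drift estimates, Krylov--Bogolyubov---is fairly standard, but the coupling contraction is where the assumptions $F$ strictly increasing, $f$ strictly decreasing, and $F \in C^1$ are all used together.
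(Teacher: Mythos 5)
The paper does not actually prove this theorem: it is stated in Appendix~A as a known result, with the proof deferred to the cited literature (Ellner 1984, 1989; Vellekoop--H\"ogn\"as 1997; Fagerholm--H\"ogn\"as 2002; Schreiber 2012), so there is no in-paper argument to compare against line by line. The classical proofs in those references exploit the stochastic monotonicity coming from $F(\cdot,\xi)$ increasing in a slightly different way than you do: the chain started near $0$ is stochastically increasing in $t$ and the chain started from a large value is stochastically decreasing, which produces minimal and maximal invariant measures as monotone limits; the boundary growth-rate conditions then force these to coincide. Your synchronous-coupling contraction is a legitimate alternative route, and the overall architecture --- log transform, drift bounds from the two limit conditions, Foster--Lyapunov tightness, Krylov--Bogolyubov, the order-preserving coupling giving $0\le\Delta_{t+1}\le\Delta_t$ (this is exactly where $F$ increasing and $f$ decreasing combine), and summability of the expected decrements to force $\Delta_\infty=0$ --- is sound.

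Two points need attention. First, a sign slip: with $\ell-\ell'\ge\eps$ and $\phi(\cdot,\xi)$ decreasing, $\E[\phi(\ell',\xi_1)-\phi(\ell,\xi_1)]$ is \emph{positive}; the quantity you want to bound above by $-\eta(K,\eps)$ is the expected one-step change of $\Delta_t$, namely $\E[\phi(\ell,\xi_1)-\phi(\ell',\xi_1)]$ with $\ell=L_t^{(y)}\ge L_t^{(x)}=\ell'$. With that correction the estimate does hold, by continuity and strict monotonicity of $\bar\phi$ on a compact set; no $C^1$ regularity is actually needed for it. Second, and more substantively, the Foster--Lyapunov step with $V(\ell)=(\ell-\ell^*)^2$ or $|\ell-\ell^*|$, as well as the continuity of $\bar\phi$ and the identification of its limits via dominated or monotone convergence, require moment or uniform-integrability control of $\ln f(x,\xi_1)$ (for instance a locally dominating integrable envelope, and second moments of $\ln f(x,\xi_1)$ growing at most linearly in $|\ln x|$) that is not among the hypotheses as literally stated. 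The theorem as quoted carries such integrability assumptions implicitly --- they are present in the cited sources --- so this is a caveat you should state explicitly rather than a fatal flaw; but as written, the tightness step does not follow from the displayed hypotheses alone.
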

Sometimes, if monotonicity fails, one can make use of the following result (\cite{VH97}).
\begin{thm}\label{t:pers_single2}
Assume that $$f(x,\xi)=\lambda h(x)^{-\xi}$$ where $g$ is a positive differentiable function such that $x\mapsto xh'(x)/h(x)$ is strictly increasing on $[0,\infty)$. Assume $\E \xi_1, \E \xi_1^2 <\infty$ and $\xi_1$ has a positive density on $(0, L)$ for some $0<L<\infty$. Then there is a positive invariant probability measure $\mu$ and the distribution of $X_t$ converges to $\mu$ whenever $X_0=x>0$.
\end{thm}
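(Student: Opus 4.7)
The strategy is to reduce the multiplicative recursion to an additive one via the logarithmic substitution $Y_t := \ln X_t$, giving
\begin{equation*}
Y_{t+1} = Y_t + \ln\lambda - \xi_{t+1}\,\psi(Y_t), \qquad \psi(y) := \ln h(e^y),
\end{equation*}
and then to apply the standard Meyn--Tweedie machinery (Foster--Lyapunov drift plus minorization) to this additive-noise chain on $\R$. The hypothesis that $x\mapsto x h'(x)/h(x)$ is strictly increasing translates into $\psi'(y) = e^y h'(e^y)/h(e^y)$ being strictly increasing, i.e., $\psi$ is strictly convex; in particular $\psi$ is eventually monotone in both tails, which is what makes the drift analysis clean.

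For existence of an invariant probability measure, I would first verify Feller continuity of the transition kernel $P$ (immediate from continuity of $\psi$ and $\E\xi_1^2<\infty$), then construct a Lyapunov function $V(y) = |y|$ satisfying a drift condition of the form $\E[V(Y_{t+1})\mid Y_t=y] \le V(y) - c$ outside a compact set $[-R,R]$. For large positive $y$, $\psi(y)\to+\infty$, so the drift $\ln\lambda - \E\xi_1\,\psi(y)$ becomes strongly negative; for large negative $y$, $\psi(y)\to \ln h(0)$, and the implicit persistence assumption $\ln\lambda - \E\xi_1\ln h(0)>0$ (inherited from the Ricker-type framework and needed for the result to hold nontrivially) furnishes a positive restoring drift. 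Krylov--Bogolyubov applied to the resulting tight family of Cesaro averages produces an invariant probability measure $\mu$ supported on $(0,\infty)$.

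For uniqueness and convergence, I would exploit the assumption that $\xi_1$ has a positive density on $(0,L)$. For any $y$ with $\psi(y)\neq 0$ the one-step update $\xi\mapsto y + \ln\lambda - \xi\,\psi(y)$ is an affine diffeomorphism, so $P(y,\cdot)$ admits a density on an explicit open interval, and this density is bounded below locally uniformly in $y$. Combined with the drift condition, this yields a small set and Harris recurrence, hence geometric ergodicity by Meyn--Tweedie; uniqueness of $\mu$ and weak convergence of the law of $X_t$ from every $X_0>0$ follow. The main obstacle I anticipate is (i) handling the isolated point where $\psi(y)=0$, which by strict convexity is a single value and can be avoided by placing the small set elsewhere, and (ii) upgrading the pointwise density bound for $P(y,\cdot)$ to a uniform minorization on a genuine small set $C$, which is standard for monotone iterated random maps but requires checking that finitely many iterations of the kernel can reach a fixed open interval from every starting point in $C$ with a density bounded uniformly below---this is where the positivity of the density of $\xi_1$ on an interval, rather than just absolute continuity, is essential.
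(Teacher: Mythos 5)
The paper does not prove this statement at all: it is quoted verbatim (typos included --- the ``$g$'' should be ``$h$'') as a known result of Vellekoop and H\"ogn\"as, so your proposal can only be judged against the cited source, not against an in-paper argument. Your general architecture (pass to $Y_t=\ln X_t$, then Foster--Lyapunov drift plus a minorization from the density of $\xi_1$) is the right family of techniques, and you correctly spot both the convexity of $\psi(y)=\ln h(e^y)$ and the implicit persistence hypothesis $\ln\lambda-\E\xi_1\ln h(0)>0$ that the statement omits.

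However, there is a genuine gap at the heart of the drift argument: the Lyapunov function $V(y)=|y|$ does not satisfy $\E[V(Y_{t+1})\mid Y_t=y]\le V(y)-c$ for large positive $y$. You argue that the signed drift $\ln\lambda-\E\xi_1\,\psi(y)$ of $Y$ becomes strongly negative, but a strongly negative signed drift from large positive $y$ is exactly what breaks the drift of $|Y|$: since $x\mapsto xh'(x)/h(x)$ is increasing and positive, $\psi$ grows superlinearly (in the Ricker case $h(x)=e^x$ one has $\psi(y)=e^y$), so from $Y_t=y\gg 0$ the chain crashes to $Y_{t+1}\approx -\xi_{t+1}\psi(y)$ and $\E|Y_{t+1}|\approx \E\xi_1\,\psi(y)\gg|y|$. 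This overshoot-and-crash behaviour is precisely the analytic difficulty of stochastic Ricker-type models and is why the cited authors do not use a one-step drift for $|\ln X_t|$; a correct argument needs an asymmetric Lyapunov function (for example $V(y)=a\,e^{\gamma y^+}+y^-$, which charges the positive tail exponentially so that the crash to $-\xi\psi(y)$ costs only order $\psi(y)\ll e^{\gamma y}$), or a multi-step drift that tracks the $O(\psi(y)/\ln\lambda)$ recovery time after a crash, or the monotonicity/uniqueness-of-fixed-point structure that the hypothesis on $xh'(x)/h(x)$ is actually there to provide. The minorization half of your plan is fine, but as written the recurrence half fails on the model class the theorem is designed for.
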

We note that the above theorem provides a classification of the stochastic Ricker model if the random variable $\xi_1$ has a density and is supported on $(0,L)$ for some $L>0$. One can also fully classify (\cite{FH02}) the stochastic Ricker model if the random coefficients do not have compact support.
\begin{thm}\label{t:pers_single3}
Consider the stochastic Ricker model $X_{t+1} = X_t \exp(r_{t+1}-a_{t+1}X_t)$ where
\begin{itemize}
  \item $r_1,\dots$ is a sequence of i.i.d random variables such that $\E[r_1]<\infty$ and $r_1$ has positive density on $(-\infty,+\infty)$,
  \item $a_1,\dots$ is a sequence of positive i.i.d random variables independent of $r_t$ such that $\E[a_1]<\infty$ and
  \item there exists $x_c$ such that $\E[\exp(r_1x)]<\infty$ for all $x\in [0, x_c]$.
\end{itemize}
Then if $\E[r_1]<0$, $X_t\to 0$ with probability 1 while if $\E[r_1]>0$ there is a positive invariant measure $\mu$ such that $X_t$ converges weakly to $\mu$.
\end{thm}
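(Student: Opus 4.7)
The extinction part is a quick application of the strong law of large numbers. Passing to logarithms in $X_{t+1}=X_t\exp(r_{t+1}-a_{t+1}X_t)$ and iterating gives
\[
\ln X_t = \ln X_0 + \sum_{s=1}^{t} r_s - \sum_{s=1}^{t} a_s X_{s-1}\leq \ln X_0 + \sum_{s=1}^{t} r_s,
\]
since $a_s X_{s-1}\geq 0$. When $\E[r_1]<0$ the strong law yields $t^{-1}\sum_{s=1}^t r_s\to \E[r_1]<0$ almost surely, so the upper bound tends to $-\infty$ and therefore $X_t\to 0$ with probability one. The assumed finiteness of $\E[r_1]$ is exactly what the SLLN requires.

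For the persistence statement I would work on $(0,\infty)$ and appeal to the standard Meyn--Tweedie theory for $\phi$-irreducible Markov chains. First, for each fixed $x,a>0$ the map $r\mapsto xe^{r-ax}$ is a diffeomorphism of $\R$ onto $(0,\infty)$, so since $r_1$ has a strictly positive density on $\R$ and is independent of $a_1$, the transition kernel $P(x,\cdot)$ has a strictly positive Lebesgue density on $(0,\infty)$ for every $x>0$. This yields Lebesgue-irreducibility, strong aperiodicity, and makes every compact subset of $(0,\infty)$ a small set. Next I would construct a Foster--Lyapunov function by handling the two ends of $(0,\infty)$ separately. For large $x$, independence together with the exponential-moment hypothesis give
\[
\E[X_{t+1}^p\mid X_t=x] = x^p\,\E[e^{pr_1}]\,\E[e^{-pa_1 x}]
\]
for any $p\in(0,x_c]$, and the factor $\E[e^{-pa_1 x}]$ decays faster than any polynomial in $x$, producing a geometric drift of $V_1(x):=x^p$ back toward a compact set. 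For small $x$, a direct computation on the log recursion yields
\[
\E[-\ln X_{t+1}\mid X_t=x] = -\ln x - \E[r_1] + x\,\E[a_1],
\]
so that $V_2(x):=(-\ln x)\vee 0$ has strict negative drift on a neighbourhood of $0$ thanks to $\E[r_1]>0$. Combining these into $V(x)=x^p+(-\ln x)^+$, I would verify the drift inequality $\E[V(X_{t+1})\mid X_t=x]\leq V(x)-\eta$ outside a compact interval $[\delta,M]\subset(0,\infty)$.

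Given irreducibility, aperiodicity, the small-set structure, and the drift condition, the Meyn--Tweedie framework produces positive Harris recurrence, existence of a unique invariant probability measure $\mu$ supported on $(0,\infty)$, and weak (in fact total variation) convergence of the law of $X_t$ to $\mu$ from any $X_0=x>0$. The hardest technical step will be the Lyapunov verification, particularly welding the two branches $V_1$ and $V_2$ together on a transition region around $x=1$ and controlling the upward excursions of $-\ln X_{t+1}$ that are driven by very negative values of $r_{t+1}$; these are only guaranteed to have finite mean, not finite higher moments, so one must handle the $V_2$-drift at the level of first moments rather than exponentially. The moment hypothesis $\E[e^{r_1 x}]<\infty$ on $[0,x_c]$ is precisely what tames the right tail of $r_1$ and controls the $V_1$-branch, while the positive-density hypothesis secures the strong irreducibility needed to convert drift into convergence.
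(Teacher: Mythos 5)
The paper does not actually prove this theorem; it is quoted as a known classification result and attributed to \cite{FH02}, so there is no in-house argument to compare against. Your extinction half is correct and is the standard one: since $a_sX_{s-1}\ge 0$, iterating the log recursion gives $\ln X_t\le \ln X_0+\sum_{s=1}^t r_s$, and the strong law (which requires $\E|r_1|<\infty$, presumably what the hypothesis $\E[r_1]<\infty$ intends) sends the right-hand side to $-\infty$ when $\E r_1<0$. The irreducibility and small-set claims in the persistence half are also fine: $r\mapsto xe^{r-ax}$ is an increasing bijection of $\R$ onto $(0,\infty)$, so the positive density of $r_1$ gives a transition kernel with a positive, jointly continuous density, hence Lebesgue-irreducibility, aperiodicity, and smallness of compact subsets of $(0,\infty)$.

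The gap is in the Foster--Lyapunov step, and it is not where you locate it. The function $V(x)=x^p+(-\ln x)^+$ fails the one-step drift inequality for large $x$, not because of heavy left tails of $r_{t+1}$ but because of the deterministic overshoot: from $X_t=x\gg 1$ the next state is $X_{t+1}=xe^{r_{t+1}-a_{t+1}x}\approx e^{-a_{t+1}x}$, so $\E[(-\ln X_{t+1})^+\mid X_t=x]\ge -\ln x-\E r_1+x\,\E a_1$ grows \emph{linearly} in $x$, while $V(x)=x^p$ there with $p\le x_c$. The hypothesis allows $x_c<1$ (in which case $\E e^{r_1}$ may be infinite and $p<1$ is forced), and then $\E[V(X_{t+1})\mid X_t=x]\gtrsim x\,\E a_1\gg x^p=V(x)$, so the drift inequality fails outside every compact set; your construction only closes when $x_c\ge 1$, where one can take $p=1$ and a large constant multiplying $x$. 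Separately, the assertion that $\E[e^{-pa_1x}]$ decays faster than any polynomial is false in general (for $a_1$ uniform on $(0,1)$ it decays like $1/x$), though that overclaim is harmless since you only need it eventually below $1/\E[e^{pr_1}]$. To repair the argument one must either use a multi-step or averaged drift condition --- from large $x$ the chain drops to a level of order $e^{-a x}$ in one step and then climbs back at rate $\E r_1>0$, so the expected return time from $x$ is of order $x$ and the Lyapunov function must be calibrated to that over several steps --- or, as in \cite{FH02}, establish tightness of the averaged occupation measures directly and combine the Feller-chain existence criterion with Harris recurrence. As written, the verification you defer as a welding technicality is exactly the point where the proof breaks.
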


\subsection{Two species systems}
Suppose we have a two species system. The following result appeared in work by \cite{E89}.
\begin{thm}
Assume the following assumptions are satisfied
\begin{itemize}
\item For each $i=1,2$ there exists a positive invariant measure $\mu_i$ such that the distribution $\mu_i$ such that the distribution of $X_t^i$ converges to $\mu_i$ weakly whenever $X_0^i>0$ and $X_0^j=0$.
\item The mean per capita growth rates $r_i(\bx)$ are continuous functions.
  \item The process $\BX$ is irreducible on $(0,\infty)\times (0,\infty)$,
  \item For any Borel measurable $A\subset \R_+^2$ we have $\PP(\BX_1\in A|\BX_0=\bx_n) \to \PP(\BX_1\in A|\BX_0=\bx)$ whenever $\bx_n\to \bx$.
  \item For any $\bx\in \R_+^2$, $\sup_{t>0}\E[\ln^+ X_t^i~|~\BX_0=\bx]<\infty$ for $i=1,2$.
\end{itemize}
If $r_1(\mu_2)>0$ and $r_2(\mu_1)>0$ then there exists a unique positive invariant measure $\mu$ and the distribution of $\BX_t$ converges to $\mu$ weakly whenever $X_0^1, X_0^2>0$.
\end{thm}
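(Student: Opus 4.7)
The plan is to reduce the statement to the general invasion criterion of Theorem \ref{t:pers1_disc}, and to verify each hypothesis from the five listed assumptions. First I would catalogue the ergodic invariant probability measures supported on the extinction set $\Se_0 = \{(x_1,x_2) \in \R_+^2 : x_1 x_2 = 0\}$. Under the first assumption, the one-dimensional dynamics on each axis converges weakly to the unique measure $\mu_i$, so by a standard Krylov--Bogolyubov argument combined with Feller continuity (fourth assumption), the ergodic measures on $\Se_0$ are exactly $\delta_{(0,0)}$, $\mu_1 \otimes \delta_0$, and $\delta_0 \otimes \mu_2$. Every invariant probability measure on $\Se_0$ is then a convex combination of these three.

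Next I would verify the invasion condition $\max_i r_i(\mu) > 0$ for each ergodic boundary measure. For $\mu_1 \otimes \delta_0$ this is $r_2(\mu_1) > 0$, which is a hypothesis; for $\delta_0 \otimes \mu_2$ it is $r_1(\mu_2) > 0$, also a hypothesis. For $\delta_{(0,0)}$, positivity of at least one $r_i(\delta_{(0,0)})$ follows from the existence of the nontrivial axial measure $\mu_i$: the equilibrium identity $\int r_i \, d\mu_i = 0$ together with monotonicity of $r_i$ in $x_i$ (recall that $f_i$ is decreasing in $x_i$ in the standard Ricker/competition/predator--prey setting) forces $r_i(\delta_{(0,0)}) = \E[\ln f_i(0,0,\xi_1)] > 0$. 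Linearity in $\mu$ then extends the condition to all convex combinations, so \eqref{e:pers} is satisfied on $\Conv(\M)$.

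Having established the invasion-rate condition, the remaining task is to verify the hypotheses of Theorem \ref{t:pers1_disc}: Feller continuity is the fourth assumption verbatim; continuity of $r_i$ is the second; the moment bound $\sup_t \E[\ln^+ X_t^i \mid \BX_0 = \bx] < \infty$ is the fifth and supplies tightness of the Cesàro averages $\pi_T(\bx, \cdot) := \frac{1}{T}\sum_{t=0}^{T-1}\PP(\BX_t \in \cdot \mid \BX_0 = \bx)$ on $\R_+^2$; irreducibility on $(0,\infty)^2$ is the third. Stochastic persistence then yields that every weak limit of $\pi_T(\bx,\cdot)$ is an invariant measure giving zero mass to $\Se_0$, hence is supported on $\Se_+$. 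Irreducibility and Feller continuity upgrade this to the existence of a unique invariant measure $\mu$ on $\Se_+$, with $\BX_t \Rightarrow \mu$ for every interior start.

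The main obstacle is the step that underlies Theorem \ref{t:pers1_disc} itself: showing that the positive invasion rates on every boundary invariant measure imply that the empirical occupation measures assign vanishing mass to neighborhoods of $\Se_0$. The standard device is to find weights $p_1,p_2 > 0$ such that $p_1 r_1(\mu) + p_2 r_2(\mu) > 0$ for every $\mu \in \Conv(\M)$ — existence of such weights follows from a Minimax/convex-separation argument applied to the two extremal constraints $r_2(\mu_1)>0$ and $r_1(\mu_2)>0$ — and then to use $V(\bx) = -p_1 \ln x_1 - p_2 \ln x_2$ as a Lyapunov-type function whose one-step drift near $\Se_0$ is bounded below by a positive constant. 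Combined with the $\ln^+$ moment bound this forces $\pi_T(\bx, \{V > M\}) \to 0$ as $M \to \infty$ uniformly in $T$, which is exactly the tightness away from $\Se_0$ needed to invoke the Feller/irreducibility machinery in the previous paragraph.
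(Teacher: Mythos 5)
This theorem is not proved in the paper at all: it is quoted as a known result of \cite{E89}, whose original argument relies on monotonicity of the (competitive) dynamics, a hypothesis the paper explicitly flags when it says Ellner ``assumed some type of monotonicity.'' Your plan instead routes everything through the general invasion-rate criterion of Theorem \ref{t:pers1_disc}, which is the machinery the paper sketches in Appendix \ref{s:a1} for the $n$-species case. The persistence half of your argument (cataloguing the ergodic boundary measures as $\delta_{(0,0)}$, $\mu_1$, $\mu_2$, checking $r_2(\mu_1)>0$, $r_1(\mu_2)>0$, and running the weighted Lyapunov function $-p_1\ln x_1-p_2\ln x_2$) is sound in outline and consistent with that machinery. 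But there are two genuine gaps.

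First, and most seriously, the sentence ``Irreducibility and Feller continuity upgrade this to the existence of a unique invariant measure $\mu$ on $\Se_+$, with $\BX_t\Rightarrow\mu$'' is exactly the hard part of the theorem, and it is false as a general principle: for a weak-Feller chain, topological irreducibility on $(0,\infty)^2$ gives neither uniqueness of the invariant measure nor weak convergence of the one-step laws (at best one gets convergence of Ces\`aro averages to \emph{some} invariant measure). The convergence conclusion of Theorem \ref{t:pers1_disc} is obtained under the much stronger minorization/$\phi$-irreducibility and drift hypotheses (A3)--(A4) spelled out in the appendix, none of which appear among the five assumptions here; the $\ln^+$ moment bound only gives tightness at infinity. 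This is precisely the place where Ellner's monotonicity hypothesis carries the proof, and your sketch supplies no substitute for it. Second, your derivation of $\max_i r_i(\delta_{(0,0)})>0$ invokes ``monotonicity of $r_i$ in $x_i$ \ldots{} in the standard Ricker/competition/predator--prey setting,'' which is not among the stated hypotheses; if you want this step you should instead argue directly that $r_i(\delta_0)<0$ would force a positive probability of extinction for the one-species dynamics started near $0$, contradicting the assumed weak convergence to the positive measure $\mu_i$ from every positive initial condition on the axis (and even then the borderline case $r_i(\delta_0)=0$ needs separate treatment). As written, the proposal proves stochastic persistence but not the uniqueness and weak convergence that constitute the actual conclusion.
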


\subsection{General criteria for coexistence}
Assume we have a general $n$ species system modeled by
\begin{equation}\label{e:SDE_gen}
X^i_{t+1} = X_t^i f_i(\BX_t, \xi_{t+1}), i=1,\dots,n.
\end{equation}
The subset $\Se\subset \R_+^n$ will denote the state space of the dynamics. It will either be a compact subset of $\R_+^n$ or all of $\R_+^n$.
The coexistence set is the subset $\Se_+=\{\bx\in \Se~|~x_i>0, i=1,\dots n\}$ of the state space where no species is extinct.
We will make the following assumptions:
\begin{itemize}
\item[\textbf{(A1)}] $\xi_1,\dots,\xi_n,\dots$ is a sequence of i.i.d. random variables taking values in a Polish space $E$.
\item[\textbf{(A2)}] For each $i$ the fitness function $f_i(\bx,\xi)$ is continuous in $\bx$, measurable in $(\bx,\xi)$ and strictly positive.
\item[\textbf{(A3)}] If the dynamics is unbounded: There exists a function $V:\Se_+\to\R_+$ and constants $\gamma_1,\gamma_3, C>0$ and $\rho\in(0,1)$ such that for all $\bx\in \Se_+$ we have
\begin{itemize}
\item[(i)] $V(\bx)\geq |\bx|^{\gamma_1}+1$,
\item[(ii)] $\E \left[V(\bx^\top f(\bx,\xi_1))\ell(\bx,\xi_1)\right]\leq \rho V(\bx)+C$,
where $$\ell(\bx,\xi):=\left(\max_{i=1}^n \left\{\max \left\{f_i(\bx,\xi), \frac{1}{f_i(\bx,\xi)}\right\}\right\}\right)^{\gamma_3}.$$
\end{itemize}
\item [\textbf{(A4)}] If the dynamics is bounded: There exists a constant $\gamma_3>0$ such that for all $\bx\in \Se_+$ we have
\[
\E \left[\ell(\bx,\xi_1)\right]<\infty.
\]
\end{itemize}

\begin{rmk}
In particular, if one supposes the conditions
\begin{itemize}
  \item [1)] There is a compact subset $K\subset \R_+^n\times \R^{\kappa_0}$ such that all solutions $\BX_t$ satisfy $\BX_t\in K$ for $t\in \Z_+$ sufficiently large;
  \item [2)] For all $i=1,2,\dots,n $, $$\sup_{\bx,\xi}|\ln f_i(\bx,\xi)|<\infty;$$
\end{itemize}
then assumption (A4) is satisfied.
\end{rmk}

Assumptions (A1) and (A2) ensure that $\BX_t$ is a Feller process that lives on $\Se_{+}$, i.e. $\BX_t\in\Se_+, t\in \Z_+$ whenever $\BX_0\in\Se_{+}$. One has to make the extra assumptions (A3) or (A4) in order to ensure the process does not blow up or fluctuate too abruptly between $0$ and $\infty$. We note that most ecological models will satisfy these assumptions. For more details see the work by \cite{BS19, CHN19}.

We will follow the notation, methods and results developed by \cite{MT, B18}. A point $\by\in \R_+^n$ is said to be \textit{accessible} from $\bx\in\Se_+$ if for every neighborhood $U$ of $y$ there exists $t\geq 0$ such that $P_t(\bx,U)>0$. Define $$\Gamma_\bx:=\{\by\in\Se_+ ~|~\by~\text{is accessible from}~\bx\}$$ and for $A\subset \R_+^n$
\[
\Gamma_A=\bigcap_{\bx\in A}\Gamma_\bx.
\]
Note that $\Gamma_A$ is the set of points which are accessible from every point of $A$. We say a set $A$ is \textit{accessible} if for all $\bx\in \R_+^{n,\circ}$
\[
\Gamma_{\bx}\cap A\neq \emptyset.
\]
Suppose there exist $\bx^*\in \Gamma_{\Se_+}$, a neighborhood $U$ of $\bx^*$, and a non-zero measure $\phi$ on $\Se_+$, such that for any $\bx\in U$
there is $m^*\in \mathbb{Z}_+$ such that
\[
\PP_ \bx(\BX_{m^*}\in \cdot)\geq \phi(\cdot).
\]
We will assume that such conditions are satisfied in our models. In many cases it is not hard to check that these conditions hold - see \cite{E89, CHN19}
Suppose the dynamics happens in either a compact subset of $\R_+^n$ or in $\R_+^n$. We denote the state space of the dynamics by $\Se$. We define the extinction set, where at least one species is extinct, by
\[
\Se_0:=\{\bx\in\Se~:~\min_i x_i=0\}.
\]
For any $\eta>0$ let
\[
\Se_\eta:=\{\bx\in\Se~:~\min_i x_i\leq \eta\}
\]
be the subset of $\Se$ where at least one species is within $\eta$ of extinction. Denote by $\M$ the set of all ergodic invariant probability measures supported on $\Se_0$ and by $\Conv(\M)$ the set of all invariant probability measures supported on $\Se_0$.
We say \eqref{e:SDE_gen} is \textit{stochastically persistent in probability} (\cite{C82}) if for all $\eps>0$ there exists $\eta(\eps)=\eta>0$ such that for all $\bx\in \Se_+$
\[
\liminf_{t\to\infty} \PP_{\bx}\{\BX_t\notin S_\eta\}>1-\eps.
\]
For any $t\in \N$ define the \textit{normalized occupation measure}
\[
\Pi_t(B):=\frac{1}{t}\sum_{s=1}^t\delta_{\BX(s)}(B)
\]
where $\delta_{\BX(s)}$ is the Dirac measure at $\BX(s)$ and $B$ is any Borel subset of $\Se$. Note that $\Pi_t$ is a random probability measure and $\Pi_t(B)$ tells us the proportion of time the system spends in $B$ up to time $t$.
Denote the (random) set of weak$^*$-limit points of $(\Pi_t)_{t\in \N}$ by $\U=\U(\omega)$. We say \eqref{e:SDE_gen} is \textit{almost surely stochastically persistent} (\cite{S12, BS19}) if for all $\eps>0$ there exists $\eta(\eps)=\eta>0$ such that for all $\bx\in \Se_+$
\[
\liminf_{t\to\infty} \Pi_t(\Se\setminus \Se_\eta)>1-\eps, ~\BX(0)=\bx.
\]

The following general theorem (whose full proof will appear in \cite{CHN19}) gives us persistence for a general $n$ species system.
\begin{thm}
Suppose that for all $\mu\in\Conv(\M)$ we have
\begin{equation}\label{e:pers_gen}
\max_{i} r_i(\mu)>0.
\end{equation}
Then the system is almost surely stochastically persistent and stochastically persistent in probability. Under additional irreducibility conditions, there exists a unique invariant probability measure $\pi$ on $\Se_+$ and as $t\to\infty$ the distribution of $\BX_t$ converges in total variation to $\pi$ whenever $\BX(0)=\bx\in \Se_+$. Furthermore, if $w:\Se_+\to\R$ is continuous and either bounded or satisfies
\[
w(\bx)\leq \E \left[V(\bx^\top f(\bx,\xi_1))\ell(\bx,\xi_1)\right], \bx\in \Se_+
\]
then $$\E w(\BX_t) \to \int_{\Se_+} w(\bx)\,\pi(d\bx).$$
\end{thm}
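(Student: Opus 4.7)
My plan is to follow the Benaïm–Schreiber framework as refined in Hening–Nguyen and \cite{CHN19}, using a weighted-logarithmic Lyapunov function to push the process away from the extinction set, and then extracting uniqueness and total-variation convergence from the Meyn–Tweedie machinery.

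The first step is to convert the pointwise condition $\max_i r_i(\mu)>0$ on $\Conv(\M)$ into the existence of uniform positive weights. Because $\Conv(\M)$ is convex and compact in the weak$^*$ topology (by (A3)/(A4) and tightness on $\Se_0$) and $\mu\mapsto r_i(\mu)$ is affine and upper semicontinuous, a minimax argument in the style of Schreiber (2011) produces constants $p_1,\dots,p_n>0$ and $\eta>0$ with
\[
\sum_{i=1}^n p_i r_i(\mu)\;\geq\;\eta\qquad\text{for every } \mu\in\Conv(\M).
\]
This is the central combinatorial input; extracting it is the main obstacle, since the condition only holds in the limit $\bx\to \Se_0$ and must be turned into a drift inequality for trajectories.

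Next, I would introduce the candidate Lyapunov function $V_\bp(\bx):=-\sum_i p_i\ln x_i$ on $\Se_+$ (with a cutoff near $\infty$ coming from (A3)) and study its one-step increments. Using $\ln X^i_{t+1}-\ln X^i_t=\ln f_i(\BX_t,\xi_{t+1})$, one gets $\E_\bx[V_\bp(\BX_1)]-V_\bp(\bx)=-\sum_i p_i r_i(\bx)$. By continuity of $r_i$ and the compactness of boundary occupation measures, the weighted inequality above propagates to a neighbourhood $\Se_\eta$ of $\Se_0$: there exist $\eta>0$ and $T\in\N$ such that
\[
\E_\bx\!\left[\,V_\bp(\BX_T)\,\right]\;\leq\;V_\bp(\bx)-T\eta/2\qquad\text{whenever }\bx\in\Se_\eta\cap\{V_\bp\leq M\}.
\]
Combined with the supermartingale-type estimate from (A3) (or compactness under (A4)) this yields both almost-sure stochastic persistence, via a Birkhoff-type argument on the occupation measures $\Pi_t$ applied to the discontinuous payoff $\sum_i p_i\ln x_i$, and stochastic persistence in probability, via a Foster–Lyapunov inequality of Meyn–Tweedie type.

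For uniqueness and total-variation convergence, I would combine the drift condition just obtained with the minorisation hypothesis: at $\bx^\ast\in\Gamma_{\Se_+}$ with neighbourhood $U$ and nontrivial measure $\phi$, $U$ is petite and the process is $\phi$-irreducible and aperiodic on $\Se_+$. Theorem 16.0.1 of Meyn–Tweedie then gives a unique invariant probability measure $\pi$ on $\Se_+$ and $V$-geometric ergodicity, which implies $\|P^t(\bx,\cdot)-\pi\|_{TV}\to 0$ for $\bx\in\Se_+$. The statement $\E w(\BX_t)\to\int w\,d\pi$ for bounded continuous $w$ is then immediate from total-variation convergence; for $w$ controlled by $\E[V(\bx^\top f(\bx,\xi_1))\ell(\bx,\xi_1)]$, assumption (A3)(ii) provides the geometric drift $\E_\bx V(\BX_1)\leq \rho V(\bx)+C$, which furnishes uniform integrability of $w(\BX_t)$ under $\PP_\bx$ and upgrades weak convergence to convergence of expectations. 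The heart of the argument is the first step: all later work is a fairly standard application of the Meyn–Tweedie toolkit, but the passage from $\max_i r_i(\mu)>0$ to the drift inequality on $V_\bp$ requires delicate use of compactness on $\Conv(\M)$ together with an approximation of boundary behaviour by nearby interior trajectories.
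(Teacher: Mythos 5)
Your proposal follows essentially the same route as the paper's sketch: the min--max argument producing uniform weights $p_i>0$ with $\sum_i p_i r_i(\mu)\geq\eta$ on $\Conv(\M)$, a Lyapunov function penalizing the boundary built from $-\sum_i p_i\ln x_i$ together with the moment function $V$ from (A3), a multi-step drift inequality near $\Se_0$ obtained by compactness of boundary occupation measures, and the Meyn--Tweedie drift-plus-minorisation machinery for uniqueness, total-variation convergence, and the uniform integrability needed for $\E w(\BX_t)\to\int w\,d\pi$. The one technical divergence is that the paper exponentiates, working with $U^\theta(\bx)=\bigl(V(\bx)\prod_i x_i^{-p_i}\bigr)^\theta$ precisely to obtain a multiplicative Foster--Lyapunov bound and hence a geometric rate, whereas your additive drift on $-\sum_i p_i\ln x_i$ delivers Harris recurrence and plain total-variation convergence --- sufficient for the statement as written, though you should not claim ``$V$-geometric ergodicity'' from that additive inequality alone.
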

\begin{proof}[Sketch of proof]

First, using the Markov property and Assumption A3) one can show that for all $t\in \Z_+$ and $\bx\in \Se$
$$\E_\bx(V(\BX_t)\leq \rho^t V(\bx)+\dfrac{C}{1-\rho},$$
and
$$
\begin{aligned}
\E_\bx \ell(\BX_t, \xi_{t+1})
\leq&\rho^{t+1} V(\bx)+\dfrac{C}{1-\rho}.
\end{aligned}
$$

As a next step, one can show that if a continuous function $w$ satisfies $\lim_{\bx\to\infty} \frac{w(\bx)}{\E [V(\bx^T \mathbf{f}(\bx,\xi_t))\ell(\bx,\xi_t)]}=0$
then $w$ is $\mu$-integrable for any invariant probability measure $\mu$ of $\BX$.
Moreover, the strong law of large numbers for martingales will show that
 \begin{equation}\label{e5-B3}
 \lim_{T\to\infty}
\frac1T \sum_0^T\left(\log f_i(\BX_{t+1})-P \log f_i(\BX_t)\right)=0, \,\text{when}~ \BX(0)=\bx
\end{equation}
where $P$ is the transition operator of $\BX_t$. This combined with arguments by \cite{BS19} implies that if $\mu(\Se_+)=1$ then
$r_i(\mu)=0$ for any $i\in I$.

The next step is to show that there exist $M, C_2, \gamma_4>0,\rho_2\in(0,1)$ such that
$$
\E_\bx\left[ V(\BX_1)\prod_{i=1}^n X_i^{p_i}(1)\right]\leq \left(\1_{\{|\bx|<M\}}(C_2-\rho_2)+\rho_2\right) V(\bx)\prod_{i=1}^nx_i^{p_i}, ~\bx\in \Se
$$
for any $\bp=(p_1,\dots,p_n)\in\R^n$ satisfying
\begin{equation}\label{e:p}
|\bp|_1:=\sum|p_i|\leq\gamma_4.
\end{equation}
It is shown in \cite{SBA11} by the min-max principle that Assumption \eqref{e:pers_gen} is equivalent to the existence of $\mathbf p>0$ such that
\begin{equation}\label{e.p}
\min\limits_{\mu\in\M}\left\{\sum_{i}p_i r_i(\mu)\right\}:=2r^*>0.
\end{equation}

One can then prove, using arguments by \cite{HN18}, that there exists an integer $T^*>0$ such that, for any $T>T^*$, $\bx\in\Se_0, |\bx|<M$ one has
\begin{equation}\label{lm3.1-e0}
\sum_{t=0}^T\E_\bx\left(\ln V(\BX_{t+1})-\ln V(\BX_t)-\sum p_i\ln f_i(\BX_t, \xi_{t+1})\right)\leq-r^*(T+1).
\end{equation}
Define $U:\Se_+\to\R_+$ by $$U(\bx)=V(\bx)\prod_{i=1}^nx_{i}^{-p_i}$$ with $\bp$ and $r^*$ satisfying \eqref{e.p}. Let $n^*\in\N$ be such that
\begin{equation}\label{e:n*}
\rho_2^{1-n^*}>C_2.
\end{equation}
Using the previous results, as well as the analysis developed by \cite{B18, HN18} one can prove the following:
There exist numbers $\theta\in\left(0,\frac{\gamma_4}2\right)$, $K_\theta>0$, such that for any $T\in[T^*,n^*T^*]\cap \Z$ and $\bx\in\Se_+, \|\bx\|\leq M$,
$$\E_\bx U^\theta(\BX_T)\leq U^\theta(\bx)\exp\left(-\frac{1}{2}\theta r^*T\right) +K_\theta.$$

One can show that the process $(\rho^{-t}_2 U(\BX(t)))_{t\geq 0}$ is a supermartingale and use this in conjunction with the Markov property to show that there exist numbers $\kappa=\kappa(\theta,T^*)\in(0,1)$ and $\tilde K=\tilde K(\theta,T^*)>0$ such that
\begin{equation}\label{e:lya}
\E_\bx  U^\theta(\BX_{n^*T^*})\leq \kappa U^\theta(\bx)+\tilde K\,\text{ for all }\, \bx\in\Se_+.
\end{equation}
If the Markov chain $\BX_t$ is irreducible and aperiodic on $\Se_+$,
and a compact set is petite, then one can use the well known results by \cite{MT} in conjunction with the Lyapunov condition \eqref{e:lya} to conclude that there is $c_4>1$ such that for all $\bx\in \Se_+$
$$ c_4^t\|P_t(\bx,\cdot)-\pi(\cdot)\|_{TV}\to 0\text{ as } t\to\infty,$$
where $\|\cdot\|_{TV}$ is the total variation distance. In particular, this implies that the distribution of $\BX_t$ converges weakly to $\pi$ as $t\to\infty$.
\end{proof}

\section{Two species systems}\label{s:a2d} In general, one needs stronger assumptions for extinction. We will assume for simplicity $n\leq 2$, so that we have one or two species.
We need one more condition for extinction. This condition makes sure that the martingale part of $\BX_t$ is bounded and that the family of occupation measures $(\Pi_t)_{t\in \Z_+}$ is tight.

\textbf{A5)}
 There exists a function $\phi:\Se\to(0,\infty)$ and constants $C, \delta_\phi>0$ such that for all $\bx\in\Se$
\[
\E_\bx V(\BX_1)\leq V(\bx)-\phi(\bx)+C
\]
and
\[
\E_{\bx}\left(V(\BX_1)-\E_\bx V(\BX_1)\right)^2+\E\left|\log f(\bx,\xi_1)-\E\log f(\bx,\xi_1)\right|^2\leq \delta_\phi \phi(\bx).
\]

Define $\Se^j:=\{\bx\in\Se~|~x_i=0, i\neq j\}$ to be the subspace supported by the species $j$. If we restrict the process to $\Se^j$ then the extinction set is given by $\Se_0:=\{0\}$ and the persistence set by $\Se_+^j:=\Se^j\setminus \{0\}$. Let $\M^j:=\{\mu\in\M~|~\mu(\Se^j)=1\}, \M^{j,+}:=\{\mu\in\M~|~\mu(\Se_+^j)=1\}$ be the sets of ergodic probability measures
on $\Se^j$ and $\Se^{j}_+$. We also assume that the subspaces $\Se_0^1, \Se_0^2, \Se_+$ are accessible, i.e., we can get close to them from any starting point $\bx\in\Se_+$ with positive probability, and each subspace supports at most one ergodic probability measure. Consider two species interacting via the general system
\begin{equation}\label{e:2d_disc}
\begin{aligned}
X^1_{t+1}&=X^1_t f_1(X^1_t, X^2_t,\xi_{t+1}),\\
X^2_{t+1}&=X^2_t f_2(X^1_t, X^2_t,\xi_{t+1}),
\end{aligned}
\end{equation}
The results by \cite{CE89, E89} assumed some type of monotonicity and only looked at competitive behavior. They can be generalized  as follows (see \cite{CHN19} for proofs). We first look at the Dirac delta measure $\delta_0$ at the origin $(0,0)$
\[
r_i(\delta_0) = \E[\ln f_i(0,\xi_1)], i=1,2.
\]
If $r_i(\delta_0)>0$ then species $i$ survives on its own and converges to a unique invariant probability measure $\mu_i$ supported on $\Se_+^i := \{\bx\in\Se~|~x_i\neq 0, x_j=0, i\neq j\}$. Remember that the (random) set of weak$^*$-limit points of the family of occupation measures $(\Pi_t)_{t\in \N}$ is denoted by $\U=\U(\omega)$. Thus, if we say that $\U(\omega)=\{\mu_1\}$, this means that for the realization $\omega$ we have $\Pi_t \to \mu_1$ weakly.
\begin{enumerate}[label=(\roman*)]
  \item Suppose $r_1(\delta_0)>0, r_2(\delta_0)>0$. The realized per-capita growth rates can be computed via
  \[
  r_i(\mu_j)=\int_{(0,\infty)}\E[\ln f_i(x,\xi_1)]\mu_j(dx).
  \]
  \begin{itemize}
    \item If $r_1(\mu_2)>0$ and $r_2(\mu_1)>0$ we have coexistence and convergence of the distribution of $\BX_t$ to the unique invariant probability measure $\pi$ on $\Se_+$.
    \item If $r_1(\mu_2)>0$ and $r_2(\mu_1)<0$ we have the persistence of $X^1$ and extinction of $X^2$. In other words, for any $\bx\in\Se_+$
    \[
    \PP_\bx\left\{\U(\omega)=\{\mu_1\} ~\text{and}~\lim_{t\to\infty}\frac{\ln X^2_t}{t}=r_2(\mu_1)<0, \right\}=1.
    \]
    \item If $r_1(\mu_2)<0$ and $r_2(\mu_1)>0$ we have the persistence of $X^2$ and extinction of $X^1$. In other words, for any $\bx\in\Se_+$
    \[
    \PP_\bx\left\{\U(\omega)=\{\mu_2\} ~\text{and}~\lim_{t\to\infty}\frac{\ln X^1_t}{t}=r_1(\mu_2)<0, \right\}=1.
    \]
    \item If $r_1(\mu_2)<0$ and $r_2(\mu_1)<0$ we have that for any $\bx\in\Se_+$
    \[
    p_{\bx,j}:=\PP_\bx\left\{\U(\omega)=\{\mu_j\} ~\text{and}~\lim_{t\to\infty}\frac{\ln X^i_t}{t}=r_i(\mu_j)<0, i\neq j \right\}
    \]
    and
    \[
    p_{\bx,1}+ p_{\bx,2}=1.
    \]
  \end{itemize}
  \item Suppose $r_1(\delta_0)>0, r_2(\delta_0)<0$. Then specis $1$ survives on its own and converges to its unique invariant probability measure $\mu_1$ on $\Se^1_+$.
  \begin{itemize}
    \item If $r_2(\mu_1)>0$ we have the persistence of both species and convergence of the distribution of $\BX_t$ to the unique invariant probability measure $\pi$ on $\Se_+$.
    \item If $r_2(\mu_1)<0$ we have the persistence of $X^1$ and the extinction of $X^2$. In other words, for any $\bx\in\Se_+$
    \[
    \PP_\bx\left\{\U(\omega)=\{\mu_1\} ~\text{and}~\lim_{t\to\infty}\frac{\ln X^2_t}{t}=r_2(\mu_1)<0, \right\}=1.
    \]
  \end{itemize}
\item Suppose $r_1(\delta_0)<0, r_2(\delta_0)<0$. Then both species go extinct with probability one. In other words, for any $\bx\in\Se_+$
\[
\PP_\bx\left\{\lim_{t\to\infty}\frac{\ln X^i_t}{t}=r_i(\delta_0)<0 \right\}, i=1,2.
\]
\end{enumerate}
We note that our results are significantly more general than those from \cite{E89}. In \cite{E89} the author only gives conditions for coexistence, and does not treat the possibility of the extinction of one or both species.
\end{document}